\newtheorem{theoreme}{Theorem}[section]
\newtheorem{lemma}[theoreme]{Lemma}
\newtheorem{proposition}[theoreme]{Proposition}
\newtheorem{definition}[theoreme]{Definition}
\newtheorem{remark}[theoreme]{Remark}
\newtheorem{corollary}[theoreme]{Corollary}
\newtheorem{example}[theoreme]{Example}
\newtheorem{bio}[theoreme]{Biological Interpretation}
\def\R{\mathbb R}
\def\N{\mathbb N}
\def\Z{\mathbb Z}
\newcommand{\begitem}{\begin{itemize}}
\newcommand{\finit}{\end{itemize}}
\newcommand\restr[2]{{
  \left.\kern-\nulldelimiterspace 
  #1 
  \vphantom{\big|} 
  \right|_{#2} 
  }}
\newcommand{\bone}{\mathbf{1}}
\newcommand{\diag}{\mathrm{diag}}
\newcommand{\bS}{S}
\newcommand{\bzeta}{\boldsymbol{\zeta}}
\newcommand{\PPinvphi}{\Inv_{\mathbb{Q}}(\Phi)}
\newcommand{\PPinvtheta}{\Inv_{\mathbb{Q}}(\Theta)}
\newcommand{\PPinv}{\Inv(\theta)}
\newcommand{\PP}{\mathcal{P}}
\newcommand{\E}{\mathbb{E}}
\newcommand{\Prob}{\mathbb{P}}
\newcommand{\ProbQ}{\mathbb{Q}}
\newcommand{\eps}{\varepsilon}
\newcommand{\rr}{r}
\DeclareMathOperator{\interior}{int}
\DeclareMathOperator{\Inv}{Inv}
\begin{document}

\title[Stochastic persistence of interacting structured populations]{Persistence in fluctuating environments for interacting structured populations }

\author[G. Roth]{Gregory Roth}
\email{groth@ucdavis.edu}
\author[S.J. Schreiber]{Sebastian J. Schreiber}
\email{sschreiber@ucdavis.edu}
\address{Department of Evolution and Ecology, One Shields Avenue, University of California, Davis, CA 95616 USA}

\begin{abstract}
Individuals within any species exhibit differences in size, developmental state, or spatial location. These differences coupled with environmental fluctuations in demographic rates can have subtle effects on population persistence and species coexistence. To understand these effects, we provide a general theory for coexistence of structured, interacting species living in a stochastic environment.  The theory is applicable to nonlinear, multi species matrix models with stochastically varying parameters. The theory relies on long-term growth rates of species corresponding to the  dominant Lyapunov exponents of random   matrix products. Our coexistence criterion requires that a convex combination of these long-term growth rates is positive with probability one whenever one or more species are at low density.  When this condition holds, the community is stochastically persistent: the fraction of time that a species density goes below $\delta>0$ approaches zero as $\delta$ approaches zero.  Applications to predator-prey interactions in an autocorrelated environment, a stochastic LPA model, and spatial lottery models are provided. These applications demonstrate that positive autocorrelations in temporal fluctuations can disrupt predator-prey coexistence, fluctuations in log-fecundity can facilitate persistence in structured populations, and long-lived, relatively sedentary competing populations are likely to coexist in spatially and temporally heterogenous environments. 

\end{abstract}

\maketitle
\section{Introduction}
All populations are structured and experience environmental fluctuations. Population structure may arise to individual differences in age, size, and spatial location~\citep{metz-diekmann-86,caswell-01,holyoak-etal-05}. Temporal fluctuations in environmental factors such light, precipitation, and temperature occur in all natural marine, freshwater and terrestrial systems. Since these environmental factors can influence survival, growth, and reproduction, environmental fluctuations result in demographic fluctuations that may influence species persistence and the composition of ecological communities~\citep{tuljapurkar-90,chesson-00b,kuang-chesson-09}. Here we present, for the first time, a general approach to studying coexistence of structured populations in fluctuating environments.

For species interacting in an ecosystem, a fundamental question is what are the minimal conditions to ensure the long-term persistence of all species. Historically, theoretical ecologists characterize persistence by the existence of an asymptotic equilibrium in which the proportion of each population is strictly positive~\citep{may-75,roughgarden-79}. More recently, coexistence was equated  with the existence of an attractor bounded away from extinction~\citep{hastings-88}, a definition that ensures populations will persist despite small, random perturbations of the populations~\citep{jtb-06,dcds-07}. However, ``environmental perturbations are often vigourous shake-ups, rather than gentle stirrings''~\citep{jansen-sigmund-98}. To account for large, but rare, perturbations,  the concept of permanence, or uniform persistence, was introduced in late 1970s~\citep{freedman-waltman-77,schuster-etal-79}. Uniform persistence requires that asymptotically species densities remain uniformly bounded away from extinction. In addition, permanence requires that the system is dissipative i.e. asymptotically species densities remain uniformly bounded from above.  Various mathematical approaches exist for verifying permanence~\citep{hutson-schmitt-92,smith-thieme-11} including topological characterizations with respect to chain recurrence~\citep{butler-waltman-86,hofbauer-so-89}, average Lyapunov functions~\citep{hofbauer-81,hutson-84,garay-hofbauer-03}, and measure theoretic approaches~\citep{jde-00,jde-10}. The latter two approaches involve the long-term, per-capita growth rates of species when rare. For discrete-time, unstructured models of the form $x_{t+1}^i= f_i(x_t)x_t^i$ where $x_t=(x_t^1,\dots,x_t^n)$ is the vector of population densities at time $t$, the long-term growth rate of species $i$ with initial community state $x_0=x$ equals \[
r_i (x) = \limsup_{t\to\infty}\frac{1}{t}\sum_{s=0}^{t-1} \log f_i (x_s).\] 
\citet{garay-hofbauer-03} showed, under appropriate assumptions, that the system is permanent provided there exist positive weights $p_1,\dots,p_n$ associated with each species such that $\sum_i p_i r_i (x)>0$ for any initial condition $x$ with one or more missing species (i.e. $\prod_i x^i=0$). Intuitively, the community persists if on average the community increases when rare. 

The permanence criterion for unstructured populations also extends to structured populations. However, in this case, the long-term growth rate is more complicated. Consider, for example, when both time and the structuring variables are discrete; the population dynamics are given by  $x_{t+1}^i=A_i(x_t)x_t^i$ where $x_t^i$ is a vector corresponding to the densities of the stages of species $i$, $x_t=(x_t^1,\dots,x_t^n)$, and $A_i(x)$ are non-negative matrices. Then the long term growth rate $r_i(x)$ of species $i$ corresponds to the dominant Lyapunov exponent associated with the matrices $A_i(x)$ along the population trajectory: \[
r_i(x) = \limsup_{t\to\infty} \frac{1}{t} \log \| A_i(x_{t-1})\dots A_i(x_0)\|.
\] 
At the extinction state $x=0$, the long-term growth rate $r_i(0)$ simply corresponds to the $\log$  of the  largest eigenvalue of $A_i(0)$. For structured single-species models, \citet{cushing-98,kon-etal-04} proved that $r_1(0)>0$ implies permanence. For structured, continuous-time, multiple species models, $r_i(x)$ can be defined in an analogous manner to the discrete-time case using the fundamental matrix of the variational equation. \citet{jde-10} showed, under appropriate assumptions, that $\sum_i p_i r_i(x)>0$ for all $x$ in the extinction set is sufficient for permanence. For discrete-time structured models, however, there exists no general proof of this fact (see, however, \citet{salceanu-smith-09,salceanu-smith-09lyapunov,salceanu-smith-10}). When both time and the structuring variables are continuous, the models become infinite dimensional and may be formulated as partial differential equations or functional differential equations. Much work has been done is this direction~\citep{hutson-moran-87,zhao-hutson-94,thieme-09,thieme-11,magal-mccluskey-webb-10,xu-zhao-03,jin-zhao-09}. In particular, for reaction-diffusion equations, the long-term growth rates correspond to growth rates of semi-groups of linear operators and, $\sum_i p_i r_i(x)>0$ for all $x$ in the extinction set also ensures permanence for these models~\citep{hutson-moran-87,zhao-hutson-94,cantrell-cosner-03}.  
    
Environmental stochasticity can be a potent force for disrupting population persistence yet maintaining biodiversity. Classical stochastic demography theory for stochastic matrix models $x_{t+1}=A(t)x_t$ shows that temporally uncorrelated fluctuations in the projection matrices $A(t)$ reduce the long-term growth rates of populations when rare~\citep{tuljapurkar-90,boyce-etal-06}.  Hence, increases in the magnitude of these uncorrelated fluctuations can shift  populations from persisting to asymptotic extinction. Under suitable conditions, the long-term growth rate for these models is given by the limit $r=\lim_{t\to\infty} \frac{1}{t} \ln \| A(t)\dots A(1)\|$ with probability one. When $r>0$, the population grows exponentially with probability one for these density-independent models. When $r<0$, the population declines exponentially with probability one. \citet{hardin-etal-88a} and \citet{tpb-09} proved that these conclusions extend to models with compensating density-dependence. However, instead of growing without bound when $r>0$,  the populations converge to a positive stationary distribution  with probability one. These results, however, do not apply to models with over-compensating density-dependence or, more generally, non-monotonic responses of demography to density. 
 
Environmental stochasticity can promote diversity through the storage effect~\citep{chesson-warner-81,chesson-82} in which asynchronous fluctuations of favorable conditions can allow long-lived species competing for space to coexist.  The theory for coexistence in stochastic environments has focused on stochastic difference equations  of the form $x_{t+1}^i=x_t^i f_i (\xi_{t+1},x_t)$ where $\xi_1,\xi_2,\dots$ is a sequence of independent, identically distributed random variables (for a review see ~\citep{jdea-11}). \citet{jmb-11} prove that coexistence, in a suitable sense, occurs provided that $\sum_i p_i r_i(x)>0$ with probability for all $x$ in the extinction set. Similar to the deterministic case, the long-term growth rate of species $i$ equals $r_i (x) = \limsup_{t\to\infty}\frac{1}{t}\sum_{s=0}^{t-1} \log f_i (x_s)$.  Here, stochastic coexistence implies that each species spends an arbitrarily small fraction of time near arbitrarily small densities.

Here, we develop persistence theory for models simultaneously accounting for species interactions, population structure, and environmental fluctuations.  Our main result implies that the ``community increases when rare'' persistence criterion also applies to these models. Our model, assumptions, and a definition of stochastic persistence are presented in Section 2. Except for a compactness assumption, our assumptions are quite minimal allowing for overcompensating density dependence and correlated environmental fluctuations. Long-term growth rates for these models and our main theorem are stated in Section 3. We apply our results to stochastic models of predator-prey interactions, stage-structured beetle dynamics, and competition in spatial heterogenous environments. The stochastic models for predator-prey interactions are presented in Section 4 and examine to what extent ``colored'' environmental fluctuations facilitate predator-prey coexistence. In Section 5, we develop precise criteria for persistence and exclusion for structured single species models and apply these results to the classic stochastic model of larvae-pupae-adult dynamics of flour beetles~\citep{costantino-etal-95,dennis-etal-95,costantino-etal-97,henson-cushing-97} and metapopulation dynamics~\citep{harrison-quinn-89,gyllenberg-etal-96,metz-gyllenberg-01,roy-etal-05,hastings-botsford-06,prsb-10}. We show, contrary to initial expectations, that multiplicative noise with logarithmic means of zero can facilitate persistence. In Section 6, we examine  spatial-explicit lottery models~\citep{chesson-85,chesson-00a,chesson-00b} to illustrate how spatial and temporal heterogeneity, collectively, mediate coexistence for transitive and intransitive competitive communities. Proofs of most results are presented in Section \ref{sec-proof}.

\section{Model and assumptions}
 
We study the dynamics of $m$ interacting populations in a random environment. Each individual in population $i$ can be in one of $n_i$ individual states such as their age, size, or location.  Let $X_t^i = (X_t^{i1}, \dots, X_t^{in_i})$ denote the row vector of populations abundances of individuals in different states for population $i$ at time $t\in \N$. $X_t^i$ lies in the non-negative cone $\R^{n_i}_+$. The \emph{population state} is the row vector  $X_t=(X_t^1, \dots,  X_t^m)$ that lies in the non-negative cone $\R^n_+$ where $n=\sum_{i=1}^mn_i$. 
To account for environment fluctuations, we consider a sequence of random variables, $\xi_1,\xi_2,\dots,\xi_t,\dots$ where $\xi_t$ represents the state of the environment at time $t$. 

To define the population dynamics, we consider projection matrices for each population that depend on the population state and the environmental state. More precisely, for each $i$, let $A_i(\xi,X)$ be a non-negative, $n_i\times n_i$ matrix whose $j$--$k$-th entry corresponds to the contribution of individuals in state $j$ to individuals in state $k$ e.g. individuals transitioning from state $j$ to state $k$ or the mean number of offspring in state $k$ produced by individuals in state $j$.  Using these projection matrices and the sequence of environmental states, the population dynamic of population $i$ is given by 
\[
	X_{t+1}^i = X_t^iA_i(\xi_{t+1}, X_t).
\]
where $X_t^i$ multiplies on the left hand side of $A_i(\xi_{t+1},X_t)$ as it is a row vector.  If we define $A(\xi,X)$ to be the $n\times n$ block diagonal matrix $\mathrm{diag}(A_1(\xi,X),\dots,A_m(\xi,X))$, then the dynamics of the interacting populations are given by 
\begin{equation}\label{DYN1}
X_{t+1} = X_t A(\xi_{t+1},X_t).
\end{equation}
For these dynamics, we make the following assumptions:
\begin{enumerate}
\item[\textbf{H1:}] $\xi_1,\xi_2, \dots $ is an ergodic stationary sequence in a compact Polish space $E$ (i.e. compact, separable and completely metrizable). 
\item[\textbf{H2:}] For each $i$, $(\xi,X) \mapsto A_i(\xi,X)$ is a continuous map into the space of $n_i\times n_i$ non-negative matrices. 
\item[\textbf{H3:}] For each population $i$, the matrix $A_i$ has fixed sign structure corresponding to a primitive matrix. More precisely, for each $i$, there is a $n_i\times n_i$, non-negative, primitive matrix $P_i$ such that  the $j$-$k$-th entry of $A_i(\xi,X)$ equals zero if and only if $j$-$k$th entry $P_i$ equals zero for all $1\le j,k\le n_i$ and $(\xi,X)\in  E \times \R^n_+$. 
\item[\textbf{H4:}] There exists a compact set $S\subset \R^n_+$ such that for all $X_0\in \R^n_+$,  $X_t\in S$ for all $t$ sufficiently large. 
\end{enumerate}

Our analysis focuses on whether the interacting populations tend, in an appropriate stochastic sense, to be bounded away from extinction. Extinction of one or more population corresponds to the population state lying in the \emph{extinction set}
\[
\bS_0 = \{ x \in S : \prod_i \|x^i\|=0\}
\]
where $\|x^i\|=\sum_{j=1}^{n_i} x^{ij}$ corresponds to the $\ell^1$--norm of $x^i$. Given $X_0=x$, we define stochastic persistence in terms of the  empirical measure 
\begin{equation}\label{def:emp.meas1}
\Pi_t^x = \frac{1}{t}\sum_{s=1}^t \delta_{X_s}
\end{equation}
where $\delta_y$ denotes a Dirac measure at $y$, i.e. $\delta_{y}(A) =1$ if $y\in A$ and $0$ otherwise for any Borel set $A \subset \R_+^n$. These empirical measures are random measures  describing the distribution of the observed population dynamics up to time $t$. In particular, for any Borel set $B\subset S$, 
\[
\Pi_t^x(B)= \frac{\#\{ 1\le s \le t | X_s \in B \}}{t} 
\]
is the fraction of time that the populations spent in the set $B$. For instance, if we define  
\[
S_\eta=\{ x \in S : \|x^i\| \le \eta \mbox{ for some } i\},
\]
then $\Pi_t^x(S_\eta)$ is the fraction of time that the total abundance of some population is less than $\eta$ given $X_0=x$.

\begin{definition}\label{defn}
The model \eqref{DYN1} is \emph{stochastically persistent} if for all $\varepsilon>0$, there exists $\eta>0$ such that, with probability one,
\[
\Pi_t ^x(S_\eta) \le \varepsilon
\]
for $t$ sufficiently large and $x\in S\setminus S_0$. 
\end{definition}
The set $S_\eta$ corresponds to community states where one or more populations have a density less than $\eta$. Therefore, stochastic persistence corresponds to all populations spending an arbitrarily small fraction of time at arbitrarily low densities.

\section{Results}

\subsection{Long-term growth rates and a persistence theorem}

Understanding persistence often involves understanding what happens to each population when it is rare. To this end, we need to understand the propensity of the population to increase or decrease in the long term. Since 
\[
X_{t}^i= X_0^iA_i(\xi_1,X_0)A_i(\xi_2,X_1)\dots A_i(\xi_t, X_{t-1}),
\]
one might be interested in the long-term ``growth'' of random product of matrices
\begin{equation}\label{matrixproducts}
A_i(\xi_1,X_0)A_i(\xi_2,X_1)\dots A_i(\xi_t, X_{t-1})
\end{equation}
as $t\to\infty$. One measurement of this long-term growth rate when $X_0=x$ is the random variable
\begin{equation}\label{def:lyap}
\rr_i(x) = \limsup_{t\to\infty} \frac{1}{t} \log \| A_i(\xi_1,X_0)A_i(\xi_2,X_1)\dots A_i(\xi_t, X_{t-1})\|.
\end{equation}

Population $i$ is tending to show periods of increase when $\rr_i(x)>0$ and asymptotically decreasing when $\rr_i(x)<0$. Since, in general, the sequence  
\[
\left\{\frac{1}{t} \log \| A_i(\xi_1,X_0)A_i(\xi_2,X_1)\dots A_i(\xi_t, X_{t-1})\|
\right
\}_{t=1}^\infty
\] 
does not converge, the $\limsup_{t\to\infty}$ instead of $\lim_{t\to\infty}$ in the definition of $\rr_i(x)$ is necessary. However, as we discuss in Section \ref{sec-3.2}, the $\limsup_{t\to\infty}$ can be replaced by $\lim_{t\to\infty}$ on sets of ``full measure''. 

An expected, yet useful property of $\rr_i(x)$ is that $\rr_i(x)\le 0$ with probability one whenever $\|x^i\|>0$. In words, whenever population $i$ is present, its per-capita growth rate in the long-term is non-positive. This fact follows from $X_t^i$ being bounded above for $t \ge 0$. Furthermore, on the event of $\{\limsup_{t\to\infty} \|X^i_t\|>0\}$, we get that $\rr_i(x)=0$ with probability one. In words, if population $i$'s density infinitely often is bounded below by some minimal density, then its long-term growth rate is zero as it is not tending to extinction and its densities are bounded from above. Both of these facts are consequences of results proved in the Appendix (i.e. Proposition \ref{prop:equality-lyap}, Corollary \ref{cor_lambdanegative} and Proposition \ref{prop_lambdanul}).

Our main result extends the persistence conditions discussed in the introduction to stochastic models of interacting, structured populations. Namely, if the community increases on average when rare, then the community persists.  More formally, we prove the following theorem in the Appendix. \\

\begin{theoreme}~\label{thm:one} If there exist positive constants $p_1,\dots, p_m$ such that 
\begin{equation}\label{condition:one}
\sum_i p_i \rr_i(x)> 0 \mbox{ with probability one}
\end{equation}
for all $x\in S_0$, then the model \eqref{DYN1} is stochastically persistent. 
\end{theoreme}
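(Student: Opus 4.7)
The plan is to argue by contradiction: assuming stochastic persistence fails, extract an ergodic invariant measure for the process $(\xi_t,X_t)$ that is supported on the extinction set $S_0$, and show that its existence is incompatible with the growth-rate hypothesis \eqref{condition:one}.

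First I would set up the analytical framework. By H4 all trajectories are eventually confined to a compact set, so the random empirical measures $\Pi^x_t$ form a tight family on $S$. Combined with continuity H2 and the stationary-ergodic structure of $\xi_t$ in H1, any weak$^*$ subsequential limit of $\Pi^x_t$ lifts to an invariant probability measure of the skew-product Markov chain on $E\times S$. Now suppose persistence fails. Then there exists $\varepsilon_0>0$ such that for every $\eta>0$ one can find an initial state $x\in S\setminus S_0$ and an event of positive probability on which $\limsup_{t\to\infty}\Pi^x_t(S_\eta)>\varepsilon_0$. Taking $\eta\downarrow 0$ along a diagonal subsequence and extracting weak$^*$ limits yields an invariant measure $\mu$ with $\mu(S_0)>0$. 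Applying the ergodic decomposition and using that each face $\{x^i=0\}$ is forward-invariant (once population $i$ is absent, H3 keeps it absent), one obtains an ergodic invariant measure $\tilde\mu$ with $\tilde\mu(S_0)=1$.

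Next I would extract the growth rates under $\tilde\mu$. Applying Kingman's subadditive ergodic theorem to $\log\|A_i(\xi_t,X_{t-1})\cdots A_i(\xi_1,X_0)\|$ shows that the $\limsup$ in \eqref{def:lyap} is a genuine limit for $\tilde\mu$-a.e.\ $x$, and by ergodicity equals a constant $\lambda_i$. The hypothesis then forces $\sum_i p_i\lambda_i>0$. However, for species $i$ already absent under $\tilde\mu$, primitivity H3 together with the Ruelle/Oseledets theory of random positive matrix products identifies $\lambda_i$ with a top Lyapunov exponent that governs the growth of small perturbations of the $i$-th coordinate off the extinction face; for species present under $\tilde\mu$, Corollary \ref{cor_lambdanegative} together with the bounded-above trajectories forces $\lambda_i\le 0$.

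The contradiction is then obtained from a Lyapunov-type estimate using the function $V(x)=\sum_i p_i\log\|x^i\|$. For trajectories starting just outside $S_0$, primitivity ensures that $\tfrac{1}{t}(V(X_t)-V(X_0))$ is asymptotically close to $\sum_i p_i\lambda_i>0$ whenever the empirical measure spends most of its mass near $S_0$, so $V(X_t)\to+\infty$ along such trajectories. But $V$ is bounded above on the compact set $S$, a contradiction. Quantitatively, this drift estimate also shows that the fraction of time spent in $S_\eta$ is at most $\varepsilon$ once $\eta$ is chosen small enough, which is precisely the persistence conclusion.

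The main obstacle is the final drift comparison: one must pass from the Lyapunov exponent $\lambda_i$, which is defined on the invariant measure $\tilde\mu$ concentrated on $S_0$, to an effective lower bound on $\tfrac{1}{t}\log\|X_t^i\|$ for actual trajectories whose densities are small but positive. This uniform control in a neighborhood of $S_0$ relies essentially on the primitivity assumption H3 (so that a stochastic Perron--Frobenius theory applies, making the top Lyapunov exponent upper semicontinuous in the measure argument) and on the dissipativity H4 (to prevent escape). Packaging this drift estimate as a uniform exponential-tightness statement for the empirical measures, rather than a pathwise statement, is where the bulk of the technical work in the appendix will lie.
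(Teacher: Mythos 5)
Your proposal correctly identifies the essential ingredients---weak$^*$ limits of empirical occupation measures, lifting to invariant measures of the skew product, an ergodic decomposition into a piece on the extinction set and a piece off it, and the role of Ruelle's stochastic Perron--Frobenius theory in controlling the direction of $X_t^i$ so that the matrix-product growth rate becomes an additive ergodic average of the scalar $\zeta_i(\gamma)=\log\|u_i(\gamma)A_i(\gamma)\|$ (this is exactly what the paper's Proposition~\ref{prop_HS} supplies). However, the closing step is where your plan genuinely diverges from what the paper does, and it is also where the unresolved gap lies. You propose a pathwise average-Lyapunov-function estimate: $V(x)=\sum_i p_i\log\|x^i\|$ has per-step increments $\sum_i p_i\log\|\hat X_t^i A_i(\xi_{t+1},X_t)\|$ with $\hat X_t^i=X_t^i/\|X_t^i\|$, and you want to compare this increment to $\sum_i p_i\lambda_i(\tilde\mu)$ uniformly for trajectories whose empirical measure hovers near $S_0$, then conclude $V(X_t)\to+\infty$. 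This comparison requires (a) uniform convergence of $\hat X_t^i$ to the Ruelle direction $u_i$, (b) uniform control over all ergodic measures $\tilde\mu$ on $S_0$, and (c) a way to split the time axis into stretches near $S_0$ and away from it while controlling the error. You acknowledge this is the bulk of the work, but none of it is carried out, and the statement ``$V(X_t)\to+\infty$ contradicts compactness'' alone does not yield the quantitative persistence bound on $\Pi_t^x(S_\eta)$.

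The paper avoids this drift estimate entirely. Its argument is carried out at the level of invariant measures: (1) for a.e.~$\omega$ and $x\notin S_0$, any weak$^*$ limit $\tilde\mu$ of $\tilde\Lambda_t(\gamma)$ is invariant and satisfies $\rr_i(\tilde\mu)\le\rr_i(\gamma)\le 0$ for every $i$ (Corollary~\ref{cor_lambdanegative} plus the limsup comparison through $\bzeta_i$); (2) decomposing $\tilde\mu=\alpha\tilde\nu_0+(1-\alpha)\tilde\nu_1$ with $\tilde\nu_0(\Gamma_0)=1$, using $\rr_i(\tilde\nu_1)=0$ (Proposition~\ref{prop_lambdanul}) and the hypothesis $\rr_*(\tilde\nu_0)>0$ forces $\alpha=0$, so the limit measure assigns no mass to $\Gamma_0$; (3) Proposition~\ref{lemma1}, proved by a separate contradiction + Portmanteau argument, gives a $\emph{uniform}$ bound $\tilde\mu(\Gamma_\eta)<\varepsilon$ over all $\tilde\mu\in\PPinvtheta(\Gamma_+\setminus\Gamma_0)$, which then transfers to the empirical measures. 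Steps (2) and (3) together do the job you assign to the drift estimate. Also note a subtlety you pass over: because Ruelle's theorem requires an invertible base map and the skew product $\Phi$ is generally non-invertible, the paper first extends the dynamics to the two-sided trajectory space $\Gamma$ with the shift $\Theta$; your proposal works on $E\times S$ directly and does not address this. So the outline of your approach is recognizable as an ``average Lyapunov function'' route in the spirit of Hofbauer, but the key uniformity step is missing, and the paper's actual proof sidesteps it with a cleaner measure-theoretic argument.
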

For two competing species ($k=2$) that persist in isolation (i.e. $\rr_1(0)>0$ and $\rr_2(0)>0$ with probability one), inequality~\eqref{condition:one}  reduces to the classical mutual invasibility condition. To see why, consider a population state $x=(x_1,0)$ supporting species $1$. Since species $1$ can persist in isolation, Proposition \ref{prop_lambdanul} implies that $\rr_1(x)=0$ with probability one. Hence, inequality \eqref{condition:one} for this initial condition becomes $p_1 \rr_1(x)+ p_2 \rr_2(x)=p_2 \rr_2(x)>0$ with probability one for all initial conditions $x=(x_1,0)$ supporting species $1$.  Similarly, inequality \eqref{condition:one} for an initial condition $x=(0,x_2)$ supporting species $2$ becomes $\rr_1 (x)>0$ with probability one. In words, stochastic persistence occurs if both competitors have a positive per-capita growth rate when rare. A generalization of the mutual invasibility condition to higher dimensional communities is discussed at the end of the next subsection.

\subsection{A refinement using invariant measures}\label{sec-3.2}

The proof of Theorem~\ref{thm:one} follows from a more general result that we now present. For this  result, we show that one need not verify the persistence condition~(\ref{condition:one}) for all $x$ in the extinction set $S_0$. It suffices to verify the persistence condition for invariant measures of the process supported by the extinction set. 
\begin{definition}\label{def:inv.meas1}
A Borel probability measure $\mu$ on $E \times S$ is \emph{an invariant measure for the model \eqref{DYN1}}  provided that 
\begin{enumerate}
\item[(i)] $\Prob[\xi_t \in B] = \mu(B\times S)$ for all Borel sets $B\subset E$, and 
\item[(ii)] if $\Prob[(\xi_0,X_0)\in C]= \mu(C)$ for all Borel sets $C\subset E\times S$, then 
$\Prob[(\xi_t,X_t)\in C]= \mu(C)$ for all Borel sets $C\subset E\times S$ and $t\ge 0$. 
\end{enumerate} 
\end{definition}
Condition (i) ensures that invariant measure is consistent with the environmental dynamics. Condition (ii) implies that if the system initially follows the distribution of $\mu$, then it follows this distribution for all time. When this occurs, we say \emph{ $(\xi_t, X_t)$ is stationary with respect to $\mu$}. One can think of invariant measures as the stochastic analog of equilibria for deterministic dynamical systems; if the population statistics initially follow $\mu$, then they follow $\mu$ for all time.

When an invariant measure $\mu$ is statistically indecomposable, it is \emph{ergodic}. More precisely, $\mu$ is ergodic if it can not be written as a convex combination of two distinct invariant measures, i.e. if there exist $0<\alpha<1$ and two invariant measures $\mu_1,\mu_2$ such that $\mu=\alpha \mu_1 +(1-\alpha) \mu_2$, then $\mu_1=\mu_2=\mu$.

\begin{definition}\label{def:inv.rate}
If $(\xi_t, X_t)$ is stationary with respect to $\mu$, the subadditive ergodic theorem implies that  $\rr_i(X_0)$ is well-defined with probability one. Moreover, we call the expected value \[\rr_i(\mu)=\int \E[\rr_i(X_0)|X_0=x,\xi_1=\xi]\mu(d\xi,dx) \] to be \emph{long-term growth rate of species $i$ with respect to $\mu$}. When $\mu$ is ergodic, the subadditive ergodic theorem implies that $\rr_i(X_0)$ equals $\rr_i(\mu)$  for $\mu$-almost every $(X_0,\xi_1)$. \end{definition}

With these definitions, we can rephrase Theorem~\ref{thm:one} in terms of the long-term growth rates $\rr_i(\mu)$ as well as provide an alternative characterization of the persistence condition. 

\begin{theoreme}\label{thm:main}
If one of the following equivalent conditions hold
\begin{enumerate}
\item[(i)] $\rr_*(\mu) := \max_{1\le i\le m} \rr_i(\mu)>0$
for every invariant probability measure with $\mu (S_0)=1$, or 
\item[(ii)] there exist positive constants $p_1,\dots,p_m$ such that 
\[
\sum_i p_i \rr_i(\mu)>0
\]
for every ergodic probability measure with $\mu(S_0)=1$, or
\item[(iii)] there exist positive constants $p_1,\dots, p_m$ such that 
\[
\sum_i p_i \rr_i(x)> 0 \mbox{ with probability one}
\]
for all $x\in S_0$
\end{enumerate}
then the model \eqref{DYN1} is stochastically persistent. 
\end{theoreme}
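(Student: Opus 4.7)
The plan is to prove the circular chain (iii) $\Rightarrow$ (ii) $\Rightarrow$ (i) $\Rightarrow$ (iii); stochastic persistence then follows from Theorem~\ref{thm:one}, which already covers (iii).

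The implication (iii) $\Rightarrow$ (ii) is direct: for an ergodic invariant measure $\mu$ with $\mu(S_0)=1$, Definition~\ref{def:inv.rate} and the subadditive ergodic theorem give $\rr_i(X_0)=\rr_i(\mu)$ for $\mu$-almost every $X_0 \in S_0$, so applying (iii) pointwise yields $\sum_i p_i \rr_i(\mu) > 0$ with the same weights. For (ii) $\Rightarrow$ (i), I would first observe that $S_0$ is forward invariant under \eqref{DYN1}, since $\|X_t^i\|=0$ forces $\|X_{t+1}^i\|=0$. Hence any invariant $\mu$ with $\mu(S_0)=1$ decomposes under the ergodic decomposition into components $\mu_\omega$ each still supported on $S_0$, and linearity of $\mu \mapsto \rr_i(\mu)$ yields
\[
\sum_i p_i \rr_i(\mu) \;=\; \int \sum_i p_i \rr_i(\mu_\omega)\, d\nu(\omega) \;>\; 0
\]
by (ii). Since $p_i>0$, this forces $\max_i \rr_i(\mu) > 0$, giving $\rr_*(\mu) > 0$.

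The implication (i) $\Rightarrow$ (iii) is the heart of the matter, and I would proceed in two steps. First, a separating-hyperplane argument on the convex set
\[
W \;=\; \{(\rr_1(\mu), \ldots, \rr_m(\mu)) : \mu \text{ invariant}, \ \mu(S_0)=1\} \;\subset\; \R^m,
\]
convex by linearity of each $\rr_i$ in $\mu$ and disjoint from the closed negative orthant $-\R^m_+$ by (i), produces nonnegative weights $p$ with $\langle p, v \rangle > 0$ on $W$. A small perturbation using boundedness of $W$ (a consequence of H4 and the Lyapunov-exponent bounds from Proposition~\ref{prop:equality-lyap} and Corollary~\ref{cor_lambdanegative}) upgrades $p$ to be strictly positive in each coordinate. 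Second, to transfer the resulting bound from invariant measures to the pointwise statement in (iii), I would use that for any $x \in S_0$ the empirical measures $\Pi_t^x$ live on the compact space of probability measures on $S_0$ (by forward invariance of $S_0$), their weak-$\ast$ limit points are invariant, and a Furstenberg--Kifer-style identification links $\rr_i(x)$ to the Lyapunov exponents of these limits, yielding $\sum_i p_i \rr_i(x) > 0$ almost surely.

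The main obstacle is the separation/perturbation step. The map $\mu \mapsto \rr_i(\mu)$ is in general only upper semi-continuous in the weak-$\ast$ topology---a standard feature of Lyapunov exponents of products of random matrices---so $W$ need not be closed in $\R^m$ and the separation need not yield a uniformly strict inequality. To address this, I would argue that any putative limit point $v_0 \in \overline{W} \cap (-\R^m_+)$ arises from an invariant measure $\mu_0$ whose coordinates $\rr_i(\mu_0)$ dominate $v_{0,i}$ componentwise by upper semi-continuity, producing an element of $W$ that violates (i); the primitivity assumption H3 controls the lower tail of $\rr_i$ and underpins the needed continuity. Once the three conditions are shown equivalent, the theorem closes by reduction to Theorem~\ref{thm:one}.
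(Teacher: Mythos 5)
Your proposal has a fundamental circularity. You close the argument by appealing to Theorem~\ref{thm:one}, but the paper explicitly derives Theorem~\ref{thm:one} as a corollary of Theorem~\ref{thm:main} (``This Appendix proves Theorem~\ref{thm:main} from which Theorem~\ref{thm:one} follows''); Theorem~\ref{thm:one} has no independent proof in the paper. Proving the three conditions equivalent and then invoking the special case (iii) as if it were an established fact does not discharge the real obligation, which is to show that \emph{any} of these conditions yields stochastic persistence. The paper's actual proof works directly from condition (i) (rephrased as condition (a) on the trajectory space): it extends the skew product $\Phi$ to an invertible shift $\Theta$ on the space of bi-infinite trajectories so that Ruelle's Random Perron--Frobenius theorem (Proposition~\ref{prop:original_ruelle}) applies, shows via Poincar\'e recurrence (Proposition~\ref{prop_lambdanul}) that invariant measures on $\Gamma_+\setminus\Gamma_0$ have $\rr_i(\tilde\mu)=0$, decomposes empirical-measure limit points and derives a contradiction with condition (a), and finally uses Proposition~\ref{lemma1} to bound the time spent in $S_\eta$. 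None of this machinery appears in your proposal.

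Your worry about $\mu\mapsto\rr_i(\mu)$ being only upper semicontinuous, and hence $W$ possibly failing to be closed, is also misplaced for this model. The primitivity hypothesis H3, through the Random Perron--Frobenius machinery, furnishes a continuous function $\bzeta_i$ on the trajectory space with $\rr_i(\tilde\mu)=\int\bzeta_i\,d\tilde\mu$ (Propositions~\ref{prop_HS} and~\ref{prop_intlambda}); this makes $\mu\mapsto\rr_i(\mu)$ genuinely weak-$\ast$ continuous and linear on the compact convex set of invariant measures. That is precisely what the paper exploits in its minimax argument (Section~\ref{partE}), which cleanly converts between the $\max_i$ form of (i) and the convex-combination form of (ii) and hands the min on the right-hand side to an ergodic measure via ergodic decomposition; there is no need for the $\epsilon$-perturbation or semicontinuity workaround you sketch. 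The remaining equivalence, from the ergodic-measure condition to the pointwise statement (iii), is proved in the paper by passing a given $(\omega',x')\in\tilde\Omega\times S_0$ to the trajectory space, extracting a weak-$\ast$ limit $\tilde\mu$ of empirical measures, and using $\rr_i(\tilde\mu)\le\rr_i(\gamma)=\rr_i(\omega',x')$ together with ergodic decomposition---this is where your ``Furstenberg--Kifer-style identification'' would need to live, but it requires the inequality direction and the trajectory-space construction, neither of which your sketch provides. Your arguments for (iii)$\Rightarrow$(ii) and (ii)$\Rightarrow$(i) via forward invariance of $S_0$ and ergodic decomposition are fine and match the paper's Remark~\ref{remark:equi-conditions}; the rest does not stand.
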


With Theorem~\ref{thm:main}'s formulation of the stochastic persistence criterion, we  can introduce a generalization of the mutual invasibility condition to higher-dimensional communities. To state this condition, observe that for any ergodic, invariant measure $\mu$, there is a unique set of species $I\subset \{1,\dots,k\}$ such that $\mu(\{x: x^{ij}> 0$ for all $i\in I, 1\le j \le n_i\})=1$. In other words, $\mu$ supports the community $I$. Proposition 8.19 implies that $\rr_i(\mu)=0$ for all $i\in I$. Therefore, if $I$ is a strict subset of $\{1,\dots,k\}$ i.e. not all species are in the community $I$, then coexistence condition (ii) of Theorem~\ref{thm:main} requires that there exists a species $i\notin I$ such that $\rr_i(\mu)>0$. In other words, the coexistence condition requires that at least one missing species has a positive per-capita growth rate for any subcommunity represented by an ergodic invariant measure. While this weaker condition is sometimes sufficient to ensure coexistence (e.g. in the two species models that we examine), in general it is not as illustrated in Section 6.1. Determining, in general, when this ``at least one missing species can invade'' criterion is sufficient for stochastic persistence is an open problem.

\section{Predator-prey dynamics in auto-correlated environments} 

To illustrate the applicability of Theorems~\ref{thm:one} and~\ref{thm:main}, we apply the persistence criteria to stochastic models of predator-prey interactions,  stage-structured populations with over-compensating density-dependence, and transitive and intransitive competition in spatially heterogeneous environments. 

For unstructured populations, Theorem~\ref{thm:main} extends \citet{jmb-11}'s criteria for persistence to temporally correlated environments. These temporal correlations can have substantial consequences for coexistence as we illustrate now for a stochastic model of predator-prey interactions. In the absence of the predator, assume the prey, with density $N_t$ at time $t$, exhibits a noisy Beverton-Holt dynamic 
\begin{equation}\label{eq:bh}
N_{t+1} = \frac{R_{t+1}N_t}{1+a\,N_t}
\end{equation}
where $R_t$ is a stationary, ergodic sequence of random variables corresponding to the intrinsic fitness of the prey at time $t$, and $a>0$ corresponds to the strength of intraspecific competition. To ensure the persistence of the prey in the absence of the predator, assume $\E[\ln R_1]>0$ and $\E[\ln R_1]<\infty$. Under these assumptions, Theorem~1 of \citet{tpb-09} implies that $N_t$ converges in distribution to a positive random variable $\widehat N$ whenever $N_0>0$. Moreover, the empirical measures $\Pi^{(N,P)}_t$ with $N>0,P=0$ converge almost surely to the law $\nu$ of the random vector $(\widehat N,0)$ i.e. the probability measure satisfying $\nu(A)=\Prob[(\widehat N, 0)\in A]$ for any Borel set $A\subset \R^2_+$. 

Let $P_t$ be the density of predators at time $t$ and $\exp(-bP_t)$ be the fraction of prey that ``escape'' predation during generation $t$ where $b$ is the predator attack rate.  The mean number of predators offspring produced per consumed prey is $c$, while $s$ corresponds to the fraction of predators that survive to the next time step. The predator-prey dynamics are
\begin{equation}\label{eq:hp}
\begin{aligned}
N_{t+1} &= \frac{R_{t+1}N_t}{1+a\,N_t} \exp(-bP_t)\\
P_{t+1} &= c N_t (1-\exp(-bP_t))+s P_t.
\end{aligned}
\end{equation}
To see that (\ref{eq:hp}) is of the form of our models (\ref{DYN1}), we can expend the exponential term in the second equation. To ensures that \eqref{eq:hp} satisfies the assumptions of Theorem~\ref{thm:one}, we assume $R_t$ takes values in the half open interval $(0,R^*]$. Since  $N_{t+1} \le R_{t+1}/a \le R^*/a$ and $P_{t+1} \le c N_t + sP_t \le cR^*/a+sP_t$, $X_t=(N_t,P_t)$ eventually enters and remains in the compact set \[S=[0,R^*/a]\times [0,c R^*/(a(1-s))].\]

To apply Theorem~\ref{thm:one}, we need to evaluate $r_i((N,P))$ for all $N\ge 0, P\ge 0$ with either $N=0$ or $P=0$. 
Since $(0,P_t)$ converges to $(0,0)$ with probability one whenever $P_0\ge 0$, we have $\rr_1((0,P))=\E[\ln R_t]>0$ and $\rr_2((0,P))=\ln s<0$ whenever $P\ge 0$. Since $\Pi^{(N,0)}_t$ with $N>0$ converges almost surely to $\nu$, Proposition~\ref{prop_lambdanul} implies  $\rr_1((N,0))=0$. Moreover,
\begin{equation}
\rr_2((N,0))=\E\left[ \ln \left(cb \widehat N +s \right)\right] = \int \ln(c b x + s) \nu(dx) 
\end{equation}
By choosing $p_1=1-\varepsilon$ and $p_2= \varepsilon>0$ for $\varepsilon$ sufficiently small (e.g. $0.5\E[\ln R_t]/(\E[\ln R_t]- \ln s)$), we have $\sum_i p_i \rr_i((N,P))>0$ whenever $NP=0$ if and only if
\begin{equation}\label{eq:pred-prey}
\E\left[ \ln \left(cb \widehat N +s \right)\right]  >0.
\end{equation}
Namely, the predator and prey coexist whenever the predator can invade the prey-only system. Since $\ln(cb N +s)$ is a concave function of the prey density and the predator life history parameters $c,b,s$, Jensen's inequality implies that fluctuations in any one of these quantities decreases the predator's growth rate.

To see how temporal correlations influence whether the persistence criterion \eqref{eq:pred-prey} holds or not, consider an environment that fluctuates randomly between good and bad years for the prey. On good years, $R_t$ takes on the value $R_{good}$, while in bad years it takes on the value $R_{bad}$. Let the transitions between good and bad years be determined by a Markov chain where the probability of going from a bad year to a good year is $p$ and the probability of going from a good year to a bad year is $q$. For simplicity, we assume that $p=q$ in which case half of the years are good and half of the years are bad in the long run. Under these assumptions,  the persistence assumption $\E[\ln  R_1]>0$ for the prey is $\ln \left(R_{good} R_{bad}\right)>0.$

To estimate the left-hand side of \eqref{eq:pred-prey}, we consider the limiting cases of strongly negatively correlated environments ($p\approx 1$) and strongly positively correlated environments ($p\approx 0$). When $p\approx 1$, the environmental dynamics are nearly periodic switching nearly every other time step between good and bad years. Hence, one can approximate the stationary distribution $\widehat N$ by the positive, globally stable fixed point of 
\[
\begin{aligned}
x_{t+2}& = \frac{R_{good} x_{t+1}}{1+a x_{t+1}}\\
&= \frac{R_{good}R_{bad} x_t /(1+a x_t))}{1+ a (R_{bad} x_t/(1+ax_t))}\\
&= \frac{R_{good}R_{bad} x_t}{1+a(1+R_{bad})x_t}
\end{aligned}
\]
which is given by $\frac{R_{good}R_{bad}-1}{a(1+R_{bad})}$. Hence, if $p\approx 1$, then the distribution $\nu$ of $\widehat N$ approximately puts half of its weight on $\frac{R_{good}R_{bad}-1}{a(1+R_{bad})}$ and half of its weight on  $\frac{R_{good}R_{bad}-1}{a(1+R_{good})}$ and the persistence criterion \eqref{eq:pred-prey} is approximately
\begin{equation}\label{pp1}
\frac{1}{2} \ln \left( bc\frac{R_{good}R_{bad}-1}{a(1+R_{bad})} + s\right) + \frac{1}{2} \ln \left(bc\frac{R_{good}R_{bad}-1}{a(1+R_{good})}+s 
\right)>0.
\end{equation}

Next, consider the case that $p\approx 0$ in which there are long runs of good years and long runs of bad years. Due to these long runs, one expects that half time $\widehat N$ is near the value $(R_{good}-1)/a$ and half the time it is near the value $\max\{(R_{bad}-1)/a,0\}$. If $R_{bad}>1$, then the persistence criterion is approximately
\begin{equation}\label{pp2}
 \frac{1}{2} \ln \left(bc\frac{R_{good}-1}{a}+s \right)+
 \frac{1}{2} \ln \left(bc\frac{R_{bad}-1}{a}+s \right)>0
\end{equation}
Relatively straightforward algebraic manipulations (e.g. exponentiating the left hand sides of \eqref{pp1} and \eqref{pp2} and multiplying by $(1+R_{bad})(1+R_{good})$) show that the left hand side of \eqref{pp1} is always greater than the left hand side of \eqref{pp2}. 

\begin{bio}
Positive autocorrelations, by increasing variability in prey density, hinders predator establishment and, thereby, coexistence of the predator and prey. In contrast, negative auto-correlations by reducing variability in prey density can facilitate predator-prey coexistence (Fig.~\ref{fig:pred-prey}). \end{bio}

\begin{figure}
\includegraphics[width=6in]{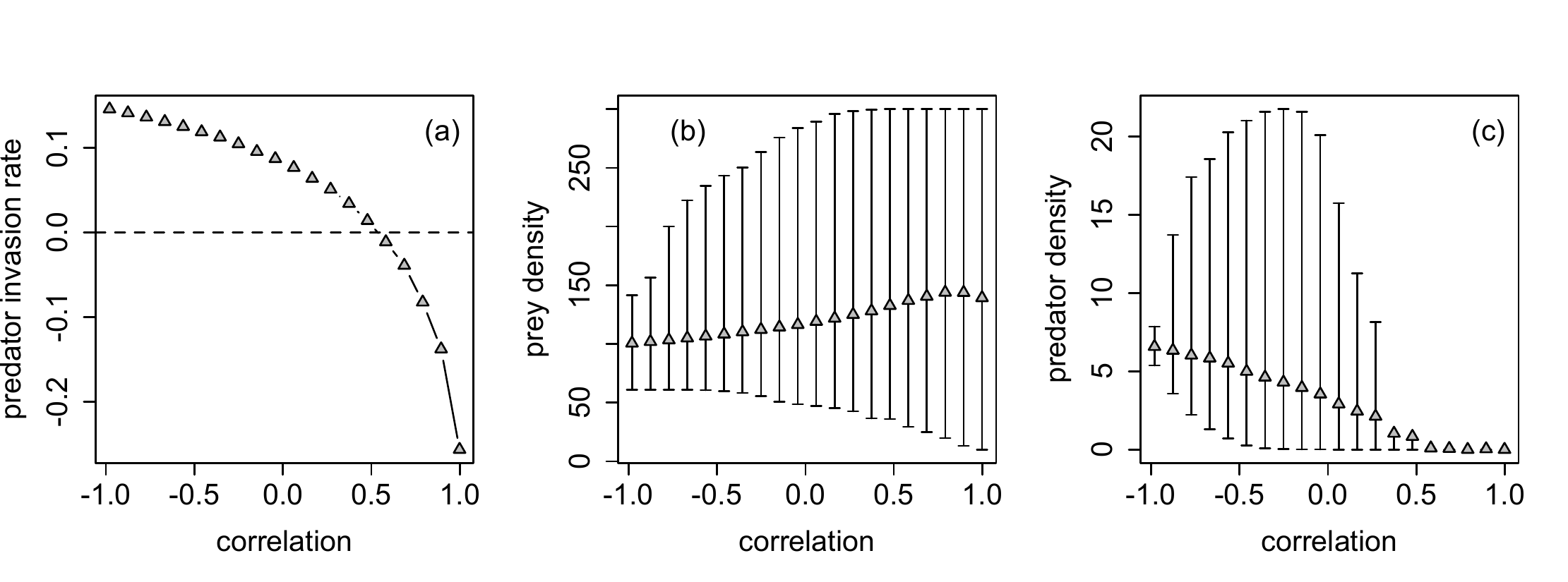}
\caption{Effect of temporal autocorrelations on predator-prey coexistence in a Markovian environment. In (a), the long-term growth rate $\rr_2((N,0))$ with $N>0$ of the predator when rare is plotted as a function function of the temporal autocorrelation between good and bad reproductive years for the prey. In (b) and (c), the mean and interquartile ranges of long-term distribution of prey and predator densities are plotted as function of the temporal autocorrelation. Parameters: $R_{good}=4$, $R_{bad}=1.1$, $a=0.01$, $c=1$, $s=0.1$, $b=0.01$.}
\label{fig:pred-prey}
\end{figure}

\section{Application to structured single species models}
For single species models with negative-density dependence, we can prove sufficient and necessary conditions for stochastic persistence. The following theorem implies that stochastic persistence occurs if the long-term growth rate $\rr_1(0)$ when rare is positive and asymptotic extinction occurs with probability one if this long-term growth rate is negative. \\

\begin{theoreme}~\label{thm:single} Assume that $m=1$ (i.e. there is one species), \textbf{H1-H4} hold and the entries of $A(\xi,x)=A_1(\xi,x)$ are non-increasing functions of $x$. If $\rr_1(0)>0$, then 
\begin{equation}\label{another}
X_{t+1} = X_t A(\xi_{t+1},X_t)
\end{equation}
is stochastically persistent. If $\rr_1(0)<0$, then $\lim_{t\to\infty} X_t = (0,0,\dots,0)$ with probability one. 
\end{theoreme}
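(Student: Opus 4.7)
My plan is to treat the two halves separately and show that the persistence half is immediate from Theorem~\ref{thm:one}, while the extinction half requires a monotone domination argument together with a Lyapunov exponent computation for the linearization at $0$. With $m=1$, the extinction set $S_0=\{x\in S:\prod_i\|x^i\|=0\}$ reduces to $\{0\}$, and $X_t=0$ for all $t$ whenever $X_0=0$. By the subadditive ergodic theorem applied to the ergodic stationary sequence $\xi_t$, together with \textbf{H2} and compactness of $E$ (which yield $\E\log\|A(\xi_1,0)\|<\infty$) and primitivity from \textbf{H3}, the random variable $\rr_1(0)=\limsup_{t\to\infty}\frac{1}{t}\log\|A(\xi_1,0)\cdots A(\xi_t,0)\|$ is almost surely a constant and in fact a genuine limit.

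For the first half ($\rr_1(0)>0$), the hypothesis of Theorem~\ref{thm:one} with $p_1=1$ becomes precisely $\rr_1(0)>0$ on $S_0=\{0\}$, so stochastic persistence follows directly. No additional work is needed.

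For the extinction half ($\rr_1(0)<0$), the plan is to use the assumption that the entries of $A(\xi,\cdot)$ are non-increasing in $x$ to dominate $X_t$ by the linear random product started from the same initial condition. Define $Y_0:=X_0$ and $Y_{t+1}:=Y_tA(\xi_{t+1},0)$. I would prove $X_t\le Y_t$ entrywise by induction: the base case is immediate, and if $X_t\le Y_t$ then, since $A(\xi_{t+1},X_t)\le A(\xi_{t+1},0)$ entrywise and all matrices are non-negative,
\[
X_{t+1}=X_tA(\xi_{t+1},X_t)\le X_tA(\xi_{t+1},0)\le Y_tA(\xi_{t+1},0)=Y_{t+1}.
\]
Once the domination is in hand, the Furstenberg--Kesten / Oseledec theorem applied to the stationary ergodic cocycle $A(\xi_t,0)$ gives $\lim_{t\to\infty}\frac{1}{t}\log\|Y_t\|=\rr_1(0)<0$ almost surely for every non-zero starting vector $Y_0$, whence $\|X_t\|\le\|Y_t\|\to0$ exponentially fast with probability one.

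The main obstacle is the last step: showing that the top Lyapunov exponent $\rr_1(0)$ is realized for \emph{every} non-negative initial vector $Y_0\ne0$, not just for a generic one selected by the Oseledec filtration. This is where \textbf{H3} does the essential work: primitivity of the sign pattern implies that for some finite $k$ the product $A(\xi_1,0)\cdots A(\xi_k,0)$ is entrywise strictly positive, so by stationarity this happens infinitely often, and once it does $Y_0$ is pushed into the interior of $\R^n_+$ where the Perron--Frobenius direction controls the norm. I expect this to follow from a standard cone-contraction / Hilbert-projective-metric argument, or by citing an existing product-of-random-primitive-matrices result; the analogous statement is presumably established in the Appendix (along the lines of Proposition~\ref{prop:equality-lyap}) and can simply be invoked.
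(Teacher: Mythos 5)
Your proposal is correct and follows essentially the same approach as the paper: the persistence half is exactly the paper's one-line invocation of Theorem~\ref{thm:one}, and the extinction half uses the same monotone domination $X_t\le Y_t$ (which the paper leaves implicit) followed by a random Perron--Frobenius argument to conclude that $\|Y_t\|$ decays at rate $\rr_1(0)<0$ for every nonzero nonnegative $X_0$; the paper cites \citet[Proposition 3.2]{ruelle-79} for this last step, which is precisely the cone-contraction / random Perron--Frobenius result you anticipate invoking, and indeed the same ingredient appears in the Appendix as Proposition~\ref{prop:original_ruelle} and Proposition~\ref{prop_HS}(i).
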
 

Our assumption that the entries $A(\xi,x)$ are non-increasing functions of $x$ ensures that $\rr_1(0)\ge \rr_1(x)$ for all $x$ which is the key fact used in the proof of Theorem~\ref{thm:single}. It remains an open problem to identify other conditions on $A(\xi,x)$ that ensure $\rr_1(0)\ge \rr_1(x)$ for all $x$.

\begin{proof}
The first statement of this theorem follows from Theorem~\ref{thm:one}. 

Assume that $\rr_1(0)<0$.  Provided that $X_0$ is nonnegative with at least one strictly positive entry, Ruelle's stochastic version of the Perron Frobenius Theorem \citep[Proposition 3.2]{ruelle-79} and the entries of $A(\xi,x)$ being non-increasing in $x$ imply 
 \[
\lim_{t\to\infty} \frac{1}{t} \log \|X_t \|  \le \lim_{t\to\infty} \frac{1}{t} \log\|X_0 A(\xi_t,0)\dots A(\xi_1,0)\| =   \rr(0)<0
\]
with probability one. Hence, $\lim_{t\to\infty} X_t=(0,\dots ,0)$ with probability one. 
\end{proof}

Theorem~\ref{thm:single} extends Theorem 1 of \citet{tpb-09} as it allows for over-compensating density dependence and makes no assumptions about differentiability of $x\mapsto A(\xi,x)$. To illustrate its utility, we apply this result to the larvae-pupue-adult model of flour beetles and a metapopulation model. 

\subsection{A stochastic Larvae-Pupae-Adult model for flour beatles}
An important, empirically validated model in ecology is the ``Larvae-Pupae-Adult'' (LPA) model which describes flour beetle population dynamics~\citep{costantino-etal-95,dennis-etal-95,costantino-etal-97}. The model keeps track of the densities $\ell_t, p_t, a_t$ of larvae, pupae, and adults at time $t$. Adults produce $b$ eggs each time step. These eggs are cannibalized by adults and larvae at rates $c_{ea}$ and $c_{el}$, respectively. The eggs escaping cannibalism become larvae. A fraction $\mu_l$ of larvae die at each time step. Larvae escaping mortality become pupae. Pupae are cannibalized by adults at a rate $c_{pa}$. Those individuals escaping cannibalism become adults. A fraction $\mu_a$ of adults survive through a time step. These assumptions result in a system of three difference equations  
\begin{equation}
\begin{aligned}
\ell_{t+1}&= b a_t \exp(-c_{el} \ell_t - c_{ea} a_t) \\
p_{t+1}& = (1-\mu_l) \ell_t \\
a_{t+1}&= \left(p_t \exp(-c_{pa} a_t)+ (1-\mu_a) a_t\right) 
\end{aligned}
\end{equation} 

Environmental fluctuations have been included in these models in at least two ways. \citet{dennis-etal-95} assumed that each stage experienced random fluctuations due to multiplicative factors $\exp(\xi_t^l),\exp(\xi_t^p),\exp(\xi_t^a)$ such that $\xi_t^i$ for $i=l,p,a$ are independent and normally distributed with mean zero i.e. on the log-scale the average effect of environmental fluctuations are accounted for by the deterministic model. Alternatively, \citet{henson-cushing-97} considered periodic fluctuations in cannibalism rates due to fluctuations in the size $V_t$ of the habitat i.e. the volume of the flour. In particular, they assumed that $c_i = \kappa_i/V_t$ for $i=ea,el,pa$, for positive constants $\kappa_i$. If we include both of these stochastic effects into the deterministic model, we arrive at the following system of random difference equations. 
\begin{equation}\label{eq:lpa}
\begin{aligned}
\ell_{t+1}&= b a_t \exp(-\kappa_{el} \ell_t/V_{t+1} - \kappa_{ea} a_t/V_{t+1}+ \xi_t^l)\\
p_{t+1}& = (1-\mu_l) \ell_t \exp(\xi_t^p)\\
a_{t+1}&= \left(p_t \exp(-\kappa_{pa} a_t/V_{t+1})+ (1-\mu_a) a_t\right)\exp( \xi_t^a)
\end{aligned}
\end{equation} 
We can use Theorem~\ref{thm:main} to prove the following persistence result.  In the case of  $\xi_t^i =0$ with probability one for $i=l,p,a$, this theorem can be viewed as a stochastic extension of Theorem 4 of \citet{henson-cushing-97} for periodic environments. 

\begin{figure}
\includegraphics[width=5in]{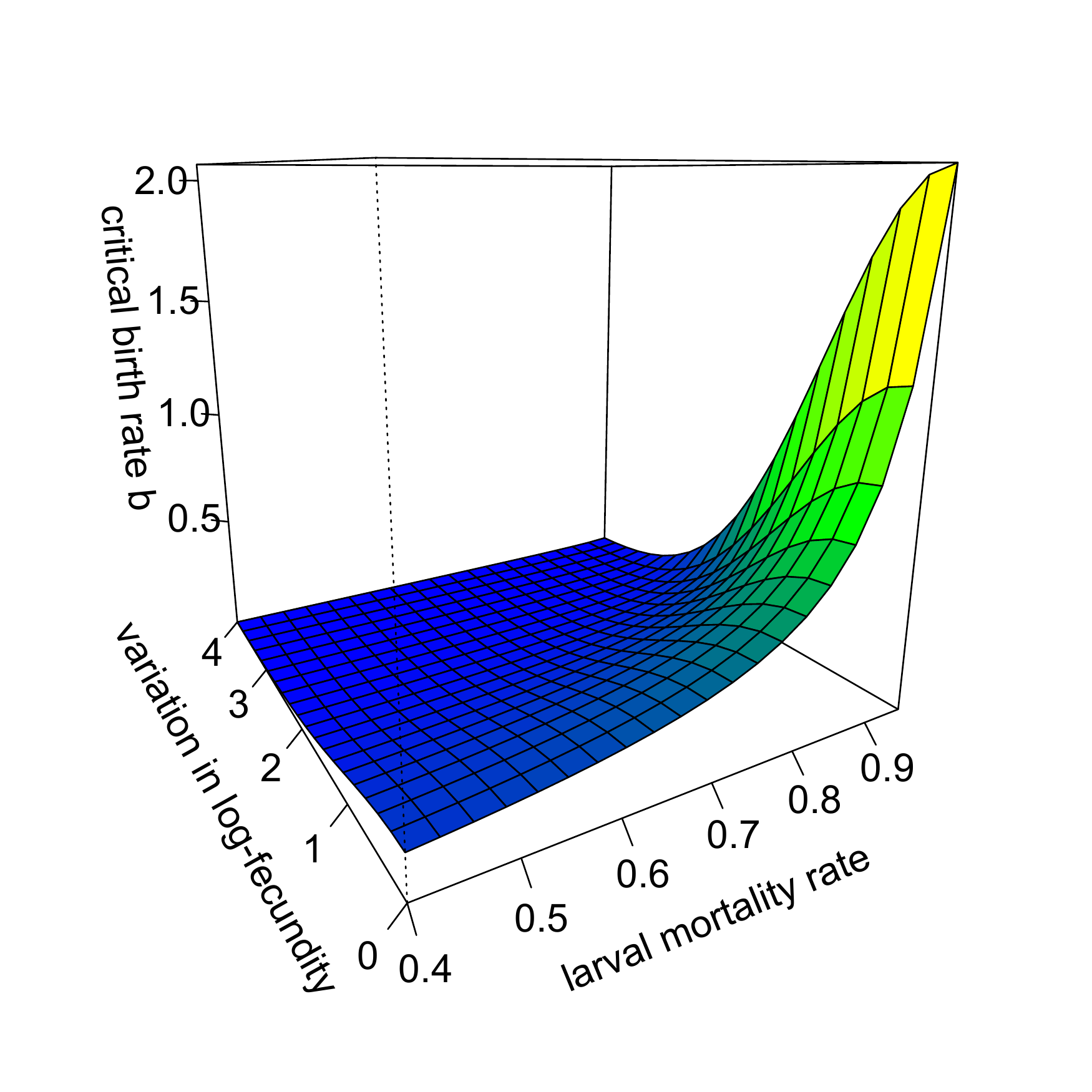}
\caption{Effects fluctuations in fecundity and larval survival on the critical birth rate $b$ required for persistence. $(\xi_t^l)$ are normally distributed with mean $0$ and variance one, $\xi_t^a=\xi_t^p=0$ for all $t$ and $\mu_a=0.1034$ (the value found in Table 1D in \citet{costantino-etal-95}).}\label{fig:LPA}
\end{figure}

\begin{theoreme}~\label{thm:lpa} Assume $c_i>0$ for $i=ea,el,pa$, $\mu_i \in (0,1)$ for $i=l,a$, $\xi_t^l$,$\xi_t^p$,$\xi_t^a$, and $V_t$ are ergodic and stationary sequences such that $\xi_t^i, \log V_t \in (-M,M)$ for $i=l,p,a$, $t\ge0$ and some $M>0$, and $(1-\mu_a)\exp(\xi^a_t) \in [0,1-\delta]$ for some $\delta>0$ with probability one. Then there exists a critical birth rate $b_{crit}>0$ such that 
\begin{description}
\item[Extinction] If $b<b_{crit}$, then $X_t=(\ell_t,p_t,a_t)$ converges almost surely to $(0,0,0)$ as $t\to\infty$. 
\item[Stochastic persistence] If $b>b_{crit}$, then the LPA model is stochastically persistent.
\end{description}
Moreover, if $\xi_t^l=\xi_t^a=\xi_t^p$ with probability one and $\E[ \xi_t^l]=0$, then 
$b_{crit}= \mu_a /(1-\mu_l)$. 
\end{theoreme}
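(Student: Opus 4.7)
The plan is to apply Theorem~\ref{thm:single} to the three-stage model treated as a single structured species, and then to turn the sign of $\rr_1(0)$, viewed as a function of $b$, into a threshold statement. First I would rewrite \eqref{eq:lpa} in the matrix form $X_{t+1}=X_tA(\xi_{t+1},X_t)$ with $X_t=(\ell_t,p_t,a_t)$ and read off the entries of $A(\xi,x)$. The nonzero entries sit on the sign pattern
\[
P \;=\; \begin{pmatrix} 0 & 1 & 0 \\ 0 & 0 & 1 \\ 1 & 0 & 1 \end{pmatrix},
\]
which is primitive (a direct calculation gives $P^4$ strictly positive), so \textbf{H3} holds. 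Each nonzero entry depends on at most one of $\ell,a$ through a decreasing exponential, which gives both \textbf{H2} and the ``non-increasing in $x$'' hypothesis of Theorem~\ref{thm:single}. Hypothesis \textbf{H1} is in the assumptions. For \textbf{H4}, I would chain three a priori bounds: the map $a\mapsto ae^{-\kappa_{ea}a/V}$ is bounded by $V/(\kappa_{ea}e)\le e^M/(\kappa_{ea}e)$, giving a uniform bound on $\ell_{t+1}$; this in turn gives a uniform bound on $p_{t+1}=(1-\mu_l)\ell_t e^{\xi_t^p}$; and the strict contraction $(1-\mu_a)e^{\xi_t^a}\le1-\delta$ then absorbs $a_{t+1}$ into a bounded set after finitely many steps.

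Second, I would set $g(b):=\rr_1(0)$, viewed as the dominant Lyapunov exponent of the random product $A(\xi_1,0)\cdots A(\xi_t,0)$, and show it is continuous, monotone and unbounded in $b$. Monotonicity is the key input: since $b$ enters $A(\xi,0)$ only through the single $(3,1)$ entry, $b\mapsto A(\xi,0)$ is entrywise non-decreasing, so $\|A(\xi_1,0)\cdots A(\xi_t,0)\|$ is monotone in $b$ for every realization; passing to $\tfrac{1}{t}\log\|\cdot\|$ and then $\limsup_t$ gives $g(b')\le g(b)$ whenever $b'\le b$. For strict monotonicity and for $g(b)\to\infty$ as $b\to\infty$, I would use the primitivity of $P$: iterating three steps, the $(3,3)$ entry of $A(\xi_{i+1},0)A(\xi_{i+2},0)A(\xi_{i+3},0)$ contains a term proportional to $b$, from which a pathwise lower bound $\|A(\xi_1,0)\cdots A(\xi_{3k},0)\|\ge c\, b^k\prod_{j=1}^{3k}e^{-M}$ follows, yielding $g(b)\ge \tfrac{1}{3}\log b - C$. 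For $g(b)<0$ at small $b$ I would use continuity together with the fact that the degenerate matrix $A(\xi,0)|_{b=0}$ is upper triangular with diagonal entries $0,0,(1-\mu_a)e^{\xi^a}$, whose only nontrivial Lyapunov exponent is $\E[\log((1-\mu_a)e^{\xi^a})]\le\log(1-\delta)<0$. Continuity of $g$ in $b$ follows from Ruelle's stochastic Perron--Frobenius theorem applied to the primitive product. Setting $b_{crit}:=\inf\{b>0:g(b)>0\}$, Theorem~\ref{thm:single} then delivers the two bullets.

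Third, I would treat the special case $\xi_t^l=\xi_t^p=\xi_t^a=:\xi_t$ with $\E[\xi_t]=0$. The common scalar $e^{\xi_t}$ factors out of every column, so $A(\xi_t,0)=e^{\xi_t}M$ with
\[
M \;=\; \begin{pmatrix} 0 & 1-\mu_l & 0 \\ 0 & 0 & 1 \\ b & 0 & 1-\mu_a \end{pmatrix}.
\]
Hence $A(\xi_1,0)\cdots A(\xi_t,0)=e^{\xi_1+\cdots+\xi_t}M^t$, and dividing $\log\|\cdot\|$ by $t$ gives $g(b)=\E[\xi_1]+\log\lambda(M)=\log\lambda(M)$ by the ergodic theorem, Gelfand's formula and $\E[\xi_1]=0$. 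The characteristic polynomial $\lambda^3-(1-\mu_a)\lambda^2-b(1-\mu_l)$ vanishes at $\lambda=1$ exactly when $b=\mu_a/(1-\mu_l)$, and monotonicity of the Perron root of $M$ in $b$ then forces $b_{crit}=\mu_a/(1-\mu_l)$.

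The main obstacle is the strict monotonicity and the divergence $g(b)\to\infty$: weak monotonicity is routine, but ruling out flat plateaus requires exploiting the primitive sign pattern through the three-step lower bound described above. Everything else reduces to routine applications of Theorem~\ref{thm:single}, the subadditive ergodic theorem and continuity of Lyapunov exponents for primitive random matrix products.
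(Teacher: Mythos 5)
Your approach is essentially the paper's: verify \textbf{H1}--\textbf{H4}, reduce to Theorem~\ref{thm:single} applied to the single structured species, and study the sign of $\rr(b):=\rr_1(0)$ as a function of $b$. The verification of \textbf{H1}--\textbf{H4}, the entrywise (weak) monotonicity of $\rr$ in $b$, the divergence $\rr(b)\to\infty$, and the treatment of the special case $\xi^l=\xi^p=\xi^a$ via the characteristic polynomial $\lambda^3-(1-\mu_a)\lambda^2-b(1-\mu_l)$ are all correct.

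The gap is in upgrading weak monotonicity to a genuine threshold. For the ``Extinction if $b<b_{crit}$'' and ``Persistence if $b>b_{crit}$'' dichotomy to hold with $b_{crit}=\inf\{b:\rr(b)>0\}$ you must rule out an interval on which $\rr\equiv0$; weak monotonicity plus continuity alone permits a plateau at level zero, on which neither conclusion is available. Your three-step path bound gives $\rr(b)\ge\tfrac13\log b - C$, which establishes $\rr(b)\to\infty$ but says nothing about plateaus: a function like $\max(\tfrac13\log b - C,\,0)$ satisfies the same lower bound and has a flat stretch at $0$. The paper closes this by invoking \citet{ruelle-79}'s Theorem~3.1, which gives differentiability of $b\mapsto\rr(b)$ together with the explicit formula $\rr'(b)=\E\bigl[v_t(b)E_{31}w_{t+1}(b)/(v_t(b)B_t(b)w_{t+1}(b))\bigr]>0$; the integrand is a ratio of strictly positive quantities by primitivity, so $\rr$ is strictly increasing. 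An equally workable patch is to use Ruelle's real-analyticity of $\rr$ in $b$: a non-constant, non-decreasing, real-analytic function on $(0,\infty)$ that is negative near $0$ and unbounded at $\infty$ has a unique zero. Either of these replaces your three-step argument; as written, your argument does not deliver the needed strictness.

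A minor point in your favor: the paper asserts $\lim_{b\to0}\rr(b)=-\infty$, but since the $(3,3)$ entry of $B_1\cdots B_t$ is bounded below by $\prod_{s\le t}(1-\mu_a)e^{\xi_s^a}$, in fact $\rr(b)\ge\E[\log((1-\mu_a)e^{\xi^a})]>-\infty$ for every $b>0$. Your route through the degenerate upper-triangular matrix at $b=0$, whose top Lyapunov exponent is $\E[\log((1-\mu_a)e^{\xi^a})]\le\log(1-\delta)<0$, is the correct way to get $\rr(b)<0$ for small $b$. Just note that Ruelle's continuity theorem uses primitivity, which fails at $b=0$; the one-sided limit $\lim_{b\to0^+}\rr(b)=\rr(0)$ follows instead from upper semicontinuity of the top Lyapunov exponent (infimum of the continuous functions $\tfrac1t\E\log\|B_1\cdots B_t\|$) combined with monotonicity.
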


\emph{Remark.} The assumption that $\xi_t^i$ are compactly supported formally excludes the normal distributions used by \citet{dennis-etal-95}. However, truncated normals with a very large $M$ can approximate the normal distribution arbitrarily well. The assumption $(1-\mu_a)\exp(\xi^a_t)\in [0,1-\delta]$ for some $\delta>0$ is more restrictive. However, from a biological standpoint, it is necessary as this term corresponds to the fraction of adults surviving to the next time step. None the less, we conjecture that the conclusions of Theorem~\ref{thm:lpa} hold when $\xi_t^i$ are normally distributed with mean $0$.

Theorem~\ref{thm:lpa} implies that including multiplicative noise with $\log$-mean zero has no effect on the deterministic persistence criterion when $\xi_t^l=\xi_t^p=\xi_t^a$ with probability one. However, when these random variables are not perfectly correlated,  we conjecture that this form of multiplicative noise always decreases the critical birth rate (Fig.~\ref{fig:LPA}). To provide some mathematical evidence for this conjecture, we compute a small noise approximation for the per-capita growth rate $\rr_1(\delta_0)$ when the population is rare \citep{ruelle-79,tuljapurkar-90}. Let \[
\begin{aligned}
B_t=&
\begin{pmatrix}
0&(1-\mu_l)\exp(\xi_t^p)&0\\ 
0&0& \exp(\xi_t^a) \\
b\exp(\xi_t^l)&0 &(1-\mu_a)\exp(\xi_t^a)
\end{pmatrix}\\
\end{aligned}
\]
be the linearization of the stochastic LPA model \eqref{eq:lpa} at $(L,P,A)=(0,0,0)$.  Assume that $\xi_t^i =  \varepsilon Z_t^i$ where $\E[Z_t^i]=0$ and $\E[\left(Z_t^i\right)^2]=1$. \citet[Theorem 3.1]{ruelle-79} implies that $\rr_1(0)$ is an analytic function of $\varepsilon$. Therefore, one can perform a Taylor's series expansion of $\rr_1(0)$ as function of $\varepsilon$ about the point $\varepsilon=0$. As we shall shortly show, the first non-zero term of this expansion is  of second order. Expanding $B_t$  to second order in $\varepsilon$  yields 
\[
\begin{aligned}
B_t
\approx&\underbrace{ \begin{pmatrix}
0&(1-\mu_l)&0\\ 
0&0& 1 \\
b&0 &(1-\mu_a)
\end{pmatrix}}_{=B}\left(I + \varepsilon\, \diag\{Z_t^l,Z_t^p,Z_t^a\}+ \varepsilon^2\diag\{Z_t^l,Z_t^p,Z_t^a\}^2/2\right).
\end{aligned}
\]
The entries of the second order term are positive due to the convexity of the exponential function. Hence, Jensen's inequality implies that fluctuations in $Z_t^i$ increase the mean matrix $\E[B_t]$. This observation, in and of itself, suggests that  fluctuations in $Z_t^i$ increase $\rr_1(0)$. However, to rigorously verify this assertion, let $v$ and $w$ be the left and right Perron-eigenvectors of $B$ such that $\sum_i v_i=1$ and $\sum_i v_iw_i=1$. Let $\rr_0$ be the associated Perron eigenvalue of $B$. Provided the $Z_t^i$ are independent in time, a small noise approximation for the stochastic growth rate of the random products of $B_t$ is  
\begin{equation}\label{eq:approx}
\rr_*(\delta_0)\approx \log \rr_0 + \frac{\varepsilon^2}{2}\left(\E\left[\sum_i v_i w_i \left(Z_t^i\right)^2\right]- \E\left[\left(\sum_i v_i w_i Z_t^i\right)^2\right]\right)
\end{equation}
Since the function $x\mapsto x^2$ is strictly convex and $\sum_i v_i w_i (Z_t^i)^2$ is a convex combination of $(Z_t^l)^2, (Z_t^p)^2$, and $(Z_t^a)^2$, Jensen's inequality implies  
\[
\left(\sum_i v_i w_i Z_t^i\right)^2\le \sum_i v_i w_i \left(Z_t^i\right)^2.
\]
Therefore
\[
 \E\left[\left(\sum_i v_i w_i Z_t^i\right)^2\right]\le \E\left[ \sum_i v_i w_i \left(Z_t^i\right)^2\right].
 \]
It follows that the order $\varepsilon^2$ correction term in \eqref{eq:approx} is non-negative and equals zero if and only if $Z_t^l=Z_t^p=Z_t^a$ with probability one. Therefore,  ``small'' multiplicative noise (with $\log$-mean zero) which isn't perfectly correlated across the stages increases the stochastic growth rate and, therefore, decreases the critical birth rate $b_{crit}$ required for stochastic persistence. 
 
\begin{bio}{For the LPA model, there is a critical mean fecundity, above which the population persists and below which the population goes asymptotically to extinction. Fluctuations in the log survival rates decrease the critical mean fecundity unless the log survival rates are perfectly correlated.}\end{bio}

\begin{proof}[Proof of Theorem 5.2] We begin by verifying \textbf{H1}--\textbf{H4}. \textbf{H1} and \textbf{H2} follow from our assumptions. To verify \textbf{H3}, notice that the sign structure of the nonlinear projection matrix $A_t(\xi,X)$ for \eqref{eq:lpa} is given by 
\[
C=
\begin{pmatrix}
0&1&0\\
0&0&1\\
1&0&1
\end{pmatrix}
\]
Since 
\[
C^4=
\begin{pmatrix}
1&1&1\\
1&1&2\\
2&1&3
\end{pmatrix}
\]
$A_t(\xi,X)$ has the sign structure of the primitive matrix $C$ for all $\xi,X$ and $t$. Finally, to verify \textbf{H4}, define 
\[
K=b e^{2M-1}/ \kappa_{ea}
\]
Then
\[
\ell_{t+1} \le b a_t \exp(-\kappa_{ea} a_t/V_{t+1}+ \xi_t^l)\le b a_t \exp(-\kappa_{ea} a_t \exp(-M) +M) \le K
\] 
for all $t\ge 0$. Therefore, $\ell_t\le K$ for $t\ge 1$ and  
\[
p_{t} \le \ell_{t-1} e^M \le Ke^M
\]
for all $t\ge 2$. Hence, 
\[
a_{t+1} \le p_{t}e^M + (1-\delta) a_t 
\]
for all $t \ge 2$ which implies $a_t \le Ke^{3M}/\delta$ for $t$ sufficiently large. The compact forward invariant set $S=[0,K]\times [0,Ke^M] \times [0,Ke^{3M}]/\delta$ satisfies \textbf{H4}.

At low density we get 
\[
B_t=A(\xi_t,0)=
\begin{pmatrix}
0&(1-\mu_l)\exp(\xi_t^p)&0\\ 
0&0& \exp(\xi_t^a) \\
b\exp(\xi_t^l)&0 &(1-\mu_a)\exp(\xi_t^a)
\end{pmatrix}
\]
Define $\rr(b)$ to be the dominant Lyapunov exponent of the random products of $B_1,B_2,\dots$. Note that with the notation of Theorem \ref{thm:main}, $r(b)=r_1(0)$. Theorem 3.1 of \citet{ruelle-79} implies that $\rr(b)$ is differentiable for $b>0$ and the derivative is given by (see, e.g., section 4.1 of \citet{ruelle-79})   
\[
\rr'(b)= \E\left[ \frac{v_t(b) E_{31} w_{t+1}(b)}{v_t(b) B_t(b)  w_{t+1}(b)}\right]>0
\]
where $v_t(b),w_t(b)$ are the normalized left and right invariant sub-bundles associated with $B_t(b)$ and $E_{31}$ is the matrix with $\exp(\xi_t^l)$ in the $3-1$ entry and $0$ entries otherwise. Since the numerator and denominators in the expectation are always positive,  $\rr(b)$ is a strictly increasing function of $b$. Since $\lim_{b\to 0} \rr(b)=-\infty$ and $\lim_{b\to\infty} \rr(b)=\infty$, there exists $b_{crit}>0$ such that $\rr(b)<0$ for $b<b_{crit}$ and $\rr(b)>0$ for $b>b_{crit}$.

If $b>b_{crit}$, then $\rr(b)>0$ and Theorem~\ref{thm:single} implies that \eqref{eq:lpa} is stochastically persistent. On the other hand, if $b<b_{crit}$, then $\rr(b)<0$ and Theorem~\ref{thm:single} implies that $(\ell_t,p_t,a_t)$ converges to $(0,0,0)$ with probability one as $t\to\infty$. 

The final assertion about the stochastic LPA model follows from observing that if $\xi_t^a=\xi_t^l=\xi_t^a$ with probability one for all $t$, then 
\[
B_t = 
\begin{pmatrix}
0&(1-\mu_l)&0\\ 
0&0& 1 \\
b&0 &(1-\mu_a)
\end{pmatrix}\exp(\xi_t^l)
\]
with probability one.  Hence, $\rr(b)=\log \rr_0(b) + \E[\xi_t^l]$ where $\rr_0(b)$ is the dominant eigenvalue of the deterministic matrix  
\[
\begin{pmatrix}
0&(1-\mu_l)&0\\ 
0&0& 1 \\
b&0 &(1-\mu_a)
\end{pmatrix}\]
Therefore, if $\E[\xi_t^l]=0$, then $\rr(b)=\log \rr_0(b)$. Using the Jury conditions, \citet{henson-cushing-97} showed that $\rr_0(b)>1$ if $b>\mu_a/(1-\mu_l)$ and $\rr_0(b)<1$ if $b<\mu_a/(1-\mu_l)$. Hence, when $\xi_t^l=\xi_t^p=\xi_t^a$ with probability one and $\E[\xi_t^l]=0$, $b_{crit}$ equals $\mu_a/(1-\mu_l)$ as claimed. 
\end{proof}

\subsection{Metapopulation dynamics}  Interactions between movement and spatio-temporal heterogeneities determine how quickly a population grows or declines. Understanding the precise nature of these interactive effects is a central issue in population biology receiving increasing attention from theoretical, empirical, and applied perspectives~\citep{petchy-etal-97,lundberg-etal-00,gonzalez-holt-02,schmidt-04,roy-etal-05,boyce-etal-06,hastings-botsford-06,matthews-gonzalez-07,prsb-10}. 

A basic model accounting for these interactions considers a population living in an environment with $n$  patches. Let $X^r_t$ be the number of individuals in patch $r$  at time $t$. Assuming Ricker density-dependent feedbacks at the patch scale, the fitness of an individual in patch $r$ is $\lambda^r_t \exp( - \alpha_r X_t^r)$ at time $t$, where $\lambda_t^r$ is the maximal fitness and $\alpha_r>0$ measures the strength of infraspecific competition. Let $d_{rs}$  be the fraction of the population from patch $r$ that disperse to patch $s$. Under these assumptions, the population dynamics are given by 
\begin{equation}\label{eq:model}
X^r_{t+1}= \sum_{s=1}^n d_{sr} \lambda_t^s X_t^s \exp(-\alpha_s X_t^s) \qquad r =1,\dots, n.
\end{equation}
To write this model more compactly,  let $F(X_t, \lambda_t)$ be the diagonal matrix with diagonal entries $\lambda_1 \exp(-\alpha_1 X_t^1), \dots, \lambda_n \exp(-\alpha_n X_t^n)$, and $D$ be the matrix whose $i$-$j$th entry is given by $d_{ij}$. With this notation, \eqref{eq:model} simplifies to 
\[
X_{t+1}=X_t F(X_t,\lambda_t) D
\]
If $\lambda^r_t$ are ergodic and stationary, $\lambda_t^r$ take values in a positive compact interval $[\lambda_*,\lambda^*]$ and $D$ is a primitive matrix, then the hypotheses of Theorem~\ref{thm:single} hold. In particular, stochastic persistence occurs only if $\rr_1(0)$, corresponding to the dominant Lyapunov exponent of the random matrix product $F(0,\lambda_t)D\dots F(0,\lambda_1) D$, is positive.

When populations are fully mixing (i.e. $d_{rs}=v_s$ for all $r,s$), \citet{metz-etal-83} derived a simple expression for $\rr_1(0)$ given by 
\begin{equation}\label{averaged}
\rr_1(0)=\E\left[ \log\left( \sum_{r=1}^n v_r \lambda_t^r \right)\right]
\end{equation}
i.e. the temporal log-mean of the spatial arithmetic mean. Owing to the concavity of the log function, JensenÕs inequality applied to the spatial and temporal averages in \eqref{averaged} yields
\begin{equation}\label{eq:jensen2}
\log \left( \sum_{r=1}^n v_r \E[\lambda_t^r]\right)>\rr_1(0)> \sum_{r=1}^n v_r \E[\log \lambda_t^r]
\end{equation}
The second inequality implies that dispersal can mediate persistence as $\rr_1(0)$ can be positive despite all local growth rates $\E[\log \lambda_t^r]$ being negative. Hence, populations can persist even when all patches are sinks, a phenomena that has been observed in the analysis of density-independent models and simulations of density-dependent models~\citep{jansen-yoshimura-98,bascompte-etal-02,jmb-13}. The first inequality in equation~\eqref{eq:jensen2}, however, implies that dispersal-mediated persistence for well-mixed populations requires that the expected fitness $\E[\lambda_t^ r]$ is greater than one in at least one patch. 

 For partially mixing populations for which $d_{rs}=v_s +\varepsilon \delta_{rs}$,  \citet{prsb-10} developed first-order order approximation of $\rr_1(0)$ with respect to $\varepsilon$. This approximation coupled with Theorem~\ref{thm:single} implies that temporal autocorrelations for partially mixing populations can mediate persistence even when the expected fitness $\E[\lambda_t^r]$ is less than one in all patches, a finding related to earlier work by \citet{roy-etal-05}. This dispersal mediated persistence occurs when spatial correlations are sufficiently weak, temporal fluctuations are sufficiently large and positively autocorrelated, and there are sufficiently many patches.
 
\begin{bio}{Metapopulations with density-dependent growth can stochastically persist despite all local populations being extinction prone in the absence of immigration. Temporal autocorrelations can enhance this effect. }
\end{bio}

\section{Applications to competing species in space}

The roles of spatial and temporal heterogeneity in maintaining diversity is a fundamental problem of practical and theoretical interest in population biology~\citep{chesson-00a,chesson-00b,loreau-etal-03,mouquet-loreau-03,davies-etal-05}. To examine the role of both forms of heterogeneity in maintaining diversity of competitive communities, we consider lottery-type models of $m$ competing populations in a landscape consisting of $n$ patches. For there models, competition for vacant space determines the within patch dynamics, while dispersal between the patches couples the local dynamics. After describing a general formulation of these models for an arbitrary number of species with potentially frequency-dependent interactions, we illustrate how to apply our results to case of two competing species and three competing species exhibiting an intransitive, rock-paper-scissor like dynamic. 

\subsection{Formulation of the general model}
To describe the general model, let $X_t^{ir}$ denote the fraction of patch $r$ occupied by population $i$ at time $t$.  At each time step, a fraction $\varepsilon>0$ of individuals die in each patch. The sites emptied by the dying individuals get randomly assigned to progeny in the patch. Birth rates within each patch are determined by local pair-wise interactions. Let $\xi_t^{ij}(r)$ be the ``payoff'' to strategy $i$ interacting with strategy $j$ in patch $r$ at time $t$.  Let 
\begin{equation}\label{matrix-general-model}
\Xi_t(r)=\left(\xi_t^{ij}(r)\right)_{1\le i, j\le m}
\end{equation}
be the payoff matrix for patch $r$. The total number of progeny produced by an individual playing strategy $i$ in patch $r$ is $\sum_j \xi_t^{ij} X_t^{jr}$. Progeny disperse between patches with $d_{sr}$ the fraction of progeny dispersing from patch $s$ to patch $r$. Under these assumptions, the spatial-temporal dynamics of the competing populations are given by 
\begin{equation}\label{eq:games}
X_{t+1}^{ir}=\varepsilon \frac{\sum_{s} d_{sr} \sum_{j} \xi_t^{ij}(s)X_t^{js} X_t^{is}}{\sum_s d_{sr} \sum_{j,l} \xi_t^{lj}(s) X_t^{js}X_t^{ls}} +(1-\varepsilon) X_t^{ir}.
\end{equation}
Let $A_i(\xi,X)$ be the matrix whose $s-r$ entry is given by   
\[
\varepsilon\frac{d_{sr} \sum_{j} \xi^{ij}(s) X^{js}}{\sum_{s'} d_{s'r} \sum_{j,l} \xi^{lj}(s') X^{js'}X^{ls'}}\]
for $r\neq s$, and 
\[
\varepsilon\frac{ d_{sr} \sum_{j} \xi^{ij}(s) X_t^{js}}{\sum_{s'} d_{s'r} \sum_{j,l} \xi^{lj}(s') X^{js'}X^{ls'}}+1-\varepsilon 
\]
for $r=s$.  With these definitions, \eqref{eq:games} takes on the form of our model \eqref{DYN1}.

To illustrate the insights that can be gained from a persistence analysis of these models, we consider two special cases. The first case is a spatially explicit version of \citet{chesson-warner-81}'s lottery model. The second case is a spatial version of a stochastic rock-paper-scissor game. For both of these examples, we assume that a fraction $d$ of all progeny disperse randomly to all patches and the remaining fraction $1-d$ do not disperse. Under this assumption, we get $d_{sr}=d/(m-1)$ for $s\neq r$ and $d_{ss}=1-d$. These populations are fully mixing when $d=\frac{m-1}{m}$ in which case $d_{sr}=\frac{1}{m}$ for all $s,r$.  

\subsection{A spatially-explicit lottery model} 
The lottery model of \citet{chesson-warner-81} assumes that the competing populations do not exhibit frequency dependent interactions. More specifically, the ``payoffs'' $\xi_t^{ij}(r)=\xi_t^i(r)$ for all $i,j$ are independent of the frequencies of the other species. Consequently, the model takes on a simpler form 
\begin{equation}\label{eq:lottery}
X_{t+1}^{ir}=\varepsilon \frac{\sum_{s} d_{sr} \xi_t^{i}(s) X_t^{is}}{\sum_s d_{sr} \sum_{j} \xi_t^{j}(s)X_t^{js}} +(1-\varepsilon) X_t^{ir}.
\end{equation}
where $d_{sr}=\frac{d}{m-1}$ for $r\neq s$ and $d_{ss}=1-d$. 

For two competing species (i.e. $m=2$),  the population states  $z_1 =(1,\dots,1,0,\dots, 0)$ and $z_2=(0,\dots,0,1,\dots, 1)$ correspond to only species $1$ and only species $2$ occupying the landscape, respectively. The extinction set is $S_0=\{z_1,z_2\}$. Theorem~\ref{thm:one} implies that a sufficient condition for stochastic persistence is the existence of positive weights $p_1,p_2$ such that 
\[
p_1 \rr_1(z_1)+p_2 \rr_2(z_1)>0 \mbox{ and } p_1\rr_1(z_2)+p_2\rr_2(z_2)>0
\] 
Proposition \ref{prop_lambdanul} implies that the long-term growth rate of any invariant measure, with a support bounded away from the extinction set, is equal to zero. In particular, this proposition applies to the subsystems of species $1$ and $2$, and to the Dirac measures $\delta_{z_1}$ and $\delta_{z_2}$, respectively. Therefore $\rr_1(z_1)=\rr_2(z_2)=0$ with probability one. This implies that $\rr_1(z_1)=\rr_2(z_2)=0$. Hence, the persistence criterion simplifies to 
\[
\rr_1(z_2)>0 \mbox{ and }\rr_2(z_1)>0.
\]
 In other words, persistence occurs if each species has a positive invasion rate.  
 
 To get some biological intuition from the mutual invasibility criterion, we consider the limiting cases of relatively sedentary populations (i.e. $d\approx 0$) and highly dispersive populations (i.e. $d\approx 1$). In these cases, we get explicit expressions for the realized per-capita growth rates $\rr_i(z_j)$ that simplify further for short-lived (i.e. $\varepsilon \approx 1$) and long-lived (i.e. $\varepsilon \approx 0$) species. Our analytical results are illustrated numerically in Fig.~\ref{fig:BR-ZH}.

\subsubsection{Relatively sedentary populations}
When populations are completely sedentary (i.e. $d=0$ ), the projection matrix $A_2(\xi,z_1)$ corresponding to species $2$ trying to invade a landscape monopolized by species $1$ reduces to a diagonal matrix whose $r$-th diagonal entry equals 
\[
\varepsilon \frac{\xi_t^2(r)}{\xi_t^1(r)} +1-\varepsilon
\]
The dominant Lyapunov exponent in this limiting case is given by  \[\rr_2(z_1)= \max_r \E\left[\log \left(\varepsilon \frac{\xi_t^2(r)}{\xi_t^1(r)} +1-\varepsilon\right)\right].\]  Proposition 3 from \citet{tpb-09} implies that $\rr_2(z_1)$ is a continuous function of $d$. Consequently, $\rr_2(z_1)$ is positive for small $d>0$ provided that $\E\left[\log \left(\varepsilon \frac{\xi_t^2(r)}{\xi_t^1(r)} +1-\varepsilon\right)\right]$ is strictly positive for some patch $r$. Similarly, $\rr_1(z_2)$ is positive for small $d>0$ provided that $\E\left[\log \left(\varepsilon \frac{\xi_t^1(r)}{\xi_t^2(r)} +1-\varepsilon\right)\right]$ is strictly positive for some patch $r$. Thus, coexistence for small $d>0$ occurs if 
\[
\max_r \E\left[\log \left(\varepsilon \frac{\xi_t^2(r)}{\xi_t^1(r)} +1-\varepsilon\right)\right]>0\mbox{ and }
\max_r \E\left[\log \left(\varepsilon \frac{\xi_t^1(r)}{\xi_t^2(r)} +1-\varepsilon\right)\right]>0.
\]

When $\varepsilon\approx 1$ or $\varepsilon \approx 0$, we get more explicit forms of this coexistence condition. When the populations are short-lived ($\varepsilon \approx 1$), the coexistence condition simplifies to 
$\E[\log \xi_t^1(r)]>\E[\log \xi_t^2(r)]$ and $\E[\log \xi_t^2(s)]>\E[\log \xi_t^1(s)]$ for some patches $r\neq s$. Coexistence requires that each species has at least one patch in which they have a higher geometric mean in their reproductive output. 

When the populations are long lived ($\varepsilon \approx 0$) and relatively sedentary ($d\approx 0$), the coexistence condition  is
\[
\E\left[ \frac{\xi_t^2(r)}{\xi_t^1(r)}\right]>1 \mbox{ and }  
\E\left[ \frac{\xi_t^1(s)}{\xi_t^2(s)}\right]>1 
 \]
 for some patches $r,s$. Unlike short-lived populations, it is possible that both inequalities are satisfied for the same patch. For example, when the log-fecundities $\log \xi_t^i(r)$ are independent and normally distributed with mean $\mu_i(r)$ and variance $\sigma_i^2(r)$, the coexistence conditions is
 \[
 \mu_2(r) -\mu_1(r) + \frac{\sigma_1^2(r)+\sigma_2^2(r)}{2}>1
 \]
 for some patch $r$ and 
\[
 \mu_1(s)-\mu_2(s) + \frac{\sigma_1^2(s)+\sigma_2^2(s)}{2}>1 
 \]
 for some patch $s$. Both conditions can be satisfied in the same patch $r$ provided that $\sigma_1(r)$ or $\sigma_2(r)$ is sufficiently large. 

\begin{bio}
For relatively sedentary populations, coexistence occurs if each species has a patch it can invade when rare. If the populations are also short-lived, coexistence requires that each species has a patch in which it is competitively dominant.  Alternatively, if populations are also long-lived, regional coexistence may occur if species coexist locally within a patch due to the storage effect. For uncorrelated and log-normally distributed fecundities, this within-patch storage effect occurs if the difference in the mean log-fecundities is sufficiently smaller than the net variance in the log-fecundities.  
\end{bio}

\subsubsection{Well-mixed populations}

For populations that are highly dispersive (i.e. $d=\frac{m-1}{m}$), the spatially explicit Lottery model reduces  to a spatially implicit model where 
\[
\begin{aligned}
\rr_1(z_2)& = \E \left[\log \left( \varepsilon  \frac{\sum_r \xi_t^1(r)}{\sum_r \xi_t^2(r)}+1-\varepsilon\right)\right] \mbox{ and} \\
\rr_2(z_1)& = \E \left[\log \left( \varepsilon  \frac{\sum_r \xi_t^2(r)}{\sum_r \xi_t^1(r)}+1-\varepsilon\right)\right]. \\\end{aligned}
\]
For short lived populations ($\varepsilon=1$), these long-term growth rates simplify to 
\[
\begin{aligned}
\rr_1(z_2)& = \E \left[\log  \sum_r \xi_t^1(r)\right]- \E \left[\log  \sum_r \xi_t^2(r)\right] \\
\rr_2(z_1)& = \E \left[\log  \sum_r \xi_t^2(r)\right]- \E \left[\log  \sum_r \xi_t^1(r)\right] \\
\end{aligned}
\]
Since $\rr_1(z_2)=-\rr_2(z_1)$, the persistence criterion that $\rr_1(z_2)>0$ and $\rr_2(z_1)>0$ is not satisfied generically. 

Alternatively, for long-lived populations ($\varepsilon\approx 0$), the invasion rates of well-mixed populations becomes to first-order in $\varepsilon>0$:
\[
\begin{aligned}
\rr_1(z_2) & \approx \varepsilon  \left(\E  \left[\frac{ \sum_r \xi_t^1(r)}{ \sum_r \xi_t^2(r)}\right] -1\right)\\
\rr_2(z_1)& \approx \varepsilon \left(\E  \left[\frac{ \sum_r \xi_t^2(r)}{  \sum_r \xi_t^1(r)}\right]-1\right)
\end{aligned}
\]
We conjecture that this coexistence condition is less likely to be met than the coexistence condition for relatively sedentary populations. To see why, consider a small variance approximation of these invasion rates. Assume that $\xi_t^i = \bar {\xi^i} +\eta Z^i_t(r)$
where $Z^i_t(r)$ are independent and identically distributed in $i,r$ and $\E[Z^i_t(r)]=0$ for all $i,r$. Let $\sigma^2 = \E[(Z_t^i(r))^2]$. A second order Taylor's approximation in $\eta$ yields the following approximation of the (rescaled) long-term growth rates for well-mixed populations
\begin{equation}\label{storage1}
\begin{aligned}
\E  \left[\frac{ \sum_r \xi_t^1(r)}{ \sum_r \xi_t^2(r)}\right] -1 &\approx 
\frac{\bar \xi^1}{\bar \xi^2}+\frac{\bar\xi^1\sigma^2/n}{(\bar\xi^2)^3}-1
\end{aligned}
\end{equation}
and the following approximation for relatively sedentary populations 
\begin{equation}\label{storage2}
\begin{aligned}
\max_r\E  \left[\frac{\xi_t^1(r)}{\xi_t^2(r)}\right] -1 &\approx 
\frac{\bar \xi^1}{\bar \xi^2}+\frac{\bar\xi^1\sigma^2}{(\bar\xi^2)^3}-1
\end{aligned}
\end{equation}
Since \eqref{storage2} is greater than \eqref{storage1}, persistence is more likely for relatively sedentary populations in this small noise limit. 

\begin{bio}{Short-lived and highly dispersive competitors do not satisfy the coexistence condition. Long-lived and highly-dispersive competitors may coexist. However, coexistence appears to be less likely than for sedentary populations as spatial averaging reduces the temporal variability experienced by both populations and, thereby, weakens the storage effect.}\end{bio}

\begin{figure}[t]
  \centering
  \subfloat[]{\includegraphics[width= .50  \linewidth]{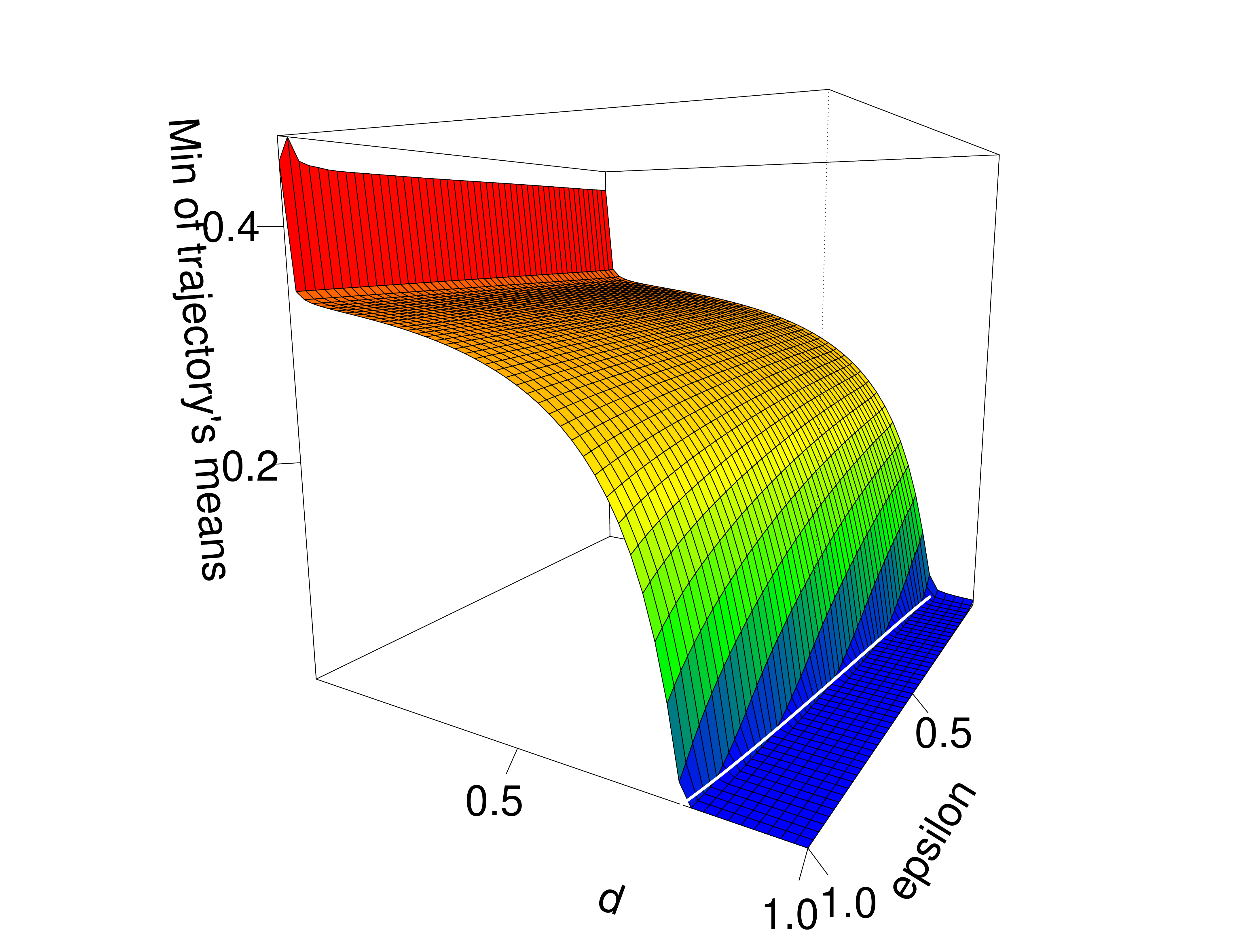}}%
  \subfloat[]{\includegraphics[width= .50  \linewidth]{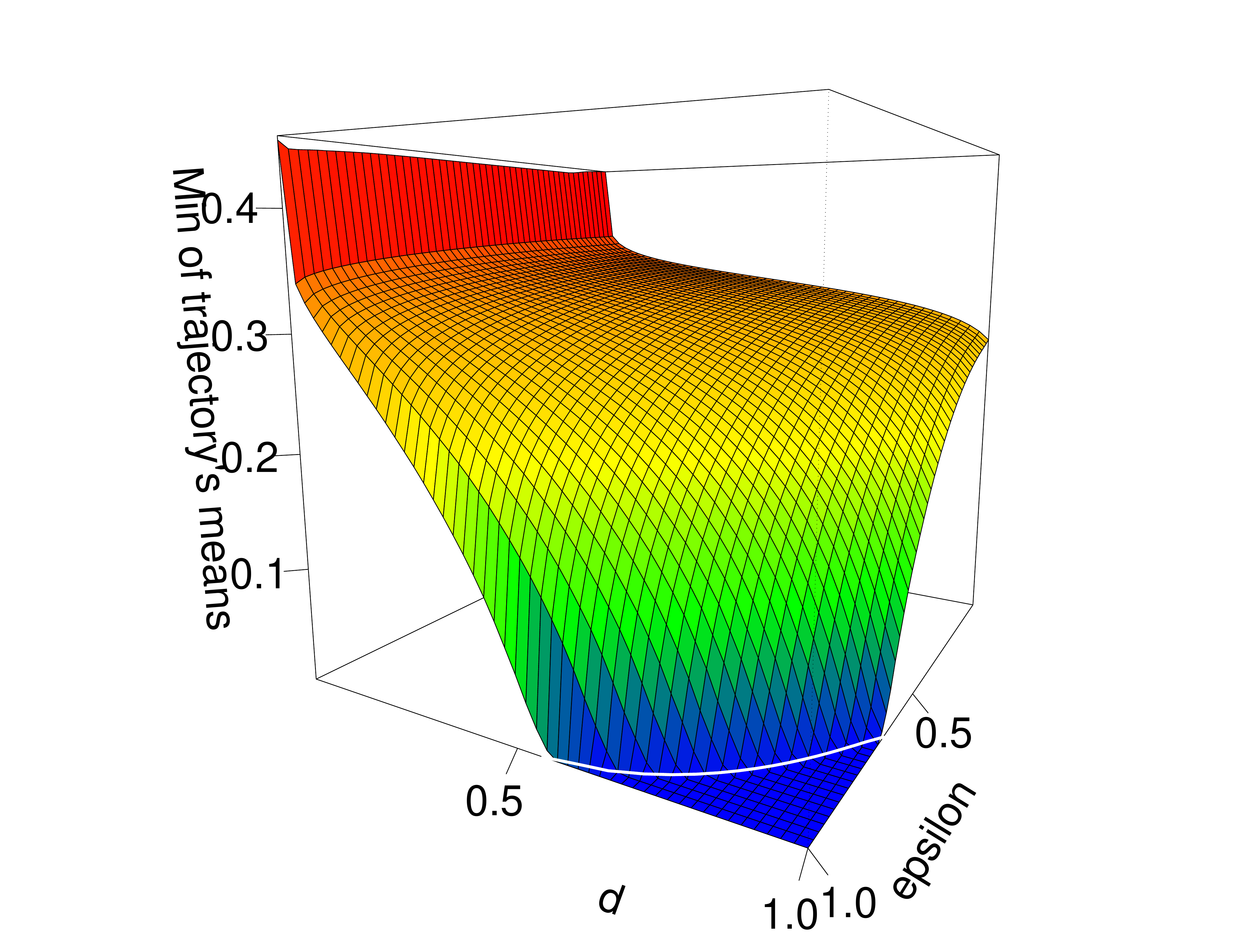}}\\
  \caption{ Effects of dispersal and survival on coexistence of two species. The log-fecundities $\xi^i$ are independent and normally distributed with means $\mu_1=(5,0,5,0,\dots,0)$, $\mu_2=(0,5,0,\dots,0)$ and variances $\sigma^2_1=\sigma^2_2=(1,\dots,1)$ for (I) and $(3,\dots,3)$ for (II). The white lines correspond to the zero-lines of the respective Lyapunov exponents.}\label{fig:BR-ZH}
\end{figure}

\subsection{The rock-paper-scissor game}

In the last few years the rock-paper-scissor game, which might initially seem to be of purely theoretical interest, has emerged as playing an important role in describing the behavior of various real-world systems. These include the evolution of alternative male mating strategies in the side-blotched lizard \emph{Uta Stansburiana} \citep{sinervo-lively-96}, the \emph{in vitro} evolution of bacterial populations \citep{kerr-etal-02,nahum-etal-11}, the \emph{in vivo} evolution of bacterial populations in mice \citep{kirkup-riley-04}, and the competition between genotypes and species in plant communities~\citep{lankau-strauss-07,cameron-etal-09}. More generally, the rock-scissors-paper game -- which is characterized by three strategies R, P and S, which satisfy the non-transitive relations: P beats R (in the absence of S), S beats P (in the absence of R), and R beats S (in the absence of P) -- serves as a simple prototype for studying the dynamics of more complicated non-transitive systems  \citep{buss-jackson-79,paquin-adams-83,may-leonard-75,jmb-97,oikos-04,vandermeer-pascual-05,allesina-levine-11}. Here, we examine a simple spatial version of this evolutionary game in a fluctuating environment. 

Let $x^1_t(r),x^2_t(r)$, and $x^3_t(r)$ be the frequencies of the rock, paper, and scissor strategies in patch $r$, respectively. All strategies in patch $r$ receive a basal payoff of $a^r_t$ at time $t$. Winners in an interaction in patch $r$ receive a payoff of $b_t^r$ while losers pay a cost $c_t^r$. Thus, the payoff matrix \eqref{matrix-general-model} for the interacting populations in patch $r$ is
\[
\Xi_t(r) = a_t^r +\begin{pmatrix} 0& -c_t^r &  b_t^r \\
b_t^r & 0& -c_t^r \\
-c_t^r & b_t^r & 0 
\end{pmatrix}.
\]
We continue to assume that the fraction of progeny dispersing from patch $s$ to patch $r$ equals $d/(m-1)$ for $s\neq r$ and $1-d$ otherwise. 

Our first result about the rock-paper-scissor model is that it exhibits a heteroclinic cycle in $S_0$ between the three equilibria $E_1=(1,\dots,1,0,\dots,0,0,\dots,0)$,
$E_2=(0,\dots,0,1,\dots,1,0,\dots,0)$ and $E_3=(0,\dots,0,0,\dots,0,1,\dots,1)$. For two vectors $x=(x_1,\dots,x_n)$, $y=(y_1,\dots,y_n)$, we write $x>y$ if $x_i\ge y_i$ for all $i$ with at least one strict inequality. 

\begin{proposition}~\label{prop:hetero} Assume $d,\varepsilon \in (0,1]$ and $a_t^r>c_t^r$, $\log a_t^r, \log c_t^r, \log b_t^r \in [-M,M]$ with probability one for some $M>0$. If $x^1_0>(0,\dots,0)$ and $x^2_0>(0,\dots,0)$ and $x_0^3=(0,\dots,0)$, then $\lim_{t\to\infty} x_t = E_2$ with probability one. If $x^1_0>(0,\dots,0)$ and $x^3_0>(0,\dots,0)$ and $x_0^2=(0,\dots,0)$, then $\lim_{t\to\infty} x_t = E_1$ with probability one. If $x^2_0>(0,\dots,0)$ and $x^3_0>(0,\dots,0)$ and $x_0^1=(0,\dots,0)$, then $\lim_{t\to\infty} x_t = E_3$ with probability one.  
\end{proposition}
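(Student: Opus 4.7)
The plan is to exploit the cyclic symmetry of the payoff matrix $\Xi_t(r)$ and prove only the first claim in detail: starting from $x^3_0=(0,\dots,0)$ and $x^1_0,x^2_0>(0,\dots,0)$, show $x_t\to E_2$ almost surely. First, the face $F=\{x\in S:x^3=0\}$ is forward invariant since the numerator driving $x^{3r}_{t+1}$ in \eqref{eq:games} vanishes when $x^3_t\equiv 0$. Because every entry of $D$ is positive (as $d\in(0,1]$), the orbit immediately enters the interior of $F$, and site occupancy is conserved, giving $x^{1r}_t+x^{2r}_t=1$ for every patch $r$ and every $t\ge 1$. Setting $u_t=x^1_t\in(0,1)^n$ reduces the dynamics on $F$ to $u_{t+1}=u_t A(\xi_{t+1},u_t)$ where
\[
A(\xi,u)_{sr}=\varepsilon\,\frac{d_{sr}\bigl(a^s-c^s(1-u^s)\bigr)}{\sum_{s'}d_{s'r}\bigl(a^{s'}+(b^{s'}-c^{s'})u^{s'}(1-u^{s'})\bigr)}+(1-\varepsilon)\delta_{rs},
\]
and the fixed points $u=\mathbf{0}$, $u=\mathbf{1}$ correspond respectively to $E_2$ and $E_1$.

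The core of the argument is a column-sum identity. A direct computation gives
\[
\sum_{s}A(\xi,u)_{sr}=1-\varepsilon\,\frac{\sum_{s}d_{sr}(1-u^s)\bigl[c^s(1-u^s)+b^s u^s\bigr]}{\sum_{s}d_{sr}\bigl[a^s+(b^s-c^s)u^s(1-u^s)\bigr]}.
\]
Under the bounds $\log a^s_t,\log b^s_t,\log c^s_t\in[-M,M]$, the ratio on the right is positive and bounded away from $0$ on every compact subset of $\{u\ne\mathbf{1}\}$: every $d_{sr}$ is positive and the factor $c^s(1-u^s)+b^s u^s$ is positive for $u^s\in[0,1]$. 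Hence the induced $\ell^1$-operator norm $\|A(\xi,u)\|_1=\max_r\sum_s A(\xi,u)_{sr}$ is strictly less than $1$ on $\{u\ne\mathbf{1}\}$, and submultiplicativity along random products, combined with the ergodic theorem, yields for every invariant probability measure $\mu$ on $F$ with $\mu(\{u=\mathbf{1}\})<1$
\[
\rr_1(\mu)\le\E_\mu\bigl[\log\|A(\xi,u)\|_1\bigr]<0.
\]
In particular $\rr_1(\delta_{E_2})<0$. Combined with Proposition~\ref{prop_lambdanul}, which forces $\rr_1(\mu)=0$ whenever $\mu$ is ergodic and supported on $\{u>\mathbf{0}\}$, the only ergodic invariant measures on $F$ are $\delta_{E_1}$ and $\delta_{E_2}$.

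The symmetric column-sum calculation at $E_1$ for species $2$ produces column sums of the form $1+\varepsilon\frac{\sum_s d_{sr}b^s_t}{\sum_{s'}d_{s'r}a^{s'}_t}\ge 1+\varepsilon e^{-2M}$, so $\rr_2(\delta_{E_1})>0$ --- the precise statement that paper invades a rock monoculture. The invasion-rate criterion of Theorem~\ref{thm:main} applied to the two-species subsystem on $F$ then rules out $\delta_{E_1}$ as a weak-$\ast$ accumulation point of the empirical measures $\Pi^x_t$ starting from the interior of $F$, so $\Pi^x_t\to\delta_{E_2}$ almost surely. To upgrade to pointwise convergence, appeal again to $\rr_1(\delta_{E_2})<0$: Ruelle's stochastic Perron--Frobenius theorem, as used in the proof of Theorem~\ref{thm:single}, yields local stochastic stability of $E_2$, so once a trajectory enters a small enough neighbourhood of $E_2$ it converges to $E_2$ a.s. Combined with $\Pi^x_t\to\delta_{E_2}$, this gives $x_t\to E_2$, and the other two cases follow by cyclic relabelling.

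The main obstacle is that the entries of $A(\xi,u)$ are not monotone in $u$, which precludes a direct application of Theorem~\ref{thm:single}. The column-sum identity is the substitute, reducing the negativity of $\rr_1$ to a uniform contraction bound that depends only on $\varepsilon$ and the strict positivity of $b^s$ and $c^s$. The secondary issue --- passing from convergence of empirical measures to pointwise convergence of the orbit --- is handled by the strict inequality $\rr_1(\delta_{E_2})<0$ and the resulting local stability of $E_2$.
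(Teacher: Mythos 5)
The paper's proof is one paragraph and completely different in spirit: it observes that on the face $\{x^3=0\}$ the payoffs give $A_2(\xi,X)\gg e^{\eta}A_1(\xi,X)$ entrywise with a uniform $\eta>0$ (paper always out-competes rock), and then uses Proposition~\ref{prop_HS} and Corollary~\ref{cor_lambdanegative} to bound $\limsup_t\frac{1}{t}\log\|X^1_t\|\le -\eta<0$, whence $\|X^1_t\|\to 0$ and $x_t\to E_2$. Your column-sum identity is a legitimately different route, and in fact it already contains a direct finish that you did not take: the column sums of $A_1(\xi,u)$ are $\le 1$, so the induced operator norm of right-multiplication on row vectors with the $\ell^\infty$ norm (note: $\max_r\sum_s A_{sr}$ is the $\ell^\infty$, not $\ell^1$, operator norm for right-acting matrices) gives $\|u_{t+1}\|_\infty\le\|u_t\|_\infty$, hence the orbit is confined after one step to a compact set $\{u:\|u\|_\infty\le\|u_1\|_\infty<1\}$ disjoint from $E_1$, on which the column sums are $\le 1-\delta$ uniformly, and $\|u_t\|_\infty\to 0$ geometrically. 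That is an elementary, correct argument.

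Instead of finishing this way, you detour through invariant measures and the persistence machinery, and this detour has two genuine gaps. First, you invoke ``the invasion-rate criterion of Theorem~\ref{thm:main} applied to the two-species subsystem'' to rule out $\delta_{E_1}$ as a weak-$\ast$ limit. But the two-species subsystem on $F$ does \emph{not} satisfy the hypothesis of Theorem~\ref{thm:main}: you yourself compute $\rr_1(\delta_{E_2})<0$, so no positive $p_1,p_2$ can make $p_1\rr_1(\mu)+p_2\rr_2(\mu)>0$ for all boundary ergodic $\mu$. Theorem~\ref{thm:main} gives persistence, not exclusion, and cannot be applied here. (The intended argument --- that every weak-$\ast$ limit $\mu$ of the occupation measures satisfies $\rr_i(\mu)\le 0$ by Corollary~\ref{cor_lambdanegative}, and hence cannot charge $\delta_{E_1}$ since $\rr_2(\delta_{E_1})>0$ --- is the one actually used inside the proof of Theorem~\ref{thm:main2}, not the statement of Theorem~\ref{thm:main}.) Second, the final ``upgrade to pointwise convergence'' by appealing to ``Ruelle's stochastic Perron--Frobenius theorem, as used in the proof of Theorem~\ref{thm:single}'' does not work: that proof crucially uses the hypothesis that the entries of $A(\xi,x)$ are non-increasing in $x$ to bound the nonlinear product by the linear product at $0$, and you acknowledge in your last paragraph that this monotonicity fails here. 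There is no ``local stochastic stability'' statement in the paper that you can cite. As it stands, the passage from $\Pi^x_t\to\delta_{E_2}$ to $x_t\to E_2$ is asserted but not proved; either close it with the Birkhoff average of $\log(\text{column-sum norm})$ along the orbit, or, more simply, discard the invariant-measure detour entirely and use the monotone-norm argument sketched above.
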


\begin{proof} It suffices to prove the assertion for the case in which $x^1_0>(0,\dots,0)$ and $x^2_0>(0,\dots,0)$ and $x_0^3=(0,\dots,0)$. Let $\bone = (1,\dots,1) \in \R^n$. Our assumptions $b_t^r>0$ and $a_t^r>c_t^r>0$ imply there exists $\eta>0$ such that  $A_2(\xi_{t+1},X_t)\gg \exp(\eta) A_1(\xi_{t+1},X_t)$  with probability one. It follows that 
\[
\begin{aligned}
\limsup_{t\to\infty} \frac{1}{t} \log \|X^1_t\| & = 
\limsup_{t\to\infty} \frac{1}{t} \log \|X_0^1A_1 (\xi_1,X_0)\dots A_1(\xi_t,X_{t-1})  \|\\
&\le \limsup_{t\to\infty} \frac{1}{t} \log \|X_0^1A_2(\xi_1,X_0)\dots A_2(\xi_t,X_{t-1})\|-\eta\\ 
&= \limsup_{t\to\infty} \frac{1}{t} \log \|\bone A_2(\xi_1,X_0)\dots A_2(\xi_t,X_{t-1})\|-\eta \\
&\le -\eta
\end{aligned}
\]
where the last two lines follow from  Proposition~\ref{prop_HS} and its Corollary~\ref{cor_lambdanegative}. Hence, $\lim_{t\to\infty} \|X_t^1 \| =0$ as claimed. 
\end{proof}

Proposition~\ref{prop:hetero} implies that for any $x\in S_0$ and $1\le i \le 3$, $\rr_i(x)=\rr_i(E_j)$ for some $1\le j \le 3$. Hence, the persistence criterion of Theorem~\ref{thm:one} requires $p_1,p_2,p_3>0$ such that  
\[
\sum_i p_i \rr_i (E_j) > 0 \qquad \mbox{for all }1\le j\le 3.
\]
A standard algebraic calculation shows that this persistence criterion is satisfied if and only if 
\[
\rr_2(E_1)\rr_3(E_2)\rr_1(E_3)>
-\rr_3(E_1)\rr_1(E_2)\rr_2(E_3)
\]
i.e. the product of the positive invasion rates is greater than the absolute value of the product of the negative invasion rates. The symmetry of our model implies that all the positive invasion rates are equal and all the negative invasion rates are equal. Hence, coexistence requires
\[
\rr_2(E_1)>-\rr_3(E_1).
\]

As for the case of two competing species, we can derive more explicit coexistence criteria when the populations are relatively sedentary (i.e. $d \approx 0$) or the populations are well-mixed (i.e. $d=\frac{m}{m-1}$). For relatively sedentary populations, coexistence requires 
\[
\max_r \E\left[
\log\left(
1-\varepsilon+ \varepsilon \frac{a_t^r +b_t^r}{a_t^r}\right)
\right]> 
-\max_r \E\left[
\log\left(
1-\varepsilon+ \varepsilon \frac{a_t^r -c_t^r}{a_t^r}\right)
\right].
\] 
For long-lived populations, this coexistence criterion simplifies further to 
\[
\max_r \E\left[
 \frac{b_t^r}{a_t^r}
\right]> 
\min_r \E\left[
\frac{c_t^r}{a_t^r}
\right].
\]
Alternatively, when the populations are well-mixed, coexistence requires
\[
\E\left[
\log\left(
1-\varepsilon+ \varepsilon \frac{\sum_r a_t^r +b_t^r}{\sum_r a_t^r}\right)
\right]> 
-\max_r \E\left[
\log\left(
1-\varepsilon+ \varepsilon \frac{\sum_r a_t^r -c_t^r}{\sum_r a_t^r}\right)
\right].
\] 
For long-lived populations, this coexistence criterion simplifies further to 
\[
 \E\left[
 \frac{\sum_r b_t^r}{\sum_r a_t^r}
\right]> 
\min_r \E\left[
\frac{\sum_r c_t^r}{\sum_r a_t^r}
\right].
\]
\begin{bio}{For relatively sedentary populations, coexistence only requires that average benefits (relative to the base payoff) in one patch is greater than the average costs (relative to the base payoff) in another patch. Negative correlations between benefits $b_t^r$  and basal payoffs $a_t^r$  promote coexistence. For highly dispersive species whose base payoffs are constant in space in time (i.e. $a_t^r=a$ for all $t,r$), coexistence requires the spatially and temporally averaged benefits of interactions exceed the spatially and temporally averaged costs of interactions. 
 }\end{bio}

\section{Discussion}
Understanding the conditions that ensure the long-term persistence of interacting populations is of fundamental theoretical and practical importance in population biology. For deterministic models, coexistence naturally corresponds to an attractor bounded away from extinction. Since populations often experience large perturbation, many authors have argued that the existence of a global attractor (i.e. permanence or uniform persistence) may be necessary for long-term persistence~\citep{hofbauer-sigmund-98,smith-thieme-11}. Most populations experience stochastic fluctuations in their demographic parameters~\citep{may-73} which raises the question  \cite[pg.621]{may-73} ``How are the various usages of the term [persistence] in deterministic and stochastic circumstances related?'' Only recently has it been shown that the deterministic criteria for permanence extend naturally to criteria for stochastic persistence in stochastic difference and differential equations~\citep{benaim-etal-08,jmb-11}. These criteria assume that the populations are unstructured (i.e. no differences among individuals) and environmental fluctuations are temporally uncorrelated. However, many populations are structured as highlighted in a recent special issue in \emph{Theoretical Population Biology}~\citep{tuljapurkar-etal-12} devoted to this topic. Moreover, many environmental factors such as temperature and precipitation exhibit temporal autocorrelations~\citep{vasseur-yodzis-04}.  Here, we prove that by using long-term growth rates when rare, the standard criteria for persistence extend to models of interacting populations experiencing correlated as well as uncorrelated environmental stochasticity, exhibiting within population structure, and any form of density-dependent feedbacks. To illustrate the utility of these criteria, we applied them to persistence of predator-prey interactions in auto-correlated environments, structured populations with overcompensating density-dependence, and competitors in spatially structured environments.   

\citet{mandelbrot-82} proposed that environmental signals commonly found in nature may be composed of frequencies $f$ that scale according to an inverse power law $1/f^\beta$. With this scaling, uncorrelated (i.e. white) noise corresponds to $\beta=0$, positively auto-correlated (i.e. red or brown) noise corresponds to $\beta>0$, and negatively auto-correlated (e.g. blue) noise corresponds to $\beta<0$. Many environmental signals important to ecological processes including precipitation, mean  air temperature, degree days, and seasonal indices exhibit positive $\beta$ exponents~\citep{vasseur-yodzis-04}. Consistent with prior work on models with compensating density dependence~\citep{roughgarden-75,johst-wissel-97,petchey-00}, we found that positive autocorrelations in the maximal per-capita growth rate of species increases the long-term variability in their densities. If this species is the prey for a predatory species, we showed that this increased variability in prey densities reduced a predator's realized per-capita growth rate when rare. Hence,  positive autocorrelations may impede predator-prey coexistence. In contrast, negative autocorrelations, possibly due to a biotic feedback between the prey species and its resources, may facilitate coexistence by reducing variation in prey densities and, thereby, increase the predator's growth rate when rate. These results are qualitatively consistent with prior results that positive-autocorrelations in predator-prey systems can increase variation in prey and predator densities when they coexist~\citep{collie-spencer-94,ripa-ives-03}. Specifically, in a simulation study of predator-prey interactions in pelagic fish stocks, \citet{collie-spencer-94} found reddened noise resulted in predator-prey densities ``to shift between high and low equilibrium levels'' and, thereby, increase variability in their abundances. Similarly, using linear approximations, \citet{ripa-ives-03} found that environmental autocorrelations increased the amplitude of populations cycles. All of these results, however, stem from the per-capita growth rate of the predator being an increasing, concave function of prey density. Changes in concavity (e.g. a type-III functional response) could produce an opposing result: increased variability in prey densities may facilitate predator invasions. A more detailed analysis of this alternative is still needed.

Classical stochastic demography theory~\citep{tuljapurkar-90,boyce-etal-06} considers population growth rates in the absence of density-dependent feedbacks. Our results for populations experiencing negative-density dependence show that stochastic persistence depends on the population's long-term growth rate $\rr(0)$ when rare. Hence, applying stochastic demography theory to $\rr(0)$ provides insights into how environmental stochasticity interacts with  population structure to determine stochastic persistence. For example, a fundamental result from stochastic demography is that positive, within-year correlations between vital rates decreases $\rr(0)$ and thereby may thwart stochastic persistence, a result consist with our analysis of the stochastic LPA model for flour beetle dynamics. Stochastic demography theory also highlights that temporal autocorrelations can have subtle effects on $\rr(0)$. In particular, for a density-independent version of the metapopulation model considered here, \citet{prsb-10} demonstrated that positive temporal autocorrelations can increase the metapopulation growth rate $\rr(0)$ when rare for partially mixing populations, a prediction consistent with laboratory experiments~\citep{matthews-gonzalez-07} and earlier theoretical work~\citep{roy-etal-05}. In contrast, \citet{tuljapurkar-haridas-06} found that negative temporal autocorrelations between years with and without fires increased the realized per-capita growth rate $\rr(0)$ for models of the endangered herbaceous perennial \emph{Lomatium bradshawii}. Our results imply that these results also apply to models accounting for density-dependence. 

Spatial heterogeneity of populations has been shown theoretically and empirically to have an effect on coexistence of competitive species (see e.g. \cite{amarasekare-03} or \cite{chesson-00b} for a review). Coexistence requires species to exhibit niche differentiation that decrease the interspecific competition \citep{chesson-00a}. In a fluctuating environment, these niches can arise as differential responses to temporal variation \citep{mcgehee-armstrong-77,armstrong-mcgehee-80, chesson-00a, chesson-00b}, spatial variation \citep{may-hassell-81,chesson-00a, chesson-00b,snyder-chesson-03}, or a combination of both forms of variation \citep{chesson-85,snyder-07,snyder-08}. For the spatial lottery model where species disperse between a finite number of patches and compete for micro sites within these patches, our coexistence criterion applies, and reduces to the \emph{mutual invasibility} criterion. Although \citet{chesson-85} proved this result in the limit of an infinite number of patches with temporally uncorrelated fluctuations, our result is less restrictive as the number of patches can be small and temporal fluctuations can be autocorrelated. Using this mutual invasibility criterion, we derive explicit coexistence criteria for relatively sedentary populations and highly dispersive populations. In the former case, coexistence occurs if each species has a patch it can invade when rare. For short-lived populations, coexistence requires that each species has a patch in which it is competitively dominant. Alternatively, for long-lived populations, regional coexistence may occur if species coexist locally within a patch due to the storage effect~ \citep{chesson-warner-81,chesson-82,chesson-94} in the \emph{one patch} case. For highly dispersive populations, the coexistence criterion is only satisfied if populations exhibit overlapping generations, a conclusion consistent with \citep{chesson-85}. By providing the first mathematical confirmation of the mutual invasibility criterion for the spatial lottery model with spatial and temporal variation, our result opens the door for mathematically more rigorous investigations in understanding the relative roles of temporal variation, spatial heterogeneity, and dispersal on coexistence. 

For lottery models with three or more species, persistence criteria are more subtle and invasibility of all sub communities isn't always sufficient~\citep{may-leonard-75}. For example, in rock-paper-scissor communities where species $2$ displaces species $1$, $3$ displaces $2$ , and $1$ displaces $3$, all sub communities which consist of a single species are invasible by another, but coexistence may not occur~\citep{hofbauer-sigmund-98,tpb-13}. For the deterministic models, coexistence requires that the geometric mean of the benefits of pair-wise interactions exceeds the costs of these interactions~\citep{tpb-13}. \citet{jmb-11} and \citet{tpb-13} studied these interactions in models \emph{separately} accounting for temporal fluctuations or spatial heterogeneity. In both cases, temporal heterogeneity or spatial heterogeneity can individually promote coexistence . Here we extend these result to intransitive communities experiencing both spatial heterogeneity and temporal fluctuations, thereby unifying this prior work. Our persistence criterion reduces to: the geometric mean of the positive long-term, low-density growth rates of each species (e.g. invasion rate of rock to scissor) is greater than the geometric mean of the absolute values of the negative, long-term, low-density growth rates (e.g. invasion rate of rock to paper). For relatively sedentary populations, coexistence only requires that average benefits (relative to the base payoff) in one patch is greater than the average costs in another patch. Moreover, negative correlations between benefits and basal payoffs promote coexistence. For highly dispersive species, coexistence requires the spatially and temporally averaged benefits of interactions exceed the spatially and temporally averaged costs of interactions, assuming that base payoffs are constant in space and time.

The theory of stochastic population dynamics is confronted with many,  exciting challenges. First, our persistence criterion requires every sub community (as represented by an ergodic invariant measure supporting a subset of species) is invasible by at least one missing species. While this invasibility condition in general isn't sufficient for coexistence, understanding when it is sufficient remains a challenging open question. For example, it should be sufficient for most food chain models (see the argument for deterministic models in ~\citep{jde-00}), non-interacting prey species sharing a common predator (see the argument for deterministic models in ~\citep{jde-04}), and species competing for a single resource species. However, finding a simple criterion underlying these examples is lacking. Second, while we have provided a sufficient condition for stochastic persistence, it is equally important to develop sufficient conditions for the asymptotic exclusion of one or more species with positive probability. In light of the deterministic theory, a natural conjecture in this direction is the following: if there exist non-negative weights $p_1,\dots, p_k$ such that 
\[
\sum_i p_i \rr_i(x)<0
\]
for every population state $x$ in the extinction set $S_0$, then there exist positive initial conditions such that $X_t$ asymptotically approaches $S_0$ with positive probability. \citet{benaim-etal-08} proved a stronger version of this conjecture  for stochastic differential equation models where the diffusion term is small and the populations are unstructured. However, it is not clear whether there methods carry over to models with ``large'' noise or population structure. Another important challenge is relaxing the compactness assumption \textbf{H4} for our stochastic persistence results. While this assumption is biologically realistic (i.e. populations always have an upper limit on their size), it is theoretically inconvenient as many natural models of environmental noise have non-compact distributions (e.g. log-normal or gamma distributions). One promising approach developed by \citet{tpb-09} for structured models of single species is identifying Lyapunov-like functions that decrease on average when population densities get large. Finding sufficient conditions for ``stochastic boundedness'' is only half of the challenge, extending the stochastic persistent results to these ``stochastically bounded'' models will require additional innovations.  Finally, and most importantly, there is a desperate need to develop more tools to analytically approximate or directly compute the long-term growth rates $\rr_i(\mu)$ when rare. One promising approach is \citet{pollicott-10}'s recently derived power series representation of Lyapunov exponents.

\section{Proof of Theorems \ref{thm:one} and \ref{thm:main}}\label{sec-proof}

This Appendix proves Theorem~\ref{thm:main} from which Theorem~\ref{thm:one} follows.  Section \ref{partA} and \ref{partB} lead to the statement of Theorem \ref{thm:main2} which is equivalent to Theorem \ref{thm:main}. The rest of the appendix is dedicated to the proof of Theorem \ref{thm:main2}. More specifically, in section~\ref{partA}, we recast our stochastic model \eqref{DYN1} and our main hypothesis in Arnold's framework of random dynamical system~\citep{arnold-98,bhattacharya-majumdar-07}. The purpose of this recasting is to write explicitly the underlying dynamics of the matrix products (\ref{matrixproducts}) in order to use  the {Random Perron-Frobenius Theorem} (\cite{ruelle-79b}), a key element in the proof of Theorem~\ref{thm:main}. The Random Perron-Frobenius Theorem requires this underlying dynamics to be invertible which is, a priori, not the case here. Therefore, in section~\ref{partB}, we extend the underlying dynamics to an invertible dynamic on the trajectory space and state Theorem~\ref{thm:main2} which is equivalent to Theorem~\ref{thm:main}. Working in the Arnold's framework and extending the dynamic to the trajectory space requires three forms of  notation (i.e. main text, random dynamical system and trajectory space) that are summarized in Table~\ref{tab:notation}. In section~\ref{partC}, we prove basic results about the average per-capita growth rates $\rr_i$. In  section~\ref{partD}, we prove several basic results about occupational measures and their weak* limit points. These basic results are proven for the extended state space. Proposition~\ref{prop:equality-lyap} and Lemma~\ref{lemma_mes_emp} translate these results to non-extended state space.  A proof of Theorem~\ref{thm:main2} is provided in section~\ref{partE}.

\begin{table}[h!]
\caption{Notation for the probabilistic, RDS, and trajectory space formulations of the population dynamics. ``='' denotes equivalence when $\omega$ is randomly drawn from $\Omega$.}
\label{tab:notation}
\begin{tabular}{ccccc}

\hline \hline
Probabilistic formulation & & RDS formulation  & & Trajectory space\\
& &  & & formulation\\
\hline\hline

\multicolumn{1}{l}{Environmental state space}\\[0.5ex]
\hline
$E$, a Polish space &  & $\Omega$ = $E^\Z $, the space of all & & $\Omega$  \\
 & & sequences of environment state. & & \\
$\xi_t \in E$, the environment &  & $\omega= (\dots,\omega_{-1},\omega_0,\omega_1,\dots) \in \Omega$  & &$\omega=(\dots, \omega_{-1},\omega_0,\omega_1,\dots)$\\
state at time $t$ &  & is a sequence of environment states,& \\
& & i.e.  $e_t=\xi_t$ & & \\[0.5ex]
\hline
\multicolumn{1}{l}{State space} \\[0.5ex]
\hline
$\R^n_+$&$\leftarrow$ &$\Omega \times \R^n_+$ &$\leftarrow$ & $\Gamma_+ \subset \Omega \times (\R^n_+)^\Z$ \\
$x_0=p_2(\omega,x_0)$&$p_2$ &$(\omega,x_0)=\pi_0(\gamma)$ &$\pi_0$ &$\gamma = (\omega,\{x_t\}_{t\in \Z})$\\[0.5ex]
\hline
\multicolumn{1}{l}{Dynamics} \\[0.5ex]
\hline
\multicolumn{1}{l}{For one time step:} & & & &\\[0.8ex]
$X_1 = x_0A(\xi_1,x_0)$ & & $\Phi(\omega,x_0)= (\theta(\omega), x_0A(\omega_0,x_0))$ & &$\Theta$, the shift\\[0.8ex]
\multicolumn{1}{l}{For $t$ time steps:} & & & &operator on $\Gamma_+$\\[0.8ex]
$X_t=x_0A(\xi_1,x_0)\cdots A(\xi_t,X_{t-1})$ &$``="$ & $p_2(\Phi^t(\omega,x_0))$  &$=$ & $p_2(\pi_0(\Theta^t(\gamma))$\\[0.5ex]
\hline
\multicolumn{1}{l}{Empirical measures} \\[0.5ex]
\hline
$\Pi_t^x$&  &$\Lambda_t(\omega,x)$&  & $\tilde{\Lambda}_t(\gamma) $   \\
\multicolumn{1}{l}{For a Borel set $B\in \R^n_+$:} & & & &\\[0.8ex]
$\Pi_t^{x_0}(B)$&$``="$&$\Lambda_t(\omega,x_0)(\Omega,B)$&$=$&$\pi_0^*(\tilde{\Lambda}_t(\gamma))(\Omega,B) $\\
\hline
\multicolumn{1}{l}{Invariant measures} \\[0.5ex]
\hline
$\Inv := \{\mu$ satisfying Def. \ref{def:inv.meas1}$\}$ &$\leftarrow$ & $ \Inv_{\ProbQ}(\Phi)$ &$\leftarrow$ &$ \Inv_\ProbQ(\Theta) $\\[0.5ex]
$h(\mu)$ &$ h$&$\mu=\pi_0^*(\tilde{\mu})$ & $\pi_0^*$ &$\tilde{\mu}$\\[0.5ex]
$h^{-1}(\Inv)$ &$=$& $ \Inv_{\ProbQ}(\Phi)(\Omega\times \overline{V})$ &$\supset$&$\pi_0^*(\Inv_\ProbQ(\Theta)(\Gamma_+))$\\[0.5ex]
\hline
\multicolumn{1}{l}{Long-term growth rates} \\[0.5ex]
\hline
$r_i(x_0)$&$``=" $  &$r_i(\omega,x_0)$&$=$  & $r_i(\gamma) $   \\ 

\multicolumn{1}{l}{For $\mu \in \Inv$:} & &\multicolumn{1}{l}{For $\mu \in \Inv_{\ProbQ}(\Phi)$:} & &\multicolumn{1}{l}{For $\tilde{\mu} \in \Inv_\ProbQ(\Theta) $:}\\[0.8ex]
$r_i(h(\mu))$&$= $ &$r_i(\mu)$&$= $ & $r_i(\tilde{\mu}) $   \\ 
\end{tabular}

\end{table}

\subsection{Random dynamical systems framework}~\label{partA}

To prove our main result, it is useful to embed (\ref{DYN1}) and assumptions \textbf{H1-H4} within Arnold's general framework of random dynamical systems. Let $\Omega=E^\Z$ be the set of possible environmental trajectories, $\mathcal{F} = \mathcal{E}^\Z$ be the product $\sigma$-algebra on $\Omega$, $\theta : \Omega \mapsto \Omega$ be the shift operator defined by $\theta(\{\omega_t\}_{t \in \Z}) = \{\omega_{t+1}\}_{t \in \Z}$, and $\ProbQ$ be the probability measure on $\Omega$ satisfying
\[
\ProbQ(\{\omega \in \Omega : \omega_t \in E_0,\dots,\omega_{t+k}\in E_k\}) =\Prob(\xi_0\in E_0,\dots,\xi_k \in E_k)
\]
for any Borel sets $E_0,\dots,E_k \subset E$. Since $E$ is a Polish space, the space $\Omega$ endowed with the product topology is Polish as well. Therefore, by the Kolmogorov consistency theorem, the probability measure $\ProbQ$ is well defined, and by a theorem of \cite{rokhlin-64}, $\theta$ is ergodic with respect to $\ProbQ$. 
Randomness enters by choosing randomly a point $\omega=\{\omega_t\}_{t\in \Z} \in \Omega$ with respect to the probability distribution $\ProbQ$ and defining the environmental state at time $t$ as $\omega_t$.

In this framework, the dynamics (\ref{DYN1}) takes on the form
\begin{equation}\label{DYN2}
 \left\{
    \begin{array}{ll}
        X_{t+1}(\omega,x) = X_t(\omega,x) A(\omega_t,X_t(\omega,x)) \\
        X_0(\omega,x) =x \in \bS.
    \end{array}
\right.
\end{equation}
We  call (\ref{DYN2}), the \emph{random dynamical system determined by $(\theta,\Prob, A)$}.

Define the skew product 
\[
\begin{array}{ccl}
 \Phi : &\Omega \times \R^n_+   &\rightarrow  \Omega \times \R^n_+  \\
  & (\omega, x)  & \mapsto  (\theta(\omega), xA(\omega_0,x)) 
\end{array}
\]
associated with the dynamics (\ref{DYN2}) and define the projection maps $p_1 : \Omega \times \R^n \rightarrow \Omega$ and $p_2 : \Omega \times \R^n \rightarrow \R^n$ by $p_1(x,\omega) = \omega$ and $p_2(x,\omega) = x$. Let $\Phi^t$ denote the composition of $\Phi$ with itself $t$ times, for $t \in \N$. Remark 1.1.8 in \cite{arnold-98} implies that the random dynamical system \eqref{DYN2} is characterized by the skew product $\Phi$ and vice versa. In particular, note that $X_{t+1}(\omega,x) = p_2 \circ\Phi^{t+1}(\omega,x)$ for $x\in S$ and $\omega \in \Omega$. Working with $\Phi$ allows the use of the discrete dynamical system theory.

\begin{definition}\label{def:attractor}
A compact set $K\subset \Omega \times \R^n_+$ is a \emph{global attractor for $\Phi$} if there exists a neighborhood $V$ of $K$ such that
\begin{enumerate}
\item[(i)] for all $(\omega,x) \in  \Omega \times \R^n_+ $, there exist $T \in \N$ such that $\Phi^t(\omega,x) \in V$ for all $t \ge T$;
\item[(ii)] $\Phi(V) \subset V$ and $K=\bigcap_{t\in \N}\Phi^t(V)$.
\end{enumerate}

\end{definition}

In this random dynamical systems framework, our assumptions \textbf{H1} and \textbf{H4} take on the form
\begin{enumerate}
\item[\textbf{H1':}] $\Omega$ is a compact space, $\ProbQ$ is a Borel probability measure, and $\theta$ is an invertible map that is ergodic with respect to $\ProbQ$, i.e. for all Borel set $B \subset \Omega$, such that $\theta^{-1}(B)=B$, we have $\ProbQ(B) \in \{0,1\}$.
\item[\textbf{H4':}] There exists a global attractor $K \subset \Omega \times \R^n_+$ for $\Phi$.
\end{enumerate}
Assumptions \textbf{H2-H3} do not need to be rewritten in the new framework. Since every ergodic stationary processes on a Polish space can be described as an ergodic measure preserving transformation (Kolmogorov consistency theorem and Rokhlin theorem), assumption \textbf{H1'} is less restrictive than \textbf{H1}. Assumption \textbf{H4'} is simply restatement of assumption \textbf{H4} in the random dynamical systems framework. 

To state Theorem \ref{thm:main} in this random dynamical systems framework, we define invariant measures for the random dynamical system (\ref{DYN2}). We follow the definition given by \cite{arnold-98}. First, recall some useful definitions and notations.  Let $M$ be a metric space, and let $\PP(M)$ be the space of Borel probability measures on $M$ endowed with the weak$^*$ toplogy. If $M'$ is also a metric space and $f \colon M \to M'$ is Borel measurable, then the induced linear map $f^{*} \colon \mathcal{P}(M) \to \mathcal{P}(M')$ associates with $\nu \in \mathcal{P}(M)$ the measure $f^{\ast}(\nu) \in \mathcal{P}(M')$ defined by 
\[
f^\ast(\nu) (B)=\nu(f^{-1}(B))
\]
for all  Borel sets $B$ in $M'$.  
If $\theta \colon M \rightarrow M$  is a continuous map, a measure $\nu \in \mathcal{P}(M)$ is called \emph{$\theta$-invariant} if $\nu (\theta^{-1}(B)) = \nu(B)$ for all Borel sets $B \in M$. A set $B\subset M$ is \emph{positively invariant} if $\theta(B) \subset B$. For every positively invariant compact set $B$, let $\PPinv(B)$ be the set of all $\theta$-invariant measures supported on $B$. 

\begin{definition}\label{def:inv.meas.rds}
A probability measure $\mu$ on $\Omega \times \R^n_+$ is \emph{invariant for the random dynamical system (\ref{DYN2})} if 
\begin{enumerate}
\item[\textnormal{\textbf{(i)}}] $\mu \in \Inv(\Phi)(\Omega \times \R^n_+)$,
 \item[\textnormal{\textbf{(ii)}}] $p_1^*(\mu) = \ProbQ$, i.e. for all Borel sets $D \subset \Omega$, $\mu(D \times \R^n_+)= \ProbQ(D)$.
\end{enumerate}
For any positively invariant set $\Omega \times C$ where $C \subset \R^n_+$ is compact, $\PPinvphi(\Omega \times C)$ is the set of all measures  $\mu$ satisfying (i) and (ii) such that $\mu(\Omega \times C)=1$.
\end{definition}In words, a probability measure $\mu$ is invariant for the random dynamical system \eqref{DYN2} if it is invariant for the skew product $\Phi$ and if its first marginal is the probability $\ProbQ$ on $\Omega$.

The following result is a consequence of Theorem 1.5.10 in \cite{arnold-98}. In fact, the topology defined in his definition 1.5.3 is finer than the weak$^*$ topology on the set of all probability measures on $\Omega \times C$.
\begin{proposition}\label{prop:inv-mes-compact}
If $C \subset \R^n_+$ is a positively invariant compact set, then $\PPinvphi(\Omega \times C)$ is a nonempty, convex, compact subset of $\mathcal{P}(\Omega \times \R^n_+)$.
\end{proposition}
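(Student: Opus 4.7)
The plan is to establish the three claims (nonemptiness, convexity, compactness) by standard Krylov--Bogoliubov--style arguments adapted to the skew-product structure. The essential ingredients are the weak-$^{*}$ compactness of $\mathcal{P}(\Omega\times C)$ (which holds because $\Omega\times C$ is a compact Polish space by \textbf{H1'} and the hypothesis on $C$), the continuity of $\Phi$ on $\Omega\times C$ (guaranteed by \textbf{H2}), and the commutation relation $p_1\circ\Phi = \theta\circ p_1$ together with $\theta$-invariance of $\ProbQ$.

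For nonemptiness, I would start from any probability measure $\nu_0\in\mathcal{P}(\Omega\times C)$ with $p_1^{*}\nu_0=\ProbQ$ (for instance $\nu_0=\ProbQ\otimes\delta_{x_0}$ for an arbitrary $x_0\in C$, which is well-defined because $\Omega\times C$ is positively invariant under $\Phi$ by the assumption on $C$ together with \textbf{H4'}). Form the Cesàro averages $\nu_T := \frac{1}{T}\sum_{t=0}^{T-1}(\Phi^{t})^{*}\nu_0$ in $\mathcal{P}(\Omega\times C)$. Each $\nu_T$ still satisfies $p_1^{*}\nu_T=\ProbQ$, since for every $t$, $p_1^{*}(\Phi^{t})^{*}\nu_0 = (\theta^{t})^{*}p_1^{*}\nu_0 = (\theta^{t})^{*}\ProbQ = \ProbQ$. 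By weak-$^{*}$ compactness of $\mathcal{P}(\Omega\times C)$, some subsequence $\nu_{T_k}$ converges weakly to a measure $\mu$. The standard telescoping argument shows $\Phi^{*}\mu=\mu$, and weak-$^{*}$ continuity of $p_1^{*}$ (since $p_1$ is continuous) preserves the marginal condition, yielding $\mu\in\PPinvphi(\Omega\times C)$.

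Convexity is immediate: both defining conditions, $\Phi$-invariance and $p_1^{*}\mu=\ProbQ$, are linear in $\mu$, so $\PPinvphi(\Omega\times C)$ is a convex subset of $\mathcal{P}(\Omega\times C)$. For compactness, since $\PPinvphi(\Omega\times C)$ lies in the weak-$^{*}$ compact set $\mathcal{P}(\Omega\times C)$, it suffices to show it is weak-$^{*}$ closed. Condition (i), $\Phi$-invariance, is characterized by $\int f\,d\mu = \int f\circ\Phi\,d\mu$ for every $f\in C_b(\Omega\times C)$; this identity passes to weak-$^{*}$ limits because $f\circ\Phi$ is again continuous and bounded. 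Condition (ii), $p_1^{*}\mu=\ProbQ$, is equivalent to $\int g\circ p_1\,d\mu = \int g\,d\ProbQ$ for every $g\in C_b(\Omega)$, and $g\circ p_1$ is continuous, so it too survives weak-$^{*}$ limits. Combining these gives closedness, hence compactness.

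There is no serious obstacle; the only subtlety is making sure the marginal constraint $p_1^{*}\mu=\ProbQ$ propagates through the averaging procedure, which is precisely what the commutation $p_1\circ\Phi = \theta\circ p_1$ together with $\theta$-invariance of $\ProbQ$ delivers. Alternatively one could simply invoke Theorem~1.5.10 of \cite{arnold-98}, noting that its topology on the space of Markov measures with marginal $\ProbQ$ is finer than the weak-$^{*}$ topology, so its compactness statement implies weak-$^{*}$ compactness here.
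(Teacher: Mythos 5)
Your proof is correct but takes a more self-contained route than the paper, which simply cites Theorem~1.5.10 of Arnold (1998) and observes that the topology used there (Definition~1.5.3) on invariant measures with marginal $\ProbQ$ is finer than the weak$^{*}$ topology, so compactness transfers; the paper does not spell out nonemptiness or convexity at all. You instead run a full Krylov--Bogoliubov argument: start from $\nu_0=\ProbQ\otimes\delta_{x_0}$, form Ces\`aro averages of pushforwards, propagate the marginal constraint through the commutation $p_1\circ\Phi=\theta\circ p_1$ together with $\theta$-invariance of $\ProbQ$, and extract a weak$^{*}$ limit, then close the argument with the standard closedness checks. This is more elementary and makes transparent exactly why the marginal condition is preserved, at the cost of being longer; you in fact note the Arnold citation as an alternative, which is precisely what the paper does. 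One small inaccuracy: you appeal to \textbf{H4'} to justify that the iterates $(\Phi^{t})^{*}\nu_0$ stay supported on $\Omega\times C$, but this already follows from the hypothesis that $C$ is positively invariant (i.e.\ $\Phi(\Omega\times C)\subset\Omega\times C$); \textbf{H4'} plays no role in this particular proposition. Also, $\ProbQ\otimes\delta_{x_0}$ is automatically a well-defined probability measure; positive invariance is needed only to keep its $\Phi$-iterates inside $\mathcal{P}(\Omega\times C)$, not for the definition itself.
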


The main assumption in Theorem \ref{thm:main} deals with the long-term growth rates which characterize, in some sense, the long-term behavior of random matrix products (see Definition \ref{def:inv.rate}). In order to define those products in the new framework, let $\mathbf{M}_{d}(\R)$ be the set of all $d\times d$ matrices over $\R$ and consider the maps $A_i : \Omega \times S \rightarrow \mathbf{M}_{n_i}(\R)$, defined by
\[
A_i(\omega,x)=A_i(\omega_0,x).
\]

While our choice of notation here differs slightly from the main text, this choice simplifies the proof. We write  
\begin{equation}\label{def:product}
A^t_i(\omega,x) :=A_i(\omega,x)A_i(\Phi(\omega,x)) \cdots  A_i(\Phi^{t-1}(\omega,x)),
\end{equation}
with the convention that $A_i^0(\omega,x) = \mathrm{id}$, the identity matrix.

Then, for each $i\in \{1,\dots,m\}$, the asymptotic growth rate of the product (\ref{def:product}) associated with $(\omega,x) \in \Omega \times \R^n_+$ is
\[
\rr_i(\omega,x) := \limsup_{t \to \infty}\frac{1}{t}\ln \Vert A_i^t(\omega,x)\Vert,
\]
which is finite, due to assumptions \textbf{H3} and \textbf{H4'}.
According to Definition \ref{def:inv.meas.rds}, the \emph{invasion rate of species $i$ with respect to an invariant measure $\mu \in \PPinvphi$} is
\[
\rr_i(\mu):= \int_{\Omega \times \R^n_+}\rr_i(\omega, x) \mu(d\omega, dx).
\]

\begin{remark}\label{remark:samelyap}
Note that for any $x\in \R^n_+$, the random variable $\rr_i(x)$ defined by \eqref{def:lyap} is equal in distribution to the random variable $\rr_i(\cdot,x)$. Also by definition of $\ProbQ$ and $\Phi$ there is a bijection, say $h$, between the set $\PPinvphi(\Omega \times \R^n_+)$ and the set of measures defined in Definition \ref{def:inv.meas1}. Moreover the invasion rate with respect to an invariant measure is invariant by $h$, i.e. for all $\mu \in \PPinvphi$, $\rr_i(\mu)=\rr_i(h(\mu))$.
\end{remark}

Given a point $(\omega,x) \in \Omega \times \R^n_+$,  let $\Pi_t(\omega,x)$ denote the empirical occupation measure of the trajectory $\{X_s(\omega,x)\}_{s\ge0}$ at time $t$ defined by
\[
\Pi_t(\omega,x):= \frac{1}{t}\sum_{s=0}^{t-1}\delta_{X_s(\omega,x)}.
\] 
For each Borel set $B \subset \R^n_+$, the random variable $\Pi^x_t(B)$ given by (\ref{def:emp.meas1}) is equal in distribution to the random variable $\Pi_t(\cdot,x)(B)$.

For all $\eta>0$, recall that $\bS_{{\eta}}:= \{x \in \R^n_+ \ : \  \Vert x^i\Vert \le \eta \text{ for some }i \}$. We can now rephrase Theorem \ref{thm:main} in the framework of random dynamical systems.
\begin{theoreme}\label{thm:main1'}
If one of the following equivalent conditions hold
\begin{enumerate}
\item[(i)] $\rr_*(\mu):= \max_{0\le i \le m}\rr_i(\mu) >0$ for every probability measure $\mu \in \PPinvphi(\Omega \times S_0)$, or 
\item[(ii)] there exist positive constants $p_1,\dots,p_m$ such that 
\[
\sum_i p_i \rr_i(\mu)>0
\]
for every ergodic probability measure $\mu \in \PPinvphi(\Omega \times S_0)$, or
\item[(iii)] there exist positive constants $p_1,\dots,p_m$ such that 
\[
\sum_i p_i \rr_i(\omega,x)>0
\]
for every $x\in S_0$ and $\ProbQ$-almost all $\omega \in \Omega$,
\end{enumerate}
then for all $\eps >0$, there exists $\eta>0$ such that
\[
\limsup_{t \to \infty} \Pi_t(\omega,x)( \bS_{\eta}) \le \eps \ \ \text{ for }\ProbQ\text{-almost all } \omega,
\]
whenever $x \in \R^n_+ \backslash \bS_0$.
 \end{theoreme}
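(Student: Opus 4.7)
The plan is to prove Theorem~\ref{thm:main1'} in two phases: first establish the equivalence of conditions (i), (ii), (iii); then derive the persistence conclusion from the (strongest) condition (iii) via an occupation-measure argument anchored on a Lyapunov function built from the weights $p_i$.

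For the equivalences, two directions are routine. (iii)$\Rightarrow$(ii): any ergodic $\mu\in\PPinvphi(\Omega\times S_0)$ has first marginal $\ProbQ$, so the pointwise inequality of (iii) holds $\mu$-almost surely; integrating gives $\sum_i p_i\rr_i(\mu)>0$. (ii)$\Rightarrow$(i): decompose any invariant $\mu$ on $\Omega\times S_0$ into ergodic pieces $\mu=\int\mu_\alpha\,d\nu(\alpha)$; since $\mu\mapsto\rr_i(\mu)$ is affine, (ii) yields $\sum_i p_i\rr_i(\mu)>0$, so at least one $\rr_i(\mu)>0$ and hence $\rr_*(\mu)>0$. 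The converse (i)$\Rightarrow$(ii) is a classical convex-separation step: by Proposition~\ref{prop:inv-mes-compact} the set $K=\PPinvphi(\Omega\times S_0)$ is compact and convex, so its image under the continuous affine map $\mu\mapsto(\rr_1(\mu),\dots,\rr_m(\mu))$ is compact and convex in $\R^m$ and by (i) misses the closed negative orthant; Hahn--Banach separation followed by a standard perturbation yields weights $p_i>0$ with $\sum_i p_i\rr_i(\mu)>0$ on $K$. To close the loop with (ii)$\Rightarrow$(iii), fix $x\in S_0$: since $S_0$ is positively invariant (extinctions persist), the lifted empirical measures $\tilde\Pi_t(\omega,x)=\frac{1}{t}\sum_{s=0}^{t-1}\delta_{\Phi^s(\omega,x)}$ live in $\PP(\Omega\times S_0)$; every weak-$*$ limit belongs to $\PPinvphi(\Omega\times S_0)$ by Lemma~\ref{lemma_mes_emp}, and then ergodic decomposition together with (ii) forces $\sum_i p_i\rr_i(\omega,x)>0$ for $\ProbQ$-a.e.\ $\omega$.

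Now for persistence under (iii). Introduce the candidate Lyapunov function $V(x)=-\sum_i p_i\log\|x^i\|$, continuous on $S\setminus S_0$ with $V(x)\to+\infty$ as $x\to S_0$; in particular $S_\eta$ is contained in $\{V\ge\log(1/\eta)+O(1)\}$ for small $\eta$. Assume for contradiction that persistence fails: there exist $\eps>0$, $x_0\notin S_0$, and a positive-$\ProbQ$-measure set of $\omega$ along which $\limsup_t\Pi_t(\omega,x_0)(S_\eta)>\eps$ for every $\eta>0$. Along a subsequence $t_k\to\infty$, consider the lifted joint occupation measures $\tilde\Pi_{t_k}(\omega,x_0)$ on $\Omega\times S$. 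By H4' these are supported on a compact set, so extract a weak-$*$ limit $\mu^*$. Standard Krylov--Bogolyubov-type arguments (shift invariance of the time averages plus ergodicity of $\theta$ under $\ProbQ$) give $\mu^*\in\PPinvphi(\Omega\times S)$, and the failure hypothesis together with $S_\eta\downarrow S_0$ forces $\mu^*(\Omega\times S_0)\ge\eps$. Writing $\mu^*=c\mu_0+(1-c)\mu_1$ with $\mu_0\in\PPinvphi(\Omega\times S_0)$, $c\ge\eps$, and $\mu_1$ supported off $\Omega\times S_0$, condition (iii) (via its ergodic-decomposition consequence applied to $\mu_0$) and Proposition~\ref{prop_lambdanul} (which forces $\rr_i(\mu_1)=0$ for every species $i$ present in $\mu_1$) yield $\sum_i p_i\rr_i(\mu^*)>0$.

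The contradiction is obtained by relating this integral to the time-averaged $V$-drift via the telescoping identity
\begin{equation*}
\frac{V(x_0)-V(X_{t_k}(\omega,x_0))}{t_k}
=\frac{1}{t_k}\sum_{s=0}^{t_k-1}\Big(-\sum_i p_i\log\frac{\|X_{s+1}^i(\omega,x_0)\|}{\|X_s^i(\omega,x_0)\|}\Big).
\end{equation*}
Using Ruelle's Random Perron--Frobenius Theorem~\citep{ruelle-79b} to identify the per-step log-ratios, asymptotically and in averaged form, with the logarithms of the dominant multipliers $\log\lambda_i(\Phi^s(\omega,x_0))$ of the cocycle $A_i$, and then invoking the weak-$*$ convergence $\tilde\Pi_{t_k}\rightharpoonup\mu^*$, the right-hand side tends to $\sum_i p_i\rr_i(\mu^*)>0$. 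Hence the left-hand side is bounded away from $0$ below as $k\to\infty$, so $V(X_{t_k})$ diverges linearly to $-\infty$; but $V$ is bounded below on the compact attractor by H4', a contradiction. The main obstacle---and the technical heart of the appendix---will be the precise justification of this limit passage: the per-step growth $\log(\|X_{s+1}^i\|/\|X_s^i\|)$ is not a continuous function of the state but only of the random cocycle, and the functional $\mu\mapsto\rr_i(\mu)$ need not be weak-$*$ continuous at measures touching $\Omega\times S_0$. This is precisely where Proposition~\ref{prop:equality-lyap}, Corollary~\ref{cor_lambdanegative}, and the occupation-measure machinery of Sections~\ref{partC}--\ref{partD} of the appendix are required to control the passage from the time-averaged log-ratios to the integrated long-term growth rate.
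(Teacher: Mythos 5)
Your overall scaffolding --- lift to occupation measures, extract weak* limits, decompose invariant measures across the extinction set, apply the convexity and Ruelle machinery --- tracks the paper's structure in Sections~\ref{partD}--\ref{partE}. Your equivalence argument via Hahn--Banach separation on the compact convex set $\PPinvphi(\Omega\times S_0)$ is a legitimate alternative to the paper's minimax route and buys essentially the same thing. Your Lyapunov function $V(x)=-\sum_i p_i\log\|x^i\|$ and the telescoping identity are, after the sign is fixed, really just a repackaging of Corollary~\ref{cor_lambdanegative} together with the asymptotic identification in Proposition~\ref{prop_HS}: $V$ being bounded below on the attractor forces $\limsup_k (V(x_0)-V(X_{t_k}))/t_k\le 0$, which is exactly $\sum_i p_i\rr_i(\tilde\mu)\le 0$. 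You still need the full $\bzeta_i$ machinery to pass from per-step log-ratios to $\rr_i(\tilde\mu)$, so the Lyapunov packaging does not simplify the technical core; the paper's direct route via $\rr_i(\tilde\mu)\le\rr_i(\gamma)\le 0$ is shorter.

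The genuine gap is in the contradiction setup and, in particular, the uniformity of $\eta$. The theorem asserts a single $\eta$ (depending only on $\eps$) that works simultaneously for all $x\notin S_0$ and $\ProbQ$-a.e.\ $\omega$. The logical negation is: there exists $\eps$ such that for every $\eta$ one can find some $x$ and some positive-measure set of $\omega$ (both allowed to depend on $\eta$) with $\limsup_t\Pi_t(\omega,x)(S_\eta)>\eps$. Your negation fixes a single $x_0$ and a single $\omega$-set and asks that the inequality hold for every $\eta$, which is strictly stronger and so does not suffice. Moreover, extracting a weak* limit $\mu^*$ with $\mu^*(\Omega\times S_0)\ge\eps$ requires passing to a subsequence along which $\Pi_{t_k}(\omega,x_0)(S_\eta)>\eps$ holds for all $\eta$ simultaneously, which needs a diagonalization you don't carry out. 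The paper sidesteps both problems by proving the uniform statement directly in Proposition~\ref{lemma1}: a compactness argument over the set $\PPinvtheta(\Gamma_+\setminus\Gamma_0)$ yields a single $\eta^*$ with $\tilde\mu(\Gamma_\eta)<\eps$ for all such $\tilde\mu$ and all $\eta<\eta^*$. Since every weak* limit of $\tilde\Lambda_{t_k}(\gamma)$ starting off $S_0$ lands in $\PPinvtheta(\Gamma_+\setminus\Gamma_0)$ (the $\alpha=0$ step, which you do get), this uniform $\eta^*$ is what converts the pointwise conclusion into the persistence statement. Your proposal as written proves that each weak* limit puts zero mass on $\Omega\times S_0$, but without the Proposition~\ref{lemma1}-type uniformity this only delivers an $\eta$ that depends on $(\omega,x)$, which is weaker than the theorem.
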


Remark~\ref{remark:samelyap} implies that Theorem~\ref{thm:main1'} and Theorem~\ref{thm:main} are equivalent. The remainder of the Appendix is devoted to prove Theorem~\ref{thm:main1'}.


 \subsection{Trajectory space\label{partB}}
 
The key element of the proof of Theorem \ref{thm:main1'} is Proposition \ref{prop:original_ruelle} due to \cite{ruelle-79b} in which it is crucial that the map $\Phi$ is an homeomorphism. However, the map $\Phi$ is, a priori, not invertible. To circumvent this issue, we extend the dynamics induced by $\Phi$ to an invertible map on the space of possible trajectories. Then, we state an equivalent version of Theorem \ref{thm:main1'} in this larger space that we prove in Section \ref{partE}.

By definition of the global attractor $K$, there exist a neighborhood $V$ of $p_2(K)$ in $\R^n_+$ such that $\Phi(\Omega \times V) \subset \Omega \times V$. By continuity of $\Phi$, this inclusion still holds for the closure $\overline{V}$ of $V$, i.e.
\begin{equation*}\label{eq:inclusion}
\Phi(\Omega \times \overline{V}) \subset \Omega \times \overline{V}.
\end{equation*}

This inclusion implies that, for every point $(\omega,x) \in \Omega \times \overline{V}$, there exists a sequence $\{x_t\}_{t\in \N} \subset \overline{V}^{\N}$ such that $x_0=x$, and $(\theta^{t+1}(\omega),x_{t+1})=\Phi(\theta^t(\omega),x_t)$ for all $t \ge 0$. The sequence $\{(\theta^t(\omega), x_t)\}_{t\ge 0}$ is called a \emph{$\Phi$-positive trajectory}. Note that the first coordinate of a $\Phi$-positive trajectory is characterized by $\omega$ and $\theta$. Therefore a $\Phi$-positive trajectory can be seen as a couple $(\omega, \{x_t\}_{t \ge 0})$. In order to create a \emph{past} for all those $\Phi$-positive trajectories, let us pick a point $x^*\in  \bS \backslash (\overline{V}\cup S_0)$, and consider the product space $\mathcal{T}:=\Omega \times (\overline{V} \cup \{x^*\})^{\Z}$ endowed with the product topology, and the homeomorphism $\Theta : \mathcal{T} \rightarrow \mathcal{T} $ defined by $\Theta(\omega,\{x_t\}_{t\in \Z}) = (\theta(\omega),\{x_{t+1}\}_{t\in \Z})$ and called the \emph{shift operator}. Since both $\Omega$ and $\overline{V}\cup \{x^*\}$ are compact, the space $\mathcal{T}$ is compact as well. 

Every $\Phi$-positive trajectory can be realized as an element of $\mathcal{T}$ by creating a fixed past (i.e. $x_t=x^*$ for all $t<0$). Then, define 
 \[
 \Gamma= \overline{\bigcup_{t\in \Z}\Theta^t\{ \gamma \in \mathcal{T}  : \gamma \ \text{ is a }\Phi\text{-positive trajectory}  \}}.
 \]
In words, $\Gamma$ is the adherence in $\mathcal{T}$ of the set of all shifted (by $\Theta^t$ for some $t\in \Z$) $\Phi$-positive trajectories. Since $\Gamma$ is a closed subset of the compact $\mathcal{T}$, it is compact as well. Moreover $\Gamma$ is invariant under $\Theta$, which implies that the restriction $\restr{\Theta}{\Gamma}$ of $\Theta$ on $\Gamma$ is well-defined. To simplify the presentation we still denote this restriction by $\Theta$. The projection map $\pi_0 : \Gamma \rightarrow \Omega \times \overline{V} \cup \{x^*\}$ is defined by $\pi_0(\gamma)=(\omega,x_0)$ for all $\gamma =(\omega,\{x_t\}_t) \in \Gamma$. By definition, the map $\pi_0$ is continuous and surjective. For now on, when we write $\gamma \in \Gamma$, we mean $\gamma=(\omega,\{x_t\}_{t\in \Z})$.

Define the compact set of all \emph{$\Phi$-total trajectories} as
\[
\Gamma_+:= \pi_0^{-1}(\Omega \times \overline{V}),
\]
and the compact set of \emph{$\Phi$-total-solution trajectory on the extinction set $\bS_0$} as
\[
\Gamma_0:= \pi_0^{-1}(\Omega \times \bS_0).
\]
The dynamic induced by $\Phi$ on $\Omega \times \overline{V}$ is linked to the dynamic induced by $\Theta$ on $\Gamma_+$ by the following semi conjugacy 
 \begin{equation}\label{Eq_conj}
\pi_0 \circ \Theta = \Phi \circ \pi_0.
\end{equation}
Thus, the map $\Theta$ on $\Gamma_+$ can be seen as the extension of the map $\Phi$ on $\Omega \times \overline{V}$.

In order to write an equivalent statement of Theorem~\ref{thm:main1'} with respect to the dynamics of $\Theta$, we consider a subset of the invariant measures of $\Theta$ consistent with the set $\PPinvphi(\Omega \times S)$ in the sense of Corollary~\ref{corollary:inv-theta} below. For $B \subset \Gamma$ positively $\Theta$-invariant and compact, define
\[
\PPinvtheta(B) :=\{\tilde{\mu} \in \Inv(\Theta)(B) : p_1^*\circ \pi_0^*(\tilde{\mu}) = \mathbb{Q} \}.
\]

\begin{proposition}\label{prop:mesinvtheta}
$\PPinvtheta(\Gamma_+)$ and $\PPinvtheta(\Gamma_0)$ are compact and convex subsets of $\mathcal{P}(\Gamma)$.
\end{proposition}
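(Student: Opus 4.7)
The plan is to verify convexity and compactness in turn, noting that both assertions boil down to standard weak$^{*}$-topology arguments applied to the compact spaces $\Gamma_+$ and $\Gamma_0$.

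Convexity is immediate. For $\tilde\mu_1,\tilde\mu_2\in\PPinvtheta(B)$ (with $B$ equal to $\Gamma_+$ or $\Gamma_0$) and $\alpha\in[0,1]$, the convex combination $\nu=\alpha\tilde\mu_1+(1-\alpha)\tilde\mu_2$ inherits $\Theta$-invariance from the linearity of the push-forward $\Theta^{*}$, concentrates on $B$ because each $\tilde\mu_i$ does, and satisfies $p_1^{*}\circ\pi_0^{*}(\nu)=\alpha\mathbb{Q}+(1-\alpha)\mathbb{Q}=\mathbb{Q}$ by the same linearity.

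For compactness, I would first check that $\Gamma_+$ and $\Gamma_0$ are themselves compact: both are preimages under the continuous map $\pi_0$ of the closed sets $\Omega\times\overline V$ and $\Omega\times\bS_0$ (recall that $\bS_0$ is closed because $x\mapsto\prod_i\|x^i\|$ is continuous), hence closed in the compact space $\Gamma$. By Prokhorov's theorem, $\mathcal{P}(\Gamma_+)$ and $\mathcal{P}(\Gamma_0)$ are therefore weak$^{*}$-compact subsets of $\mathcal{P}(\Gamma)$. It suffices to exhibit $\PPinvtheta(B)$ as an intersection of weak$^{*}$-closed conditions inside $\mathcal{P}(B)$. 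Since $\Theta$ is continuous, for each $f\in C(\Gamma)$ the map $\tilde\mu\mapsto\int(f\circ\Theta-f)\,d\tilde\mu$ is weak$^{*}$-continuous, so
\[
\Inv(\Theta)(\Gamma)=\bigcap_{f\in C(\Gamma)}\Bigl\{\tilde\mu:\int(f\circ\Theta-f)\,d\tilde\mu=0\Bigr\}
\]
is weak$^{*}$-closed, and intersecting with the closed set $\mathcal{P}(B)$ gives $\Inv(\Theta)(B)$. The marginal constraint $p_1^{*}\circ\pi_0^{*}(\tilde\mu)=\mathbb{Q}$ is also closed: for each $g\in C(\Omega)$ the map $\tilde\mu\mapsto\int g\circ p_1\circ\pi_0\,d\tilde\mu$ is weak$^{*}$-continuous because the integrand lies in $C(\Gamma)$ (both $p_1$ and $\pi_0$ are continuous), and $\{\mathbb{Q}\}$ is closed in $\mathcal{P}(\Omega)$. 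Intersecting these closed conditions exhibits $\PPinvtheta(B)$ as a closed subset of the weak$^{*}$-compact set $\mathcal{P}(B)$, hence itself weak$^{*}$-compact.

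No substantive obstacle is anticipated: the argument is a routine application of the standard fact that the set of invariant measures of a continuous self-map of a compact metrizable space is weak$^{*}$-compact, together with the continuity of the marginal map $p_1^{*}\circ\pi_0^{*}:\mathcal{P}(\Gamma)\to\mathcal{P}(\Omega)$. The only minor verifications are the closedness of $\Gamma_+$ and $\Gamma_0$ and the closedness of $\bS_0$ in $\R^n_+$, both of which are immediate from the continuity statements already recorded in the construction.
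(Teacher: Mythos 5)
Your proof is correct and follows essentially the same route as the paper's: the paper cites as standard the fact that $\Inv(\Theta)(B)$ is compact and convex for a compact positively invariant $B$ and then intersects with the closed, convex constraint $p_1^{*}\circ\pi_0^{*}(\tilde\mu)=\mathbb{Q}$, while you simply unwind that standard fact (Prokhorov plus the weak$^{*}$-closedness of the invariance equations $\int(f\circ\Theta-f)\,d\tilde\mu=0$) before imposing the same marginal constraint. No gap.
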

\begin{proof}
Since $\Gamma_+$ and $\Gamma_0$ are positively invariant compacts, $\Inv(\Theta)(\Gamma_+)$ and $\Inv(\Theta)(\Gamma_0)$ are non empty, compact and convex subsets of $\mathcal{P}(\Gamma)$. Then, since $p_1^*\circ \pi_0^*$ is continuous, $\PPinvtheta(\Gamma_+)$ (resp. $\PPinvtheta(\Gamma_0)$) is compact as closed subset of $\Inv(\Theta)(\Gamma_+)$ (resp. $\Inv(\Theta)(\Gamma_0)$). The convexity of $\PPinvtheta(\Gamma_+)$ and $\PPinvtheta(\Gamma_0)$ is a consequence of the convexity of $\Inv(\Theta)(\Gamma_+)$ and $\Inv(\Theta)(\Gamma_0)$, and the linearity of $p_1^*\circ \pi_0^*$.\end{proof}

As a consequence of equation (\ref{Eq_conj}), we have
\begin{proposition}\label{prop:inv_measure}
For every $\Theta$-invariant measure $\tilde{\mu}$ supported on $\Gamma_+$, $\pi_0^*(\tilde{\mu})$ is $\Phi$-invariant.
\end{proposition}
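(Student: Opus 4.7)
The plan is to exploit the semiconjugacy relation $\pi_0 \circ \Theta = \Phi \circ \pi_0$ stated in equation~\eqref{Eq_conj}, which is the entire content of this proposition. Recall that $\Theta$-invariance of $\tilde\mu$ means $\Theta^*\tilde\mu = \tilde\mu$, and the aim is to show $\Phi^*(\pi_0^*\tilde\mu) = \pi_0^*\tilde\mu$.

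First I would check that $\pi_0^*\tilde\mu$ is a well-defined Borel probability measure on $\Omega\times\overline{V}$. Since $\tilde\mu$ is supported on $\Gamma_+$ and $\pi_0(\Gamma_+)\subset \Omega\times\overline{V}$ by definition of $\Gamma_+$, the pushforward is supported on $\Omega\times\overline{V}$, and $\pi_0^*\tilde\mu$ is a Borel measure because $\pi_0$ is continuous (hence Borel measurable). Also, $\Phi$ maps $\Omega\times\overline{V}$ into itself, so $\Phi$-invariance makes sense for measures on this set.

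Next, the core computation: for any Borel set $B\subset \Omega\times\overline{V}$, the functorial property of pushforwards $(f\circ g)^* = f^*\circ g^*$ combined with~\eqref{Eq_conj} gives
\[
\Phi^*(\pi_0^*\tilde\mu)(B) \;=\; (\Phi\circ\pi_0)^*\tilde\mu(B) \;=\; (\pi_0\circ\Theta)^*\tilde\mu(B) \;=\; \pi_0^*(\Theta^*\tilde\mu)(B) \;=\; \pi_0^*\tilde\mu(B),
\]
where the last equality uses $\Theta^*\tilde\mu = \tilde\mu$. Unwinding the pushforward definition directly is equally short: $\Phi^*(\pi_0^*\tilde\mu)(B) = \tilde\mu(\pi_0^{-1}(\Phi^{-1}(B))) = \tilde\mu(\Theta^{-1}(\pi_0^{-1}(B))) = \tilde\mu(\pi_0^{-1}(B)) = \pi_0^*\tilde\mu(B)$, where the middle identity is obtained from~\eqref{Eq_conj} applied pointwise on $\Gamma_+$.

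There is essentially no obstacle here; the proposition is a formal consequence of the semiconjugacy and the measurability of $\pi_0$. The only minor subtlety worth a line in the write-up is to remark that although $\Theta$ is defined on all of $\Gamma$, the relation~\eqref{Eq_conj} is needed on $\Gamma_+$ (where $\Phi\circ\pi_0$ is intrinsically defined on the image); since $\tilde\mu$ is supported on $\Gamma_+$ and $\Gamma_+$ is positively $\Theta$-invariant, restricting to this set causes no issue.
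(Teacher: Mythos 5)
Your proof is correct and follows essentially the same route as the paper: the computation $\pi_0^*\tilde\mu(\Phi^{-1}(B)) = \tilde\mu(\pi_0^{-1}(\Phi^{-1}(B))) = \tilde\mu(\Theta^{-1}(\pi_0^{-1}(B))) = \pi_0^*\tilde\mu(B)$ is exactly the chain of equalities in the paper's proof, and your closing remark about intersecting with $\Gamma_+$ (where the semiconjugacy~\eqref{Eq_conj} actually applies and where $\tilde\mu$ has full mass) is precisely the point the paper makes explicit in its second equality.
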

\begin{proof} 
Let $\tilde{\mu}$ be a $\Theta$-invariant measure supported on $\Gamma_+$. Then the measure $\pi_0^*(\tilde{\mu})$ is supported by $\Omega \times \overline{V}$. Let $B \subset \Omega \times \overline{V}$ be a Borel set. We have
\begin{eqnarray*}
\pi^*_0(\tilde{\mu})(\Phi^{-1}(B)) &=&  \tilde{\mu}(\pi_0^{-1}(\Phi^{-1}(B)))\\
&=& \tilde{\mu}(\pi_0^{-1}(\Phi^{-1}(B)) \cap \Omega \times \overline{V})\\
&=& \tilde{\mu}((\restr{\Phi}{\Omega \times \overline{V}} \circ \pi_0)^{-1}(B))\\
&=& \tilde{\mu}((\pi_0\circ \restr{\Theta}{\Gamma_+})^{-1}(B))\\
&=& \tilde{\mu}(\pi_0^{-1}(B))\\
&=& \pi^*_0(\tilde{\mu})(B).
\end{eqnarray*}
The second equality follows from the fact that the support of $\tilde{\mu}$ is included in $\Gamma_+$, and the fourth is a consequence of the conjugacy (\ref{Eq_conj}).
\end{proof}

\begin{corollary}\label{corollary:inv-theta}
$\pi_0^*(\PPinvtheta(\Gamma_+))$ is a compact and convex subset of $\PPinvphi(\Omega \times \overline{V})$.
\end{corollary}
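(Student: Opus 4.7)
The plan is to assemble Corollary \ref{corollary:inv-theta} from the three pieces already in hand: Proposition \ref{prop:inv_measure} (which makes the pushforward $\Phi$-invariant), Proposition \ref{prop:mesinvtheta} (which gives compactness and convexity of $\PPinvtheta(\Gamma_+)$), and the general fact that pushforward by a continuous map is weak${}^*$ continuous and linear on $\mathcal{P}(\Gamma)$. Nothing deep should be needed beyond bookkeeping.

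First I would verify the inclusion $\pi_0^*(\PPinvtheta(\Gamma_+)) \subset \PPinvphi(\Omega \times \overline{V})$. Fix $\tilde{\mu} \in \PPinvtheta(\Gamma_+)$ and set $\mu := \pi_0^*(\tilde{\mu})$. Condition (i) of Definition \ref{def:inv.meas.rds}, i.e.\ $\mu \in \Inv(\Phi)$, is precisely the content of Proposition \ref{prop:inv_measure}. Condition (ii), $p_1^*(\mu) = \mathbb{Q}$, is built into the definition of $\PPinvtheta(\Gamma_+)$ via the requirement $p_1^* \circ \pi_0^*(\tilde{\mu}) = \mathbb{Q}$. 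Finally, since $\tilde{\mu}(\Gamma_+) = 1$ and $\pi_0(\Gamma_+) \subset \Omega \times \overline{V}$, we get $\mu(\Omega \times \overline{V}) = \tilde{\mu}(\pi_0^{-1}(\Omega \times \overline{V})) \ge \tilde{\mu}(\Gamma_+) = 1$, so $\mu$ is supported on $\Omega \times \overline{V}$ as required.

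Next I would read off convexity and compactness from the fact that $\pi_0^*$ is a continuous linear map on the space of Borel probability measures (with the weak${}^*$ topology), since $\pi_0$ is continuous and the space is metrizable (Polish base spaces). Linearity gives convexity of the image from convexity of $\PPinvtheta(\Gamma_+)$ (Proposition \ref{prop:mesinvtheta}), and continuity maps compact sets to compact sets, so the image is compact as well. Together with the inclusion above, this yields the corollary.

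The only mild subtlety is ensuring that the pushforward really is weak${}^*$ continuous in this setting; this is standard because $\pi_0 \colon \Gamma \to \Omega \times (\overline{V}\cup\{x^*\})$ is a continuous map between compact metrizable spaces, so for any $f \in C(\Omega \times (\overline{V}\cup\{x^*\}))$, $f \circ \pi_0 \in C(\Gamma)$, and weak${}^*$ convergence $\tilde{\mu}_n \to \tilde{\mu}$ implies $\int f\, d\pi_0^*\tilde{\mu}_n = \int f \circ \pi_0\, d\tilde{\mu}_n \to \int f \circ \pi_0\, d\tilde{\mu} = \int f\, d\pi_0^*\tilde{\mu}$. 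With continuity of $\pi_0^*$ established, the rest is immediate, and I expect no real obstacle in writing it up.
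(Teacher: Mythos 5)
Your proof is correct and follows essentially the same route as the paper's: use Proposition \ref{prop:inv_measure} for the $\Phi$-invariance part of the inclusion, note that the marginal condition $p_1^*\circ\pi_0^* = \ProbQ$ is built into the definition of $\PPinvtheta$, and then transfer compactness and convexity from $\PPinvtheta(\Gamma_+)$ (Proposition \ref{prop:mesinvtheta}) via continuity and linearity of $\pi_0^*$. The only difference is that you spell out the support condition and the weak$^*$ continuity of the pushforward, which the paper treats as routine.
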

\begin{proof} 
Since $\pi_0^*$ is continuous and linear, Proposition \ref{prop:mesinvtheta} implies that $\pi_0^*(\PPinvtheta(\Gamma_+))$ is compact and convex. Proposition \ref{prop:inv_measure} implies that $\pi_0^*(\PPinvtheta(\Gamma_+)) \subset \PPinvphi(\Omega \times \overline{V})$.\end{proof}

\begin{remark}\label{rem:invariance}
The definition of $\Theta$ and assumption \textnormal{\textbf{H3}} imply that the sets $\Gamma_0$ and $\Gamma_+\backslash \Gamma_0$ are both positively $\Theta$-invariant. Therefore every $\Theta$-invariant measure $\tilde{\mu}$ on $\Gamma_+$ can be written as a convex combination of two $\Theta$-invariant measures $\tilde{\nu}_0$ and $\tilde{\nu}_1$ such that $\tilde{\nu}_0(\Gamma_0)=1$ and $\tilde{\nu}_1(\Gamma_+\backslash\Gamma_0)=1$.
\end{remark}

In order to restate Theorem \ref{thm:main1'} in the space of trajectories, the random matrix products (\ref{def:product}) over $\Phi$ have to be rewritten as products over $\Theta$. For each $i\in \{1,\dots,m\}$, define the maps $A_i : \Gamma \rightarrow \mathbf{M}_{n_i}(\R)$ by
\[
A_i(\gamma)= \left\{
    \begin{array}{ll}
        A_i(\omega, x^*) & \text{ if } x_0=x^* \\
        A_i(\omega, x_0) &\text{ either }
    \end{array}
\right.
\]
As (\ref{def:product}), we write
\begin{equation}\label{def:cocycle}
A^t_i(\gamma) := A_i(\gamma)\cdots A_i(\Theta^{t-1}(\gamma)).
\end{equation}
The conjugacy (\ref{Eq_conj}) implies that for all $(\omega,x) \in \Omega \times \overline{V}$ and all $\gamma \in \pi_0^{-1}(\omega,x)$, we have
\begin{equation}\label{eq:cocycle}
A^t_i(\gamma)=A^t_i(\omega,x),
\end{equation}
for all $t\ge 0$.

Then the \emph{long-term growth rates} for the product (\ref{eq:cocycle}) is
\[
\rr_i(\gamma) := \limsup_{t \to \infty}\frac{1}{t} \ln \Vert A^t_i(\gamma)\Vert,
\]
and, for a $\Theta$-invariant measure $\tilde{\mu}$, the \emph{long-term growth rates} is
\[
\rr_i(\tilde{\mu}) = \int_{\Gamma}\rr_i(\gamma)d\tilde{\mu}.
\]

The following proposition shows that the long-term growth rates for the product (\ref{eq:cocycle}) defined on the trajectory space are consistent with those for the product \eqref{def:product} defined on $\Omega \times \overline{V}$.
\begin{proposition}\label{prop:equality-lyap}
For all species $i$, we have
\begin{enumerate}
\item[(i)]  $\rr_i(\omega,x) = \rr_i(\gamma)$, for all $(\omega,x) \in \Omega \times \overline{V}$ and for all $\gamma \in \pi_0^{-1}(\omega,x)$,
\item[(ii)] for all $\tilde{\mu} \in \PPinvtheta(\Gamma_+)$, $\pi_0^*(\tilde{\mu}) \in \PPinvphi(\Omega \times \overline{V})$, and \[
\rr_i(\tilde{\mu})=\rr_i(\pi^*_0(\tilde{\mu})).
\]
\end{enumerate}
 
\end{proposition}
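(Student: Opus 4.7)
The plan is to derive both assertions directly from the pointwise identity $A^t_i(\gamma)=A^t_i(\omega,x)$ recorded in equation~\eqref{eq:cocycle}, which is itself a consequence of the semiconjugacy $\pi_0\circ\Theta=\Phi\circ\pi_0$. The substantive issue is only bookkeeping: making sure that, for the $\gamma\in\Gamma_+$ one integrates against, one lands in the second branch of the piecewise definition of $A_i(\gamma)$, so that the products on $\Gamma$ and on $\Omega\times\overline V$ agree termwise.

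For part (i), fix $(\omega,x)\in\Omega\times\overline V$ and $\gamma=(\omega,\{x_t\}_{t\in\Z})\in\pi_0^{-1}(\omega,x)$. Then $x_0=x\in\overline V$, so $A_i(\gamma)=A_i(\omega,x_0)$. Iterating the semiconjugacy \eqref{Eq_conj} gives $\pi_0(\Theta^k\gamma)=\Phi^k(\omega,x)$ for every $k\ge 0$, and since $\Phi(\Omega\times\overline V)\subset\Omega\times\overline V$ we have $\Theta^k\gamma\in\Gamma_+$ with its zeroth coordinate in $\overline V$. This places every factor in the definition \eqref{def:cocycle} of $A^t_i(\gamma)$ on the non‑$x^*$ branch, yielding $A_i(\Theta^k\gamma)=A_i(\Phi^k(\omega,x))$ factor by factor, hence \eqref{eq:cocycle}. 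Taking $\frac1t\log\|\cdot\|$ and $\limsup_{t\to\infty}$ on both sides gives $\rr_i(\gamma)=\rr_i(\omega,x)$, which proves (i).

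For part (ii), the inclusion $\pi_0^{*}(\tilde\mu)\in\PPinvphi(\Omega\times\overline V)$ is exactly Corollary~\ref{corollary:inv-theta}, so only the integral identity remains. By part (i), the equality $\rr_i(\gamma)=\rr_i\circ\pi_0(\gamma)$ holds for every $\gamma\in\Gamma_+$, hence $\tilde\mu$‑almost everywhere since $\tilde\mu(\Gamma_+)=1$. The map $\gamma\mapsto\rr_i(\gamma)$ is the $\limsup$ of continuous functions and is therefore Borel measurable and, by hypothesis \textbf{H4'} together with the compactness of the attractor, uniformly bounded; the same applies to $(\omega,x)\mapsto\rr_i(\omega,x)$. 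The change‑of‑variables formula for push‑forward measures then gives
\[
\rr_i(\tilde\mu)=\int_\Gamma\rr_i(\gamma)\,d\tilde\mu(\gamma)=\int_{\Gamma_+}\rr_i\circ\pi_0(\gamma)\,d\tilde\mu(\gamma)=\int_{\Omega\times\overline V}\rr_i(\omega,x)\,d\pi_0^{*}(\tilde\mu)(\omega,x)=\rr_i(\pi_0^{*}(\tilde\mu)),
\]
which is the desired identity.

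There is no genuine obstacle here; the only delicate point is verifying that each $\Theta^k\gamma$ (for $\gamma\in\pi_0^{-1}(\Omega\times\overline V)$ and $k\ge 0$) actually has its zeroth coordinate in $\overline V$ rather than at the auxiliary point $x^*$, so that the piecewise definition of $A_i$ on $\Gamma$ collapses to the expression on $\Omega\times\overline V$ along the whole forward orbit. This is precisely what forward $\Phi$‑invariance of $\Omega\times\overline V$ combined with \eqref{Eq_conj} guarantees, and once this is in hand both (i) and (ii) are immediate.
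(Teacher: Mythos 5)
Your proof is correct and follows the same route as the paper's (very terse) argument: part (i) is unpacked from equality \eqref{eq:cocycle}, which you correctly re-derive from the forward $\Phi$-invariance of $\Omega\times\overline V$ and the semiconjugacy \eqref{Eq_conj}, and part (ii) combines Corollary~\ref{corollary:inv-theta} with (i) via the change-of-variables formula for push-forwards. One cosmetic remark: the case split in the definition of $A_i$ on $\Gamma$ is actually vacuous (both branches return $A_i(\omega,x_0)$, since $x_0=x^*$ in the first case), so the ``delicate point'' you flag about landing off the $x^*$ branch is not really needed --- though your verification that $\Theta^k\gamma$ stays in $\Gamma_+$ is exactly what makes \eqref{eq:cocycle} hold termwise, so the argument is sound either way.
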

\begin{proof} 
Assertion (i) is a consequence of equality (\ref{eq:cocycle}), and assertion (ii) is a consequence of  Corollary \ref{corollary:inv-theta}.
\end{proof}

We can now state an equivalent version of Theorem \ref{thm:main1'} on the space of trajectories $\Gamma$.
\begin{theoreme}\label{thm:main2}
If one of the following equivalent conditions hold
\begin{enumerate}
\item[(a)] $\rr_*(\tilde{\mu}):= \max_{0\le i \le m}\rr_i(\tilde{\mu}) >0$ for every probability measure $\tilde{\mu} \in \PPinvtheta(\Gamma_0)$, or 
\item[(b)] there exist positive constants $p_1,\dots,p_m$ such that 
\[
\sum_i p_i \rr_i(\tilde{\mu})>0
\]
for every ergodic probability measure $\tilde{\mu} \in \PPinvtheta(\Gamma_0)$, or
\item[(c)] there exist positive constants $p_1,\dots,p_m$ such that 
\[
\sum_i p_i \rr_i(\omega,x)>0
\]
for every $x\in S_0$ and $\ProbQ$-almost all $\omega \in \Omega$,
\end{enumerate}
then for all $\eps >0$, there exists $\eta>0$ such that
\[
\limsup_{t \to \infty} \Pi_t(\omega,x)( \bS_{\eta}) \le \eps \ \ \text{ for }\ProbQ\text{-almost all } \omega,
\]
whenever $x \in \R^n_+ \backslash \bS_0$.
 
\end{theoreme}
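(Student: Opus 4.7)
The plan is to first establish the equivalences among (a), (b), (c) and then prove the persistence conclusion by contradiction, using weak-$*$ limits of occupation measures on the trajectory space $\Gamma$ combined with an average Lyapunov function of the form $U(x) = \sum_i p_i \log \|x^i\|$.

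For the equivalences, (b) $\Rightarrow$ (a) is immediate since $\rr_*(\tilde\mu) \ge (\sum_j p_j)^{-1} \sum_i p_i \rr_i(\tilde\mu) > 0$. For (a) $\Rightarrow$ (b), I would use the ergodic decomposition of each $\tilde\mu \in \PPinvtheta(\Gamma_0)$ into ergodic measures, on each of which $\rr_i$ is constant and $\max_i \rr_i > 0$ by (a), together with a minimax/separating-hyperplane argument on the convex compact set $\PPinvtheta(\Gamma_0)$ (Proposition \ref{prop:mesinvtheta}): the convex hull of the vectors $(\rr_1(\tilde\mu), \dots, \rr_m(\tilde\mu))$ as $\tilde\mu$ ranges over ergodic measures is disjoint from the closed negative orthant, and a separating hyperplane yields common positive weights $p_1,\dots,p_m$. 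For (b) $\Leftrightarrow$ (c), Birkhoff's theorem and the subadditive ergodic theorem applied on $\Gamma$ show that, for $\ProbQ$-a.e.\ $\omega$ and any $x\in S_0$, the weak-$*$ limit points of $t^{-1}\sum_{s<t}\delta_{\Theta^s\gamma}$ (for $\gamma$ projecting to $(\omega,x)$) lie in $\PPinvtheta(\Gamma_0)$, and the pointwise rate $\rr_i(\omega,x) = \rr_i(\gamma)$ (Proposition \ref{prop:equality-lyap}) equals the integral of $\rr_i$ against such a limit.

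For the main implication, assume (a) and, for contradiction, that stochastic persistence fails: there exist $\varepsilon > 0$, $x_0 \in \R^n_+\setminus S_0$, and a set of $\omega$'s of positive $\ProbQ$-measure on which $\limsup_t \Pi_t(\omega,x_0)(\bS_\eta) > \varepsilon$ for every $\eta > 0$. Lift these occupation measures to $\Gamma_+$ via the canonical continuous section and extract a subsequential weak-$*$ limit $\tilde\mu^*$; by Proposition \ref{prop:mesinvtheta} and a diagonal argument letting $\eta \to 0$, we obtain $\tilde\mu^* \in \PPinvtheta(\Gamma_+)$ with $\tilde\mu^*(\Gamma_0) \ge \varepsilon$. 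Using Remark \ref{rem:invariance}, decompose $\tilde\mu^* = \alpha \tilde\nu_0 + (1-\alpha)\tilde\nu_1$ with $\tilde\nu_0(\Gamma_0) = 1$, $\tilde\nu_1(\Gamma_+\setminus\Gamma_0) = 1$, and $\alpha \ge \varepsilon$. Proposition \ref{prop_lambdanul} gives $\rr_i(\tilde\nu_1) = 0$ for each species present on the support of $\tilde\nu_1$, so $\sum_i p_i \rr_i(\tilde\mu^*) = \alpha \sum_i p_i \rr_i(\tilde\nu_0) > 0$ by (b).

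The contradiction will come from comparing this strict positivity with a telescoping bound. The one-step increments $U(X_{t+1}) - U(X_t) = \sum_i p_i \log(\|X_{t+1}^i\|/\|X_t^i\|)$ are, up to bounded corrections coming from Ruelle's random Perron--Frobenius theorem (applicable by \textbf{H3}), precisely the cocycle logarithms whose Birkhoff averages against $\tilde\mu^*$ equal $\sum_i p_i \rr_i(\tilde\mu^*)$. On the other hand, if $U$ were bounded, $t^{-1}(U(X_t) - U(X_0)) \to 0$, contradicting strict positivity. The main obstacle, and the technical heart of the proof, is that $U$ is $-\infty$ on $S_0$, so excursions into $\bS_\eta$ can contribute arbitrarily negative terms. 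To overcome this I would introduce the family of continuous regularizations $U_\beta(x) = \sum_i p_i \log(\|x^i\| + \beta)$ on the compact set $S$, prove the Birkhoff identity for each fixed $\beta > 0$ using continuity and compactness, and let $\beta \to 0$ while controlling the error via uniform integrability of $\log\|x^i\|$ against the limit measure --- the latter supplied by the lower semicontinuity provided by Ruelle's theorem applied to the primitive cocycles $A_i^t$ (\textbf{H3}). Executing this regularization and limit-passing, and ruling out that occupation measures concentrate on $S_0$ with divergent logarithmic mass, is where the work lies.
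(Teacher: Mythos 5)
Your setup — lift to the trajectory space $\Gamma$, extract weak-$*$ limits of occupation measures, decompose the limit via Remark~\ref{rem:invariance}, and use minimax plus ergodic decomposition for the equivalence of (a), (b), (c) — matches the paper's architecture. But there is a genuine gap in how you close the main argument, and the route you propose to close it would not work without substantial additional machinery.

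The key fact you are missing is that any weak-$*$ limit $\tilde\mu^*$ of $\tilde\Lambda_{t_k}(\gamma)$ with $\gamma\in\Gamma_+\setminus\Gamma_0$ automatically satisfies $\rr_i(\tilde\mu^*)\le 0$ for \emph{every} $i$. This follows from Proposition~\ref{prop_intlambda} together with Proposition~\ref{prop_HS}(ii): $\rr_i(\tilde\mu^*)=\int\bzeta_i\,d\tilde\mu^*=\lim_k t_k^{-1}\sum_{s<t_k}\bzeta_i(\Theta^s\gamma)\le\rr_i(\gamma)$, and Corollary~\ref{cor_lambdanegative} gives $\rr_i(\gamma)\le 0$ because the trajectory is bounded inside $\overline V$. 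With this in hand, the contradiction is immediate: if $\tilde\mu^*(\Gamma_0)=\alpha>0$, then writing $\tilde\mu^*=\alpha\tilde\nu_0+(1-\alpha)\tilde\nu_1$ and using $\rr_i(\tilde\nu_1)=0$ (Proposition~\ref{prop_lambdanul}), condition (a) forces $\rr_*(\tilde\mu^*)=\alpha\,\rr_*(\tilde\nu_0)>0$, contradicting $\rr_*(\tilde\mu^*)\le 0$. No average Lyapunov function, no regularization $U_\beta$, and no uniform integrability of $\log\|x^i\|$ are needed. Your proposed telescoping route is precisely where the trouble would lie: since you have arranged $\tilde\mu^*(\Gamma_0)\ge\varepsilon$, the quantity $\int\log\|x^i\|\,d\tilde\mu^*$ you want to control is $-\infty$, and the claimed ``uniform integrability supplied by lower semicontinuity from Ruelle's theorem'' is not something Ruelle's theorem gives you; establishing it would be essentially circular with the conclusion.

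Second, your argument as stated would only rule out limit measures charging $\Gamma_0$ for a fixed initial condition $x_0$; the theorem requires an $\eta$ that works simultaneously for all $x\in\R^n_+\setminus S_0$ and $\ProbQ$-a.e.\ $\omega$. The paper handles this via a separate compactness/uniformity statement, Proposition~\ref{lemma1}: assuming (a), there is an $\eta^*>0$ with $\tilde\mu(\Gamma_\eta)<\eps$ for every $\tilde\mu\in\PPinvtheta(\Gamma_+\setminus\Gamma_0)$ and every $\eta<\eta^*$. That uniform estimate over the whole compact convex set of invariant measures, not a per-trajectory contradiction, is what converts ``the limit measure charges no mass to $\Gamma_0$'' into the quantified persistence conclusion. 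Your proof sketch never addresses why the $\eta$ you obtain is independent of the initial condition.
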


\begin{remark}\label{remark:equi-conditions}
 Condition (c) of Theorem \ref{thm:main2} and (iii) Theorem \ref{thm:main1'} are  equivalent, and the implications from conditions (iii) to (ii) and (ii) to (i) of Theorem~\ref{thm:main1'} are direct. The proof of Theorem \ref{thm:main2} (see section \ref{partE}) shows that (a), (b) and (c) of Theorem~\ref{thm:main2} are equivalent. Finally, condition (i) of Theorem~\ref{thm:main1'} implies condition (a) of Theorem~\ref{thm:main2} as a direct consequence of assertion (ii) of Proposition \ref{prop:equality-lyap}. Hence, Theorems~\ref{thm:main1'} and  \ref{thm:main2} are equivalent.
\end{remark}

\subsection{Random Perron-Frobenius Theorem and long-term growth rates\label{partC}}

In this section, we first state Proposition 3.2 of \cite{ruelle-79b} (which we call the Random Perron-Frobenius Theorem) in its original framework, and extend it to ours. We use this extension to deduce some properties on the long-term growth rates which are crucial for the proof of Theorem \ref{thm:main2}. Let $\interior \R_+^d = \{x\in \R^d_+ : \prod_i x_i >0\}$ be the interior of $\R^d_+$.

\begin{proposition}[\cite{ruelle-79b}]\label{prop:original_ruelle}
Let $\Xi$ be a compact space, $\Psi :  \Xi \rightarrow \Xi$ be an homeomorphism. Consider a continuous map $T :  \Xi \rightarrow \mathbf{M}_d(\R)$ and its transpose $T^*$ defined by $T^*(\xi)=T(\xi)^*$. Write
\[
T^t(\xi)= T(\xi)\cdots T(\Psi^{t-1}\xi),
\] 
and assume that
\begin{enumerate}
\item[\textbf{A:}] for all $\xi \in \Xi$, $T(\xi)(\R^d_+) \subset \{0\} \cup \interior \R_+^d$.
\end{enumerate}

Then there exist continuous maps $u, v : \Xi  \rightarrow \R^d_+$ with $\Vert u(\xi)\Vert =\Vert v(\xi)\Vert =1$ such that
\begin{itemize}
\item[(i)] the line bundles $E$ (resp. $F$) spanned by $u(\cdot)$ (resp. $v(\cdot)$) are such that $\R^{d} = E\bigoplus F^{\perp}$ where $b \in F(\xi)^{\perp}$ if and only if $\langle b(\xi), v(\xi)\rangle =0$.
\item[(ii)]  $E$ (resp. $F$) is $T,\Psi$-invariant (resp. $T^*,\Psi^{-1}$-invariant), i.e. $E(\Psi(\xi)) = E(\xi)T(\xi)$ and $F(\Psi \xi)T^*(\Psi\xi)=F(\xi)$, for all $\xi \in \Xi$;
\item[(iii)] there exist constants $\alpha <1$ and $C>0$ such that for all $t\ge 0$, and $\xi \in \Xi$,
\[
\Vert b T(\xi)\cdots T(\Psi^{t-1}\xi) \Vert \le C\alpha^t \Vert a T(\xi)\cdots T(\Psi^{t-1}\xi) \Vert, 
\]
for all unit vectors $a \in E(\xi), b \in F(\xi)^{\perp}$.
\end{itemize}
\end{proposition}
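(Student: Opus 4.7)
The natural approach is through Birkhoff's Hilbert projective metric on the positive cone $\interior \R^d_+$. Assumption A says $T(\xi)$ maps $\R^d_+\setminus\{0\}$ into $\interior \R^d_+$, which by continuity of $T$ and compactness of $\Xi$ forces the projective (Hilbert) diameter of $T(\xi)(\R^d_+\setminus\{0\})$ to be uniformly bounded by some $D<\infty$. Birkhoff's contraction theorem then implies that each $T(\xi)$ acts on the projectivized open cone $P\interior \R^d_+$, endowed with the Hilbert metric $d_H$, as a strict contraction with uniform contraction ratio $\alpha := \tanh(D/4) < 1$.

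Next I would construct the top invariant direction $u(\xi)$ spanning $E(\xi)$ by pulling a fixed interior vector $e_0 \in \interior \R^d_+$ backward along the cocycle: set
\[
u_t(\xi) = \frac{e_0\, T(\Psi^{-t}\xi)\, T(\Psi^{-t+1}\xi)\cdots T(\Psi^{-1}\xi)}{\Vert e_0\, T(\Psi^{-t}\xi)\, T(\Psi^{-t+1}\xi)\cdots T(\Psi^{-1}\xi)\Vert}.
\]
The uniform contraction gives $d_H([u_{t+k}(\xi)], [u_t(\xi)]) \leq D\alpha^t$, so the projective classes $[u_t(\xi)]$ are Cauchy and converge to a well-defined limit $[u(\xi)]$; the bound is uniform in $\xi$, which combined with continuity of $T$ yields continuity of $\xi \mapsto u(\xi)$. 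The identity $u_t(\Psi\xi) \propto u_{t-1}(\xi)T(\xi)$ passes to the limit to give the equivariance $u(\Psi\xi) \propto u(\xi)T(\xi)$, that is, $E(\Psi\xi)=E(\xi)T(\xi)$. An entirely analogous construction using the adjoint cocycle $T^*$ pushed forward, namely normalizing $T^*(\xi)T^*(\Psi\xi)\cdots T^*(\Psi^{t-1}\xi)e_0^{*}$, produces the continuous column vector $v(\xi)$ spanning $F(\xi)$, with the dual equivariance $F(\Psi\xi)T^*(\Psi\xi)=F(\xi)$.

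For the splitting $\R^d = E \oplus F^\perp$ of (i), observe that both $u(\xi)$ and $v(\xi)$ lie in the open positive cone, so $\langle u(\xi), v(\xi)\rangle > 0$, giving $E(\xi)\cap F(\xi)^\perp = \{0\}$; dimensions then close the direct sum. For the exponential separation (iii), the $T$-invariant splitting $E\oplus F^\perp$ isolates the dominant direction: any row vector $w$ in the positive cone, projected and iterated, approaches $[u(\Psi^t\xi)]$ at rate $\alpha^t$ in $d_H$. Decomposing a unit vector $b \in F(\xi)^\perp$ and a unit vector $a\in E(\xi)$, comparing the iterates $(a+b)T^t$ and $aT^t$, and translating the Hilbert contraction into a Euclidean ratio using a uniform lower bound on $|\langle u(\xi), v(\xi)\rangle|$ yields the bound $\Vert bT^t\Vert \leq C\alpha^t \Vert aT^t\Vert$ with $C$ depending only on $D$ and the uniform angle between $E$ and $F^\perp$.

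The main obstacle is the last step: converting the projective (Hilbert-metric) contraction of Stage 1 into the genuine norm inequality (iii). The Hilbert metric is scale-invariant, so passing to a ratio of Euclidean norms requires controlling how the top invariant direction $E(\xi)$ separates from the invariant hyperplane $F(\xi)^\perp$ uniformly in $\xi$. This uniform angle bound comes from continuity of both $u$ and $v$ together with compactness of $\Xi$, but its quantitative dependence on the uniform projective diameter $D$, and consequently the explicit value of $C$, is the delicate point of the argument.
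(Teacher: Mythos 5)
The paper does not prove this proposition; it is stated with a citation to \citet{ruelle-79b} (Proposition~3.2 there), and the only argument in the paper that touches it is the proof of Proposition~\ref{prop:ruelle}, which \emph{uses} Proposition~\ref{prop:original_ruelle} as a black box applied to the composed map $T'(\xi)=T(\xi)\cdots T(\Psi^{s-1}\xi)$. So there is no ``paper's proof'' to compare against; the comparison is really with Ruelle's original argument, and your route via Birkhoff's Hilbert projective metric is essentially Ruelle's own (he also works with a projective metric and a cone-contraction/graph-transform argument). The outline you give---uniform projective diameter from compactness, pull-back construction of $u(\xi)$ as a Cauchy limit in the Hilbert metric, a dual construction for $v(\xi)$, and then converting projective contraction to the Euclidean bound (iii)---is the right skeleton.

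That said, there are genuine gaps. First, a small but real technical point: assumption \textbf{A} does \emph{not} give $T(\xi)(\R^d_+\setminus\{0\})\subset\interior\R^d_+$; it allows $x T(\xi)=0$ for some nonzero $x$ (rows of $T(\xi)$ may vanish identically). The Birkhoff-diameter estimate must therefore be stated for the image cone minus the origin, which is the cone generated by the strictly positive rows of $T(\xi)$; this still has finite Hilbert diameter, but the statement and the definition of the projectivized action need to be adjusted accordingly (one works with a quotient where the zero image is discarded). Second, your dual construction $T^*(\xi)T^*(\Psi\xi)\cdots T^*(\Psi^{t-1}\xi)e_0^*$ does not produce the equivariance needed for (ii) and for the invariance of $F(\cdot)^\perp$: what one actually needs is that $T(\xi)$ carries the hyperplane $\{b:\langle b,v(\xi)\rangle=0\}$ into $\{b:\langle b,v(\Psi\xi)\rangle=0\}$, i.e.\ $T(\xi)\,v(\Psi\xi)^* \propto v(\xi)^*$, so the correct Cauchy sequence pushes a fixed column vector \emph{forward} through the untransposed cocycle $T(\xi)T(\Psi\xi)\cdots T(\Psi^{t-1}\xi)e_0^*$ and normalizes; the time-direction and transpose in your formula would have to be straightened out. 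Third---and this is the gap you yourself flag---the passage from Hilbert-metric contraction to the Euclidean inequality (iii) is not carried out. Filling it requires two quantitative ingredients: a uniform lower bound on $\langle u(\xi),v(\xi)\rangle$ over $\xi\in\Xi$ (which follows from continuity of $u,v$, positivity, and compactness), and a lemma translating Hilbert-metric proximity of a cone vector $w$ to $u(\Psi^t\xi)$ into a bound on the ratio of the $F(\Psi^t\xi)^\perp$-component of $w$ to its $E(\Psi^t\xi)$-component in the splitting from (i). As written, the proposal identifies these needs without closing them, so the proof is an accurate sketch of Ruelle's argument but is not yet complete.
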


Our choice to called Proposition \ref{prop:original_ruelle} the \emph{Random Perron-Frobenius Theorem} is motivated by the following remark.
\begin{remark}\label{remark-perron-frobenius}
Assume that the map $T :  \Xi \rightarrow \mathbf{M}_d(\R)$ is constant, i.e. there exists $B \in  \mathbf{M}_d(\R)$ a positive matrix such that $T(\xi)= B$ for all $\xi \in \Xi$. Then Proposition \ref{prop:original_ruelle} can be restated as follows: there exist $u,v \in \R^d_+$ such that $u(\xi)=u$ and $v(\xi)=v$ for all $\xi \in \Xi$; the positive vectors $u$ and $v^*$ are respectively the right and left eigenvector of $B$ associated to its dominant eigenvalue (also called \emph{Perron eigenvalue}) $r>0$; assertion (iii) can be restated as the \emph{strong ergodic theorem of demography}. That is 
\begin{equation*}
\lim_{t \to \infty}B^tx/r^t =  v^*xu,
\end{equation*}
for all $x\in \interior \R^d_+$. Since $B^tx$ is the population at time $t$ with an initial population $x$, the interpretation of this theorem is that the eigenvector $u$ represents the \emph{stable population structure}, and the coefficients of $v$ are the \emph{reproductive values} of the population.

In Proposition \ref{prop:original_ruelle}, the stable population structure and the reproductive values can not be fixed vectors whereas long-term dynamics of the population depends on the sequence of the environment incapsulated in $\xi$. Therefore, they have to be functions of the environment, i.e. $u, v : \Xi  \rightarrow \R^d_+$. To interpret those functions, we look at the following consequence of assertion (iii) 
\begin{equation}\label{eq:stable-structure}
\lim_{t\to \infty}\frac{xT^t(\Psi^{-t}\xi)}{\Vert xT^t(\Psi^{-t}\xi) \Vert}=u(\xi),
\end{equation}
and its dual version 
\begin{equation}\label{eq:reprod-value}
\lim_{t\to \infty}\frac{T^t(\xi)y^*}{\Vert T^t(\xi)y^* \Vert}=v(\xi)^*.
\end{equation}
The former equation appears in the proof of Proposition \ref{prop:ruelle} as equation \eqref{eq:lim_ruelle}. For the sake of interpretation, assume that the environment along time has been fixed  (here $\dots,\Psi^{-1}\xi,\xi,\Psi^1\xi,\dots$). Then \eqref{eq:stable-structure} is interpreted as follows: whatever was the population a long time ago (here $x$), its structure today is given by $u(\xi)$. For equation \eqref{eq:reprod-value}, the interpretation is: whatever we assume to be the reproductive values in a long time (here $y$), the reproductive values at time $t=0$ is given by $v(\xi)$.

In applications, the environment is represented by a stationary and ergodic process $(E_t)$. Here $\xi$ represents itself a realization of this process, i.e. a possible trajectory of the environment. Therefore, there exist two stationary and ergodic processes $(U_t)$ and $(V_t)$ such that respectively $u(\xi)$ and $v(\xi)$ are realizations of them. Then equations \eqref{eq:stable-structure} and \eqref{eq:reprod-value} can be interpreted as for any initial population, in a long-term, the stage structure are given by a version of the process $(U_t)$ and the reproductive values are given by a version of $(V_t)$. 

\end{remark}
Since assumption \textbf{H2} does not directly imply assumption \textbf{A} for the map $A_i(\cdot,\cdot)$, we need to extend Ruelle's proposition to the case where
\begin{enumerate}
\item[\textbf{A1':}] for all $\xi \in \Xi$, $T(\xi)\interior \R^d_+ \subset \interior \R^d_+$, and
\item[\textbf{A2':}] there exists $s\ge1$ such that, for all $\xi \in \Xi$, $T(\xi)\cdots T(\Psi^{s-1}\xi)(\R^d_+) \subset \{0\} \cup \interior \R^d_+$.
\end{enumerate}

\begin{proposition}\label{prop:ruelle}
The conclusions of Proposition \ref{prop:original_ruelle} still hold under assumptions \textnormal{\textbf{A1'-A2'}}.
\end{proposition}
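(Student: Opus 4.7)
The strategy is to apply Proposition \ref{prop:original_ruelle} to the accelerated dynamics $(\widetilde\Psi, \widetilde T) := (\Psi^s,\, T\cdot (T\circ\Psi)\cdots(T\circ\Psi^{s-1}))$, which satisfies hypothesis \textbf{A} by \textbf{A2'}, and then to transfer the resulting splitting back to the one-step cocycle. This yields continuous unit-norm maps $\tilde u, \tilde v : \Xi \to \R^d_+$, line bundles $\tilde E, \tilde F$, and constants $\tilde\alpha<1$, $\tilde C>0$ satisfying conditions (i)--(iii) of Proposition \ref{prop:original_ruelle} with respect to $(\widetilde\Psi, \widetilde T)$. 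In particular one gets $\tilde E(\Psi^s\xi) = \tilde E(\xi)T^s(\xi)$ and the contraction $\|b T^{sk}(\xi)\| \le \tilde C\tilde\alpha^k \|a T^{sk}(\xi)\|$ for unit $a\in\tilde E(\xi)$, $b\in\tilde F(\xi)^{\perp}$.

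The heart of the proof is to show that $\tilde u$ (and analogously $\tilde v$) is already invariant under the one-step dynamics, i.e.\ $\tilde u(\Psi\xi)$ is positively proportional to $\tilde u(\xi)T(\xi)$. I would use the stable-structure characterization of $\tilde u$ coming from Ruelle's assertion (iii) applied to $\widetilde T$: for any $x\in\interior\R^d_+$,
\[
\tilde u(\xi) = \lim_{k\to\infty}\frac{x\,T^{sk}(\Psi^{-sk}\xi)}{\|x\,T^{sk}(\Psi^{-sk}\xi)\|}.
\]
Multiplying this limit on the right by the continuous matrix $T(\xi)$ identifies $\tilde u(\xi)T(\xi)/\|\tilde u(\xi)T(\xi)\|$ with the limit, along the subsequence $t_k=sk+1$, of $xT^{t_k}(\Psi^{-t_k}(\Psi\xi))/\|\cdot\|$. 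To identify this with $\tilde u(\Psi\xi)$, I would factor $T^{sk+1}(\Psi^{-(sk+1)}\Psi\xi) = T(\Psi^{-(sk+1)}\Psi\xi)\cdot T^{sk}(\Psi^{-sk}\Psi\xi)$. By \textbf{A1'}, the vector $z_k := xT(\Psi^{-(sk+1)}\Psi\xi)$ lies in a compact subset of $\interior\R^d_+$ (independent of $k$, by compactness of $\Xi$ and continuity of $T$). Ruelle's contraction (iii) for $\widetilde T$ then yields, uniformly over $z$ in any compact subset of the interior, $z\,T^{sk}(\Psi^{-sk}\Psi\xi)/\|\cdot\|\to \tilde u(\Psi\xi)$; equating the two limits gives the desired one-step invariance. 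The dual identity for $\tilde F$ follows by the analogous argument using the reproductive-value limit \eqref{eq:reprod-value}.

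Once one-step invariance is in hand, I set $u:=\tilde u$, $v:=\tilde v$, $E:=\tilde E$, $F:=\tilde F$, so that (i) and (ii) of the conclusion are immediate. For the contraction (iii) at arbitrary time $t$, I would write $t=sk+r$ with $0\le r<s$ and factor $T^t(\xi)=T^{sk}(\xi)\cdot T^r(\Psi^{sk}\xi)$. Ruelle's (iii) for $\widetilde T$ controls the $T^{sk}$ factor at rate $\tilde\alpha^k$, while compactness of $\Xi$, continuity of $T$, and the fact that $u(\eta)T^r(\eta)$ stays in a compact subset of $\interior\R^d_+$ (hence is uniformly bounded below in norm) provide matching two-sided bounds on the distortion caused by the trailing $T^r$ factor. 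Setting $\alpha:=\tilde\alpha^{1/s}$ and absorbing the uniform constants into a new $C$ gives (iii) in the stated form.

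The main obstacle is the one-step invariance $\tilde E(\Psi\xi)=\tilde E(\xi)T(\xi)$. A purely algebraic attempt to ``cancel'' $T^{s-1}(\Psi\xi)$ from both sides of $\tilde E(\xi)T^s(\xi)=\tilde E(\Psi\xi)T^{s-1}(\Psi\xi)$ fails, since $T^{s-1}(\Psi\xi)$ need not be injective on the relevant 2-plane inside $\R^d$. The resolution via the stable-structure limit is precisely what makes \textbf{A1'}--\textbf{A2'} do the same work as \textbf{A}: after a single application of $T$ every starting vector lives in $\interior\R^d_+$, and the uniform Ruelle contraction for $\widetilde T$ then forces convergence of the renormalized orbit to the common stable direction, absorbing the $k$-dependence of the starting vector $z_k$.
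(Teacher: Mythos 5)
Your proposal is correct and takes essentially the same route as the paper's own proof: apply Ruelle's proposition to the time-$s$ accelerated cocycle $T'=T\cdot(T\circ\Psi)\cdots(T\circ\Psi^{s-1})$, deduce one-step invariance of the resulting bundles from the stable-structure limit characterization of $u(\cdot)$ (the paper's equation \eqref{eq:lim_ruelle}), and obtain the contraction (iii) at arbitrary time by factoring $T^t$ into $s$-blocks plus a bounded remainder, with uniform distortion bounds from compactness of $\Xi$, continuity of $T$, and \textbf{A1'}. The only cosmetic differences are that the paper proves the limit \eqref{eq:lim_ruelle} at all $t$ rather than along multiples of $s$, and in checking (iii) it places the short partial block $T^{s_t}$ before the $s$-blocks whereas you place it after; neither affects the substance.
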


\begin{proof} 
Define the continuous map $ T' : \Xi  \rightarrow \Xi \times \mathbf{M}_d(\R)$ by
\[
T'(\xi)= T(\xi)\cdots T(\Psi^{s-1}(\xi)).
\]

By assumption \textbf{A2'}, $T'(\xi)\R^{d}_+ \subset \{0\} \cup \interior \R^{d}_+$. Therefore, Proposition \ref{prop:original_ruelle} applies to  the map $T'$ and to the homeomorphism $\Psi^s$ which give us maps $u, v : \Xi \rightarrow \R^d_+$ with $\Vert u(\xi)\Vert =\Vert v(\xi)\Vert =1$, their respective vector bundles $E(\cdot),F(\cdot)$, and some constants $C,\alpha$ verifying properties (i), (ii), and (iii).

The vector bundles $E(\cdot),F(\cdot)$ are our candidate bundles for $T$. We need only to check properties (ii) and (iii) for the map $T$ as property (i) is immediate.

We claim that 
\begin{equation}\label{eq:lim_ruelle}
\lim_{t\to \infty}\frac{xT^t(\Psi^{-t}\xi)}{\Vert xT^t(\Psi^{-t}\xi) \Vert}=u(\xi),
\end{equation}
uniformly on all compact subsets of $\R^{d}_+\setminus\{0\}$. The motivation of equation (\ref{eq:lim_ruelle}) follows from assumption \textbf{A2'} which implies that the positive cone is contracted after every interval of time of length $s$. For an interpretation of \eqref{eq:lim_ruelle}, see Remark \ref{remark-perron-frobenius}. Before we prove (\ref{eq:lim_ruelle}), we show property (ii), i.e. $E(\cdot)$ is $T,\Psi$-invariant, is a consequence (\ref{eq:lim_ruelle}). Let $y \in \interior \R^{d}_+\setminus\{0\}$, and $\xi \in \Xi$. Continuity of $T$ and equality (\ref{eq:lim_ruelle}) applied to $y$ imply
\begin{eqnarray*}
u(\xi)T(\xi)&=&\lim_{t \to \infty}\frac{yT^t(\Psi^{-t}\xi)}{\Vert yT^t(\Psi^{-t}\xi) \Vert} T(\xi)\\
&=&\lim_{t \to \infty}\frac{yT(\Psi^{-t}\xi)T^t(\Psi^{-t}(\Psi \xi))}{\Vert yT^t(\Psi^{-t}\xi) \Vert}\\
&=&u(\Psi \xi) \lim_{t\to \infty}\frac{\Vert yT(\Psi^{-t}\xi)T^t(\Psi^{-t}(\Psi \xi) \Vert}{\Vert yT^t(\Psi^{-t}\xi) \Vert},
\end{eqnarray*}
where the final line follows from (\ref{eq:lim_ruelle}) with $\xi =\Psi \xi$ and $x =yT(\Psi^{-t}\xi)/ \Vert y T(\Psi^{-t}\xi) \Vert$ which belongs to the compact $\{z \in \R^d_+ : \Vert z \Vert =1\}$ for all $t\ge0$. This proves property (ii) for $E$. The same argument for the transpose $T'^*$ implies property (ii) for $F$.

Now we prove (\ref{eq:lim_ruelle}). Let $x \in \R^{d}_+\setminus\{0\}$ with $\Vert x \Vert =1$. For every $t \ge0$, define $s_t:= t- [\frac{t}{s}]s$ where $[q]$ is the integer part of $q$. We have
\[
xT^t(\Psi^{-t}\xi)=xT^{s_t}(\Psi^{-t}\xi)T'^{[\frac{t}{s}]}(\Psi^{-t+s_t}\xi).
\]
Since $s_t\le s$ for all $t\ge0$, continuity of $T$, and assumption \textbf{A1'} imply that there is a compact $H \subset \R^d_+\setminus\{0\}$ independent of $x$ such that $xT^{s_t}(\Psi^{-t}\xi)\in H$ for all $t>0$. Then, (\ref{eq:lim_ruelle}) is a consequence of inclusion (3.2) in the proof of Proposition 3.2 in \cite{ruelle-79b} applied to the map $T'$.

It remains to check property (iii): show that there exist $\alpha', C'>0$ such that
\[
\Vert b T^t(\xi) \Vert \le C'\alpha'^t \Vert u(\xi)T^t(\xi) \Vert \text{ for all } t \ge s, \xi \in \Xi, b \in F(\xi)^{\perp}.
\]
We have
\[
b T^t(\xi)=b T^{s_t}(\xi)T'^{[\frac{t}{s}]}(\Psi^{s_t}\xi).
\]
Since $F(\cdot)$ is $T^*$-invariant, $b T^{s_t}(\xi) \in F(\Psi^{s_t}\xi)^{\perp}$ and property (iii) for $T'$ implies

\[
\frac{1}{\Vert b T^{s_t}(\xi) \Vert }\Vert b T^{s_t}(\xi)T'^{[\frac{t}{s}]}(\Psi^{s_t}\xi) \Vert \le \frac{C(\alpha^{\frac{1}{s}})^t}{\Vert u(\xi) T^{s_t}(\xi)\Vert} \Vert  u(\xi) T^{s_t}(\xi)T'^{[\frac{t}{s}]}(\Psi^{s_t}\xi) \Vert.
\]
The continuity of $T$ and $u(\cdot)$, and assumption \textbf{A1'} imply that there exist a constant $R\ge0$ such that
\[
\frac{\max\{\Vert w T^{k}(\xi)\Vert : \Vert w\Vert =1\}}{\min \{\Vert u(\xi) T^{k}(\xi)\Vert :  \xi \in \Xi \}} \le R,
\]
for all $k\le s$ and all $\xi \in \Xi$. Then property (iii) is verified with $C'=CR$ and $\alpha'=\alpha^{\frac{1}{s}}$.\end{proof}

Assumptions \textbf{H2-H3} imply that each continuous map $A_i : \Gamma  \rightarrow \mathbf{M}_{n_i}(\R)$ satisfies assumptions \textbf{A1'-A2'}. Hence Proposition \ref{prop:ruelle} applies to each continuous map $A_i$, and to the homeomorphism $\Theta$ on the compact space $\Gamma$. Then, for each of those maps, there exist row vector maps $u_i(\cdot)$, $v_i(\cdot)$, their respective vector bundles $E_i(\cdot)$, $F_i(\cdot)$, and the constant $C_i, \alpha_i >0$ satisfying properties (i), (ii), and (iii) of Proposition \ref{prop:ruelle}.

For each $i\in \{1,\dots,m\}$, define the continuous map $\bzeta_i : \Gamma \rightarrow \R$ by
\[
\bzeta_i(\gamma) := \ln \Vert u_i(\gamma)A_i(\gamma)\Vert.
\]

In the rest of this subsection, we deduce from Proposition \ref{prop:ruelle} some crucial properties of the invasions rates.

\begin{proposition}\label{prop_HS}
For all $\gamma \in \Gamma$ and every population $i$, $\rr_i(\gamma)$ satisfies the following properties:
\begin{enumerate}
\item[(i)] \[ 
\rr_i(\gamma)=\limsup_{t \to \infty}\frac{1}{t} \ln \Vert vA^t_i(\gamma)\Vert,
\]
for all $v \in \R^{n_i}_+\setminus \{0\}$ and

\item[(ii)] \[ 
\rr_i(\gamma)=\limsup_{t \to \infty}\frac{1}{t} \sum_{s=0}^{t-1}\bzeta_i(\Theta^s(\gamma)).
\]
\end{enumerate}
\end{proposition}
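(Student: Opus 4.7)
My plan is to deduce both statements from Proposition~\ref{prop:ruelle} applied to the map $A_i \colon \Gamma \to \mathbf{M}_{n_i}(\R)$ and the homeomorphism $\Theta$ on $\Gamma$. Assumptions \textbf{H2}--\textbf{H3} (notably primitivity of the sign structure of $A_i$) guarantee \textbf{A1'}--\textbf{A2'}, so we obtain continuous maps $u_i,v_i \colon \Gamma \to \R^{n_i}_+$ of unit norm, a splitting $\R^{n_i} = E_i(\gamma) \oplus F_i(\gamma)^\perp$, invariance $E_i(\Theta\gamma) = E_i(\gamma) A_i(\gamma)$ and the dual relation for $F_i$, and constants $C_i>0$, $\alpha_i<1$ in the contraction estimate (iii).

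I would first derive (ii) from (i). The invariance $E_i(\Theta\gamma) = E_i(\gamma) A_i(\gamma)$, one-dimensionality of $E_i$, and the unit-norm normalization of $u_i$ yield $u_i(\gamma) A_i(\gamma) = c(\gamma)\, u_i(\Theta\gamma)$ for some $c(\gamma) \in \R$. Non-negativity of the entries of $u_i$ and $A_i$ combined with strict positivity of $u_i(\gamma) A_i^s(\gamma)$ (guaranteed by \textbf{A2'}) forces $c(\gamma) > 0$, whence $c(\gamma) = \Vert u_i(\gamma) A_i(\gamma) \Vert = \exp(\bzeta_i(\gamma))$. A straightforward induction then produces the telescoping identity
\[
u_i(\gamma)\, A_i^t(\gamma) = \exp\!\Big(\sum_{s=0}^{t-1}\bzeta_i(\Theta^s\gamma)\Big)\, u_i(\Theta^t\gamma),
\]
so $\Vert u_i(\gamma) A_i^t(\gamma) \Vert = \exp\!\big(\sum_{s=0}^{t-1}\bzeta_i(\Theta^s\gamma)\big)$, and (ii) reduces to (i) applied to the particular vector $v = u_i(\gamma) \in \R^{n_i}_+\setminus\{0\}$.

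For (i), the crux is showing that $v_i(\gamma)$ lies in the interior of $\R^{n_i}_+$. Invariance of $F_i$ under $(T^*,\Theta^{-1})$ means $v_i(\gamma)^*$ is proportional to the image of $v_i(\Theta^s\gamma)^*$ under a product of $s$ transposed matrices; for $s$ large enough this image lies strictly inside $\R^{n_i}_+$ by primitivity, so $v_i(\gamma) \gg 0$. Given $v \in \R^{n_i}_+\setminus\{0\}$, decompose $v = \lambda\, u_i(\gamma) + v_F$ with $v_F \in F_i(\gamma)^\perp$. Then $\lambda\, \langle u_i(\gamma), v_i(\gamma)\rangle = \langle v, v_i(\gamma)\rangle > 0$, so $\lambda > 0$. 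Property (iii) of Proposition~\ref{prop:ruelle} (scaled by linearity) gives
\[
\Vert v_F\, A_i^t(\gamma) \Vert \le C_i\, \alpha_i^t\, \frac{\Vert v_F\Vert}{\lambda}\, \Vert u_i(\gamma)\, A_i^t(\gamma) \Vert,
\]
which is negligible compared to $\lambda \Vert u_i(\gamma) A_i^t(\gamma) \Vert$ for $t$ large. Hence there exist $0<c_1<c_2$ and $T$ with $c_1\Vert u_i(\gamma) A_i^t(\gamma) \Vert \le \Vert v\, A_i^t(\gamma) \Vert \le c_2 \Vert u_i(\gamma) A_i^t(\gamma) \Vert$ for $t \ge T$; in particular $\Vert v\, A_i^t(\gamma)\Vert$ and $\Vert u_i(\gamma)\, A_i^t(\gamma)\Vert$ share the same exponential growth rate.

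It remains to match this rate with $\rr_i(\gamma) = \limsup_t \frac{1}{t}\ln\Vert A_i^t(\gamma) \Vert$. The upper bound $\Vert v\, A_i^t(\gamma) \Vert \le \Vert v \Vert\, \Vert A_i^t(\gamma) \Vert$ is trivial. For the matching lower bound, equivalence of norms on $\mathbf{M}_{n_i}(\R)$ yields a constant $K$ with $\Vert A_i^t(\gamma) \Vert \le K \max_{1\le j\le n_i}\Vert e_j\, A_i^t(\gamma) \Vert$, where $e_j$ is the $j$-th standard basis row vector. Since each $e_j \in \R^{n_i}_+\setminus\{0\}$, the previous paragraph shows each $\Vert e_j A_i^t(\gamma)\Vert$ grows at the same exponential rate as $\Vert u_i(\gamma) A_i^t(\gamma)\Vert$; the maximum over $j$ does too, and therefore so does $\Vert A_i^t(\gamma)\Vert$, which proves (i). The main obstacle throughout the plan is verifying strict positivity of $v_i(\gamma)$ (and the resulting strict positivity of the coefficient $\lambda$ in the cone decomposition of $v$); once this is secured from primitivity, the contraction part of Proposition~\ref{prop:ruelle} delivers both claims.
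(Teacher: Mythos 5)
Your proposal is correct and follows essentially the same route as the paper: the splitting $\R^{n_i}=E_i(\gamma)\oplus F_i(\gamma)^{\perp}$ from Proposition~\ref{prop:ruelle}, the contraction estimate (iii) to kill the $F_i^\perp$-component, and the telescoping identity from the $E_i$-invariance to obtain (ii). One noteworthy difference: the paper's proof simply writes $v=a u_i(\gamma)+w$ ``with $a>0$'' for positive $v$ without justifying the sign of $a$, whereas you correctly identify this as the crux and supply the missing argument — strict positivity of $v_i(\gamma)$ from primitivity via the $(T^*,\Theta^{-1})$-invariance of $F_i$, and then $\lambda\langle u_i(\gamma),v_i(\gamma)\rangle=\langle v,v_i(\gamma)\rangle>0$. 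Your use of the standard basis vectors $e_j$ and equivalence of finite-dimensional norms to match $\rr_i(\gamma)$ with the growth rate of $\Vert u_i(\gamma)A_i^t(\gamma)\Vert$ is also a bit cleaner than the paper's appeal to $\Vert A_i^t(\gamma)\Vert=\sup_{\Vert v\Vert=1}\Vert vA_i^t(\gamma)\Vert$, which as written silently exchanges a $\sup$ with a $\limsup$ (fixable, but glossed over). So: same skeleton, with two small gaps in the paper's write-up that your version fills.
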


The proof of this proposition follows the ideas of  the proof of Proposition 1 in \cite{jde-10}.

\begin{proof} 
Let $ \gamma \in \Gamma$ be fixed. To prove the first part, we start by showing that
\begin{equation}\label{eq_ruelle}
\rr_i(\gamma) = \limsup_{t \to \infty}\frac{1}{t}\ln\Vert u_i(\gamma) A^t_i(\gamma)\Vert.
\end{equation}
Let $v\in \R^{n_i}$, $v \neq 0$. Since $\R^{n_i} = E_i(\gamma)\bigoplus F_i^{\perp}(\gamma)$, there exist a constant $a \in \R$ and a vector $w \in F_i^{\perp}(\gamma)$ such that $v = a u_i(\gamma) + w$. Then, by Proposition \ref{prop:ruelle}, we have
\begin{eqnarray*}
\Vert vA^t_i(\gamma) \Vert &\le& a \Vert u_i(\gamma) A^t_i(\gamma)\Vert + \Vert wA^t_i(\gamma) \Vert \\
  &\le&   \Vert u_i(\gamma) A^t_i(\gamma)\Vert \left( a+ C_i \alpha_i^t   \Vert w\Vert\right) .
\end{eqnarray*}
Hence,
\[
\limsup_{t \to \infty}\frac{1}{t}\ln\Vert vA^t_i(\gamma) \Vert  \le \limsup_{t \to \infty}\frac{1}{t}\ln\Vert u_i(\gamma) A^t_i(\gamma)\Vert
\]
for all $v \in \R^{n_i} \setminus \{0\}$. Since $\Vert A^t_i(\gamma)\Vert = \sup_{\Vert v\Vert =1}\Vert vA^t_i(\gamma) \Vert$, the last inequality implies that
\[
 \rr_i(\gamma) \le \limsup_{t \to \infty}\frac{1}{t}\ln\Vert u_i(\gamma) A^t_i(\gamma)\Vert \le \rr_i(\gamma),
\]
which proves the equality (\ref{eq_ruelle}).

Now, we consider positive vector $v \in \R^{n_i}_+\setminus \{0\}$. We show that the equality (\ref{eq_ruelle}) is also satisfied for $v$. We write $v=au_i(\gamma)+w$ with $a>0$ and $w\in F_i^{\perp}(\gamma)$. Proposition \ref{prop:ruelle} implies
\begin{eqnarray*}
\Vert vA^t_i(\gamma)\Vert &\ge& a \Vert u_i(\gamma) A^t_i(\gamma) \Vert - \Vert wA^t_i(\gamma) \Vert \\
  &\ge&   \Vert u_i(\gamma) A^t_i(\gamma)\Vert \left( a- C_i \alpha_i^ t   \Vert w\Vert\right).
\end{eqnarray*}
Since $a>0$,
\[
 \rr_i(\gamma) \ge \limsup_{t \to \infty}\frac{1}{t}\ln\Vert vA^t_i(\gamma) \Vert \ge \limsup_{t \to \infty}\frac{1}{t}\ln\Vert u_i(\gamma) A^t_i(\gamma)\Vert = \rr_i(\gamma),
\]
which completes the proof of assertion (i).

The second assertion results directly from the first assertion and the following equalities:
\begin{eqnarray*}
\ln \Vert u_i(\gamma) A_i^{t+1}(\gamma)\Vert &=& \ln \Vert u_i(\gamma)A^t_i(\gamma)A_i(\Theta^t(\gamma)) \Vert \\
&=& \ln \left\Vert u_i(\Theta(\gamma)^t)A_i(\Theta^t(\gamma)) \right\Vert  \left\Vert u_i(\gamma)A^t_i(\gamma) \right\Vert \\
&=& \bzeta_i(\Theta^t(\gamma)) + \ln \left\Vert u_i(\gamma)A^t_i(\gamma)\right\Vert.
\end{eqnarray*}
The second step is a consequence of the invariance of the line bundle $E_i$.\end{proof}

Recall that $\Gamma_+= \pi_0^{-1}(\Omega \times \overline{V})$ and $\Gamma_0= \pi_0^{-1}(\Omega \times \bS_0)$.

\begin{corollary}\label{cor_lambdanegative}
For all $\gamma \in \Gamma_+ \setminus \Gamma_0$, and every $i\in \{1,\dots,m\}$, 
\[
\rr_i(\gamma) \le 0.
\]
\end{corollary}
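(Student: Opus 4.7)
The plan is to combine the fact that trajectories in $\Gamma_+$ remain in the compact set $\overline{V}$ (so all per-capita quantities are bounded above) with the strong form of Proposition~\ref{prop_HS}(i), which allows us to compute $\rr_i(\gamma)$ by testing against the \emph{positive} initial vector $x_0^i$ itself rather than the operator norm of $A_i^t(\gamma)$.

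First I would unpack the hypothesis $\gamma \in \Gamma_+\setminus\Gamma_0$. Writing $\gamma=(\omega,\{x_t\}_{t\in\Z})$ and $\pi_0(\gamma)=(\omega,x_0)$, the condition $\gamma\in\Gamma_+$ means $x_0\in\overline{V}$ while $\gamma\notin\Gamma_0$ means $x_0\notin S_0$; by definition of $S_0$ this yields $\|x_0^i\|>0$ for every species $i$. Next, because $\gamma$ lies in the closure of shifted $\Phi$-positive trajectories and $\overline{V}$ is forward invariant under $\Phi$, the entire forward sequence $(x_t)_{t\ge 0}$ stays in $\overline{V}$. Since $\overline{V}$ is compact, there is a constant $M>0$ with $\|x_t^i\|\le M$ for all $t\ge 0$ and all $i$.

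Second, I would identify $x_t^i$ with the matrix product applied to $x_0^i$. Using the semi-conjugacy $\pi_0\circ\Theta=\Phi\circ\pi_0$ together with the definition of $\Phi$, an induction on $t$ gives $x_t^i = x_0^i\,A_i^t(\gamma)$, where $A_i^t(\gamma)$ is the cocycle \eqref{def:cocycle}. Hence the compactness bound translates to
\[
\|x_0^i A_i^t(\gamma)\| \le M \qquad \text{for all } t\ge 0.
\]

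Finally, I would invoke Proposition~\ref{prop_HS}(i), which asserts that $\rr_i(\gamma)=\limsup_{t\to\infty}\tfrac{1}{t}\log\|v A_i^t(\gamma)\|$ for any positive vector $v\in\R^{n_i}_+\setminus\{0\}$. Applying this with the admissible choice $v=x_0^i$ (admissible precisely because $\gamma\notin\Gamma_0$), we obtain
\[
\rr_i(\gamma) = \limsup_{t\to\infty}\frac{1}{t}\log\|x_0^i A_i^t(\gamma)\| \le \limsup_{t\to\infty}\frac{1}{t}\log M = 0,
\]
which is the desired conclusion. No step should pose a real obstacle; the only subtlety worth double-checking is that Proposition~\ref{prop_HS}(i) is stated for \emph{positive} test vectors (not merely non-zero ones), so it is essential to use the hypothesis $\gamma\notin\Gamma_0$ to ensure $x_0^i$ has all coordinates belonging to $\R^{n_i}_+\setminus\{0\}$ in the sense needed.
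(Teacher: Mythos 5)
Your proof matches the paper's argument essentially step for step: both identify $x_0^i A_i^t(\gamma)$ with the $i$-th population component of the forward orbit in $\overline{V}$ via the semi-conjugacy, bound it using compactness of $\overline{V}$ (the paper via \textbf{H4'} with a time $T$ after which the orbit lies in $\overline{V}$; you via forward-invariance of $\overline{V}$ from the start), and then apply Proposition~\ref{prop_HS}(i) with $v = x_0^i$, which is admissible because $\gamma \notin \Gamma_0$ guarantees $x_0^i \in \R^{n_i}_+ \setminus \{0\}$. There is no substantive difference in approach.
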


\begin{proof}
Fix $i \in \{1,\dots,m\}$, and $\gamma \in \Gamma_+\backslash \Gamma_0$ with $(\omega,x) :=\pi_0(\gamma)$. By definition of $\Gamma_+\backslash \Gamma_0$,  $x^i\in \R^{n_i}_+$ and $x^i \ne 0$. We have
\begin{eqnarray*}
x^iA^t_i(\gamma) &=& x^iA_i(\gamma)\cdots A_i(\Theta^{t-1}\gamma)\\
&=& x^iA_i(\omega,x)\cdots A_i(\Phi^{t-1}(\omega,x))\\
&=& p_2(\Phi^t(\omega,x)),
\end{eqnarray*}
where the second equality is a consequence of (\ref{eq:cocycle}), and the third one follows from the definition of the cocycle $\Phi$. Assumption \textbf{H4'} implies that there exists $T>0$ such that $p_2(\Phi^t(\omega,x))$ belongs to the compact set $\overline{V}$ for all $t\ge T$, which implies that there exists $R>0$ such that $\Vert x^iA^t_i(\gamma) \Vert \le R$ for all $t\ge T$. Assertion (i) of Proposition \ref{prop_HS} applied to $v= x^i$ concludes the proof.\end{proof}

Now we give some properties of the invasion rate with respect to a $\Theta$-invariant probability measure.

\begin{proposition}\label{prop_intlambda}
The invasion rate of each population $i$ with respect to an $\Theta$-invariant measure $\tilde{\mu}$ satisfies the following property:
\[
\rr_i(\tilde{\mu}) = \int_{\Gamma} \bzeta_i(\gamma) d\tilde{\mu}.
\]
\end{proposition}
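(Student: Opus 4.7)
The plan is to combine assertion (ii) of Proposition~\ref{prop_HS}, which rewrites $\rr_i(\gamma)$ as the Cesàro limsup of the continuous function $\bzeta_i$ along the $\Theta$-orbit of $\gamma$, with Birkhoff's pointwise ergodic theorem applied to the measure-preserving system $(\Gamma,\Theta,\tilde{\mu})$. First I would observe that $\bzeta_i \colon \Gamma \to \R$ is continuous on the compact space $\Gamma$, hence bounded, so in particular $\bzeta_i \in L^1(\tilde{\mu})$.

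Next, by Birkhoff's ergodic theorem applied to the $\Theta$-invariant probability measure $\tilde{\mu}$ and the integrable function $\bzeta_i$, the Cesàro averages
\[
\frac{1}{t}\sum_{s=0}^{t-1}\bzeta_i(\Theta^s(\gamma))
\]
converge for $\tilde{\mu}$-almost every $\gamma$ to a $\Theta$-invariant function $\bzeta_i^{*}(\gamma) := \E_{\tilde{\mu}}[\bzeta_i \mid \mathcal{I}_\Theta]$, where $\mathcal{I}_\Theta$ is the $\sigma$-algebra of $\Theta$-invariant sets, and moreover $\int_{\Gamma} \bzeta_i^{*}\, d\tilde{\mu} = \int_{\Gamma} \bzeta_i\, d\tilde{\mu}$. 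In particular, on this full-measure set the $\limsup$ coincides with the limit, so Proposition~\ref{prop_HS}(ii) yields $\rr_i(\gamma) = \bzeta_i^{*}(\gamma)$ for $\tilde{\mu}$-almost every $\gamma$.

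Finally, I would integrate this almost-sure identity against $\tilde{\mu}$. Since $\bzeta_i$ is bounded, the Cesàro averages are uniformly bounded, so dominated convergence (or the $L^1$ version of Birkhoff's theorem) justifies the interchange of limit and integral and gives
\[
\rr_i(\tilde{\mu}) \;=\; \int_{\Gamma}\rr_i(\gamma)\, d\tilde{\mu}(\gamma) \;=\; \int_{\Gamma}\bzeta_i^{*}(\gamma)\, d\tilde{\mu}(\gamma) \;=\; \int_{\Gamma}\bzeta_i(\gamma)\, d\tilde{\mu}(\gamma),
\]
which is the desired identity.

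There is essentially no main obstacle here: the proof is a direct application of Birkhoff's theorem, with the compactness of $\Gamma$ and continuity of $\bzeta_i$ providing the boundedness needed to integrate the limsup identity. The only subtlety worth flagging is that $\tilde{\mu}$ is not assumed to be ergodic, which is why one must pass through the conditional expectation $\bzeta_i^{*}$ rather than a constant; this, however, does not affect the value of the integral and the conclusion goes through unchanged.
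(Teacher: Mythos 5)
Your proposal is correct and follows exactly the route the paper takes: it cites Proposition~\ref{prop_HS}(ii) together with Birkhoff's ergodic theorem, and your write-up simply fills in the standard details (boundedness of $\bzeta_i$ via continuity on compact $\Gamma$, almost-everywhere convergence of the Cesàro averages to the conditional expectation, and equality of integrals). The remark that $\tilde{\mu}$ need not be ergodic, so one passes through $\E_{\tilde{\mu}}[\bzeta_i\mid\mathcal{I}_\Theta]$ rather than a constant, is exactly the subtlety the paper's terse proof leaves implicit.
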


\begin{proof} 
This result is a direct consequence of property (ii) of Proposition \ref{prop_HS} and the Birkhoff's Ergodic Theorem applied to the continuous maps $\Theta$ and $\bzeta$. \end{proof}

\begin{proposition}\label{prop_lambdanul}
Let $\tilde{\mu}$ be a $\Theta$-invariant measure. If $\tilde{\mu}$ is supported by $\Gamma_+ \backslash \Gamma_0$, then $\rr_i(\tilde{\mu})=0$ for all $i \in \{1, \dots,m\}$.
\end{proposition}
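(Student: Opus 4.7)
The plan is to pinch $r_i(\tilde\mu)$ between $0$ and $0$ after reducing to the ergodic case: the upper bound comes from trajectories living on the global attractor, and the matching lower bound from Birkhoff-type recurrence that uses crucially that $\tilde\mu$ charges the extinction-free set.

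\textbf{Reduction to the ergodic case.} I would first apply the ergodic decomposition $\tilde\mu=\int \tilde\mu_\alpha\,dm(\alpha)$. Since $\tilde\mu(\Gamma_+\setminus\Gamma_0)=1$, the same holds for $m$-a.e.\ component $\tilde\mu_\alpha$. By Proposition~\ref{prop_intlambda}, $r_i$ is linear in the measure: $r_i(\tilde\mu)=\int\zeta_i\,d\tilde\mu=\int r_i(\tilde\mu_\alpha)\,dm(\alpha)$, so it suffices to prove $r_i(\tilde\mu)=0$ for ergodic $\tilde\mu$ supported on $\Gamma_+\setminus\Gamma_0$. Assuming this from now on, Birkhoff's theorem applied to the continuous (hence bounded) function $\zeta_i$, combined with Proposition~\ref{prop_HS}(ii), gives
\[
r_i(\gamma)\;=\;\lim_{t\to\infty}\tfrac{1}{t}\sum_{s=0}^{t-1}\zeta_i(\Theta^s\gamma)\;=\;r_i(\tilde\mu)
\]
for $\tilde\mu$-a.e.\ $\gamma$. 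Because $\tilde\mu(\Theta^{-t}\Gamma_0)=\tilde\mu(\Gamma_0)=0$ for every $t\in\Z$, for $\tilde\mu$-a.e.\ $\gamma$ the $i$-th block $x_t^i$ is nonzero for all times. I can therefore apply Proposition~\ref{prop_HS}(i) with the positive vector $v=x_0^i\in\R_+^{n_i}\setminus\{0\}$, and then use the cocycle identity $x_0^i A_i^t(\gamma)=x_t^i$ (which follows from the conjugacy $\pi_0\circ\Theta=\Phi\circ\pi_0$ and~\eqref{eq:cocycle}) to rewrite
\[
r_i(\tilde\mu)\;=\;\limsup_{t\to\infty}\tfrac{1}{t}\ln\|x_t^i\|\qquad\text{for }\tilde\mu\text{-a.e. }\gamma.
\]

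\textbf{Pinching the limsup.} The upper bound $\limsup_t \tfrac{1}{t}\ln\|x_t^i\|\le 0$ is immediate since $\{x_t\}\subset\overline V$ is uniformly bounded. For the matching lower bound, pick $\delta>0$ small enough that $B_\delta:=\{\gamma:\|x_0^i\|>\delta\}$ has $\tilde\mu(B_\delta)>0$; this is possible because $\|x_0^i\|>0$ holds $\tilde\mu$-a.s.\ on $\Gamma_+\setminus\Gamma_0$. Applying Birkhoff to the indicator $\mathbf{1}_{B_\delta}$, for $\tilde\mu$-a.e.\ $\gamma$ the return times $\{s\ge 0:\Theta^s\gamma\in B_\delta\}$ have positive asymptotic density and are therefore infinite; along that subsequence $\tfrac{1}{s}\ln\|x_s^i\|\ge\tfrac{1}{s}\ln\delta\to 0$, so $\limsup_t \tfrac{1}{t}\ln\|x_t^i\|\ge 0$. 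Combining the two bounds gives $r_i(\tilde\mu)=0$. The only nontrivial step is the lower bound, and its essential ingredient is precisely the hypothesis $\tilde\mu(\Gamma_+\setminus\Gamma_0)=1$, which promotes ``$x_0^i\neq 0$ $\tilde\mu$-a.s.'' to ``$\|x_0^i\|>\delta$ on a set of positive $\tilde\mu$-mass'', enabling the Birkhoff-recurrence argument.
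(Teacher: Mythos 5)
Your proof is correct and is essentially the same argument as the paper's: both proofs combine the uniform upper bound $r_i(\gamma)\le 0$ (Corollary~\ref{cor_lambdanegative}, from boundedness of trajectories on the attractor) with a recurrence-based lower bound showing that the $i$-th block returns infinitely often to a region where $\|x^i\|>\delta$, forcing $\limsup_t\frac1t\ln\|x_t^i\|\ge 0$. The only cosmetic differences are that you first pass to ergodic components (the paper works with $\tilde\mu$ directly) and you invoke Birkhoff's theorem for the indicator of $B_\delta$ to get positive-density returns, where the paper uses Poincar\'e recurrence to get infinitely many returns; both yield the same conclusion, and Poincar\'e recurrence is the slightly more economical tool since positive density is not needed. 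Your reduction to the ergodic case is harmless but not necessary: the sets $\Gamma^{i,\eta}$ have positive $\tilde\mu$-measure for $\eta$ small even without ergodicity, and Poincar\'e recurrence applies to any invariant measure.
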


\begin{proof} 
Let $\tilde{\mu}$ be such a probability measure. Fix $i \in \{1, \dots,m\}$, and define the set $\Gamma^{i,{\eta}}:= \{ \gamma \in \Gamma_+ : \Vert p_2(\pi_0(\Theta^t(\gamma)))^i\Vert> \eta\}$. By assumption on the measure $\tilde{\mu}$, there exists a real number $\eta^{*}>0$ such that $\tilde{\mu}(\Gamma^{i,{\eta}}) >0$ for all $\eta<\eta^*$.

The Poincar\'e  recurrence theorem applies to the map $\Theta$, and implies that for each $\eta<\eta^*$,
\begin{equation}\label{eq:poincare}
\tilde{\mu}(\{\gamma \in \Gamma^{i,{\eta}} \vert \ \Theta^t(\gamma) \in  \Gamma^{i,{\eta}} \text{ infinitely often }\})=1.
\end{equation}
Recall that the conjugacy (\ref{Eq_conj}) implies that for every $\gamma \in \Gamma_+$ with $\pi_0(\gamma)=(\omega,x) \in \Omega \times \overline{V}\backslash S_0$, we have 
\begin{eqnarray*}
p_2(\pi_0(\Theta^t(\gamma)))^i&=&p_2(\Phi^t(\pi_0(\gamma)))^i\\
&=&x^iA_i^t(\gamma).
\end{eqnarray*}
Then, equality (\ref{eq:poincare}) means that for $\tilde{\mu}$-almost all $\gamma \in \Gamma^{i,{\eta}}$ with $0<\eta <\eta^*$,  $ \Vert x^iA_i^t(\gamma) \Vert > \eta$ infinitely often. Therefore, Proposition \ref{prop_HS} (i), applied to $v=x^i$, implies that $\rr_i(\gamma)=\limsup_{t \to \infty}\frac{1}{t} \ln \Vert x^iA^t_i(\gamma)\Vert \ge 0$ for $\tilde{\mu}$-almost all $\gamma \in \Gamma^{i,{\eta}}$, with  $\eta<\eta^*$. Hence $\rr_i(\gamma)\ge 0$ for $\tilde{\mu}$-almost all $\gamma \in  \bigcup_{n \ge \frac{1}{\eta^*}} \Gamma^{i,{1/n}} = \Gamma_+ \backslash \Gamma_0$. Corollary \ref{cor_lambdanegative} completes the proof. \end{proof}

\subsection{Properties of the empirical occupation measures\label{partD}}

Given a trajectory $\gamma \in \Gamma_+$, the \emph{empirical occupation measure} at time $t\in \N$ of $\{\Theta^s(\gamma)\}_{s\ge 0}$ is 
\[
\tilde{\Lambda}_t(\gamma) := \frac{1}{t} \sum_{s=0}^{t-1}\delta_{\Theta^s(\gamma)},
\]
and given a point $(\omega, x) \in \Omega \times \overline{V}$, the \emph{empirical occupation measure} at time $t\in \N$ of $\{\Phi^s(\omega,x)\}_{s\ge 0}$ is 
\[
\Lambda_t(\omega,x) := \frac{1}{t} \sum_{s=0}^{t-1}\delta_{\Phi^s(\omega,x)}.
\]
In this way, $\Lambda_t(\omega,x)(\Omega \times B)=\Pi_t(\omega,x)(B)$ for every Borel subset $B \subset \overline{V}$, and $x \in \overline{V}$.  
 
The dynamics $\Theta$ and $\Phi$ being semi-conjugated by $\pi_0$, their respective empirical occupation measures are semi-conjugated by $\pi_0^*$ as follows. 
\begin{lemma}\label{lemma_mes_emp}
Let $\gamma \in \Gamma_+$. Then for all $t\ge 0$ we have
\[
\pi_0^*(\tilde{\Lambda}_t(\gamma))=\Lambda_t(\pi_0(\gamma)).
\]
\end{lemma}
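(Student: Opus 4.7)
The proof will be a direct computation from the definitions, relying on two facts: the linearity of the pushforward operation $\pi_0^*$ on measures and the semi-conjugacy relation $\pi_0 \circ \Theta = \Phi \circ \pi_0$ established in equation \eqref{Eq_conj}.

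First I would observe that for any continuous map $f \colon X \to Y$ between metric spaces and any point $z \in X$, the pushforward of a Dirac mass satisfies $f^*(\delta_z) = \delta_{f(z)}$, since for any Borel $B \subset Y$ we have $f^*(\delta_z)(B) = \delta_z(f^{-1}(B)) = \mathbf{1}_{f^{-1}(B)}(z) = \mathbf{1}_B(f(z)) = \delta_{f(z)}(B)$. Applied to $\pi_0$, this gives $\pi_0^*(\delta_{\Theta^s(\gamma)}) = \delta_{\pi_0(\Theta^s(\gamma))}$ for each $s$.

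Second, I would verify by induction on $s$ that $\pi_0(\Theta^s(\gamma)) = \Phi^s(\pi_0(\gamma))$ for all $s \ge 0$. The base case $s=0$ is trivial. For the inductive step, assuming the identity at step $s$, equation \eqref{Eq_conj} yields $\pi_0(\Theta^{s+1}(\gamma)) = \pi_0(\Theta(\Theta^s(\gamma))) = \Phi(\pi_0(\Theta^s(\gamma))) = \Phi(\Phi^s(\pi_0(\gamma))) = \Phi^{s+1}(\pi_0(\gamma))$. Here the invocation of \eqref{Eq_conj} is legitimate because $\gamma \in \Gamma_+$ implies $\Theta^s(\gamma) \in \Gamma_+$ (since $\Gamma_+$ is positively $\Theta$-invariant) and correspondingly $\pi_0(\Theta^s(\gamma)) \in \Omega \times \overline{V}$.

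Finally I would combine these two ingredients using the linearity of $\pi_0^*$:
\[
\pi_0^*\bigl(\tilde{\Lambda}_t(\gamma)\bigr) \;=\; \pi_0^*\!\left(\frac{1}{t}\sum_{s=0}^{t-1}\delta_{\Theta^s(\gamma)}\right) \;=\; \frac{1}{t}\sum_{s=0}^{t-1}\delta_{\pi_0(\Theta^s(\gamma))} \;=\; \frac{1}{t}\sum_{s=0}^{t-1}\delta_{\Phi^s(\pi_0(\gamma))} \;=\; \Lambda_t(\pi_0(\gamma)),
\]
which is the desired identity. There is no real obstacle here; the only point requiring a line of care is checking that \eqref{Eq_conj} can be iterated along the orbit of $\gamma$, which is why the hypothesis $\gamma \in \Gamma_+$ matters.
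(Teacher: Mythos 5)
Your proof is correct and follows the same route as the paper's: push forward the Dirac masses and iterate the semi-conjugacy $\pi_0\circ\Theta=\Phi\circ\pi_0$. You simply make explicit the two small points the paper leaves implicit (the identity $f^*\delta_z=\delta_{f(z)}$ and the inductive iteration of \eqref{Eq_conj}, justified by positive invariance of $\Gamma_+$).
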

\begin{proof}
Let $B \subset \Omega \times \overline{V}$ be a Borel set, and $\gamma \in \Gamma_+$. Then we have
\begin{eqnarray*}
\pi_0^*(\tilde{\Lambda}_t(\gamma))(B) &=& \tilde{\Lambda}_t(\gamma)(\pi_0^{-1}(B))\\
&=& \frac{1}{t} \sum_{s=0}^{t-1}\delta_{\Theta^s(\gamma)}(\pi_0^{-1}(B))\\
&=& \frac{1}{t} \sum_{s=0}^{t-1}\delta_{\Phi^s(\pi_0(\gamma))}(B)\\
&=& \Lambda_t(\pi_0(\gamma))(B).
\end{eqnarray*}
The third equality is a consequence of the semi conjugacy (\ref{Eq_conj}) .
\end{proof}
 
 \begin{proposition}\label{prop_walters}
 There exists $\tilde{\Omega}$ with $\ProbQ(\tilde{\Omega})=1$ such that for all $\gamma \in \pi_0^{-1}(\tilde{\Omega}\times \overline{V})$, the set of all weak$^*$ limit point of the family of probability measures $\{\tilde{\Lambda}_t(\gamma)\}_{t\in \N}$ is a non-empty subset of $\PPinvtheta(\Gamma_+)$.
\end{proposition}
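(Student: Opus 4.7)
The plan is to verify three properties of every weak* limit point $\tilde\mu$ of $\{\tilde\Lambda_t(\gamma)\}$: that at least one such limit exists and lies in $\PP(\Gamma_+)$, that $\tilde\mu$ is $\Theta$-invariant, and that $(p_1\circ\pi_0)^*(\tilde\mu)=\ProbQ$. Together these three items give $\tilde\mu\in\PPinvtheta(\Gamma_+)$. The inclusion $\Phi(\Omega\times\overline V)\subset\Omega\times\overline V$ combined with the semi-conjugacy $\pi_0\circ\Theta=\Phi\circ\pi_0$ yields $\Theta(\Gamma_+)\subset\Gamma_+$, so each $\tilde\Lambda_t(\gamma)$ with $\gamma\in\Gamma_+$ is supported in $\Gamma_+$. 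Since $\Gamma_+$ is a closed subset of the compact space $\mathcal T$, $\PP(\Gamma_+)$ is weak* compact and a convergent subsequence $\tilde\Lambda_{t_k}(\gamma)\to\tilde\mu\in\PP(\Gamma_+)$ exists.

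For $\Theta$-invariance I would run the standard Krylov--Bogolyubov telescoping argument: for every $f\in C(\Gamma)$, continuity of $\Theta$ puts $f\circ\Theta$ in $C(\Gamma)$ and
\[
\int(f\circ\Theta-f)\,d\tilde\Lambda_t(\gamma)=\frac{1}{t}\bigl(f(\Theta^t\gamma)-f(\gamma)\bigr),
\]
whose absolute value is bounded by $2\|f\|_\infty/t\to 0$. Passing to a weak* convergent subsequence gives $\int f\circ\Theta\,d\tilde\mu=\int f\,d\tilde\mu$ for all $f\in C(\Gamma)$, i.e.\ $\tilde\mu$ is $\Theta$-invariant.

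To obtain the marginal condition I would construct $\tilde\Omega$ via Birkhoff's ergodic theorem. Since $\Omega$ is a compact Polish space and $\theta$ is invertible and ergodic with respect to $\ProbQ$ (assumption \textbf{H1'}), applying Birkhoff to a countable dense subset of $C(\Omega)$ (separable because $\Omega$ is) produces a set $\tilde\Omega\subset\Omega$ with $\ProbQ(\tilde\Omega)=1$ such that $\tfrac{1}{t}\sum_{s=0}^{t-1}\delta_{\theta^s\omega}\to\ProbQ$ in the weak* topology for every $\omega\in\tilde\Omega$. The definitions of $\Theta$ and $\pi_0$ give $p_1(\pi_0(\Theta^s\gamma))=\theta^s\omega$ whenever $\gamma$ has $\Omega$-coordinate $\omega$, so
\[
(p_1\circ\pi_0)^*\bigl(\tilde\Lambda_t(\gamma)\bigr)=\frac{1}{t}\sum_{s=0}^{t-1}\delta_{\theta^s\omega},
\]
which converges weak* to $\ProbQ$ for every $\gamma\in\pi_0^{-1}(\tilde\Omega\times\overline V)$. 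Continuity of $(p_1\circ\pi_0)^*$ then forces $(p_1\circ\pi_0)^*(\tilde\mu)=\ProbQ$ for any weak* limit $\tilde\mu$, placing $\tilde\mu$ in $\PPinvtheta(\Gamma_+)$. The only subtlety is that $\tilde\Omega$ must be chosen independently of $\gamma$; this is automatic because the $\Omega$-marginal of $\tilde\Lambda_t(\gamma)$ depends only on the $\Omega$-coordinate of $\gamma$, so a single full-measure set works simultaneously for every $\gamma$ in the fiber $\pi_0^{-1}(\tilde\Omega\times\overline V)$.
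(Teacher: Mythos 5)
Your proof is correct and follows essentially the same route as the paper: Birkhoff's ergodic theorem to produce $\tilde\Omega$, compactness and positive $\Theta$-invariance of $\Gamma_+$ for the existence of weak* limits in $\PP(\Gamma_+)$, the semi-conjugacy $\pi_0\circ\Theta=\Phi\circ\pi_0$ to push the empirical marginal onto $\Omega$, and the Krylov--Bogolyubov telescoping argument (which the paper outsources to Theorem~6.9 of Walters) for $\Theta$-invariance of the limit.
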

  
 \begin{proof}
Since $\ProbQ$ is ergodic (assumption \textbf{H4'}), Birkhoff's Ergodic Theorem implies that there exists a subset $\tilde{\Omega} \subset \Omega$ such that $\ProbQ(\tilde{\Omega})=1$, and for all $\omega \in \tilde{\Omega}$,
\begin{equation}\label{prop_wal_eq1}
\lim_{t \to \infty}\frac{1}{t} \sum_{s=0}^{t-1}\delta_{\theta^s(\omega)} = \ProbQ
\end{equation}
(in the weak$^*$ topology). Let $(\omega,x) \in \tilde{\Omega}\times \overline{V}$ and $\gamma \in \pi_0^{-1}(\omega,x) \subset \Gamma_+$. For all $t \in \N$, we have
 \begin{equation}\label{prop_wal_eq2}
 p_1^*\circ \pi_0^*(\tilde{\Lambda}_t(\gamma)) = \frac{1}{t} \sum_{s=0}^{t-1}\delta_{\theta^s(\omega)}.
\end{equation}
Since $\Gamma_+$ is positively $\Theta$-invariant and compact, the set of all weak$^*$ limit point of the family of probability measures $\{\tilde{\Lambda}_t(\gamma)\}_{t\in \N}$ is a non-empty subset of  $\mathcal{P}(\Gamma_+)$. Since the maps $p_1$ and $\pi_0$ are continuous, equalities (\ref{prop_wal_eq1}) and (\ref{prop_wal_eq2}) imply that $p_1^*\circ \pi_0^*(\tilde{\mu})=\ProbQ$. Moreover, Theorem 6.9 in \cite{walters-82} implies that $\tilde{\mu}$ is $\Theta$-invariant. Therefore, $\tilde{\mu} \in \PPinvtheta(\Gamma_+)$, which concludes the proof.  \end{proof}

Recall that $\bS_{{\eta}}=\{ x \in S : \| x^i \| \le \eta\mbox{ for some } i\}$, and define the subset $\Gamma_{\eta}:= \pi_0^{-1}(\Omega \times S_{\eta})$.
\begin{proposition}\label{lemma1}
If condtion \textnormal{(a)} of Theorem \ref{thm:main2} is satisfied, then for all $\eps >0$ there exists $\eta^*>0$ such that
 \[
 \tilde{\mu}( \Gamma_{\eta})<\eps,
 \]
for all $\eta<\eta^*$ and all $\tilde{\mu} \in \PPinvtheta(\Gamma_+ \backslash \Gamma_0)$.
\end{proposition}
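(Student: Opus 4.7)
The plan is to argue by contradiction, extract a weak$^*$ limit of putative ``bad'' invariant measures, decompose the limit into a piece on $\Gamma_0$ and a piece off $\Gamma_0$, and show the piece on $\Gamma_0$ must have all long-term growth rates equal to zero, contradicting condition (a).

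First I would suppose the conclusion fails: there exist $\eps>0$, a sequence $\eta_n\downarrow 0$, and measures $\tilde{\mu}_n\in\PPinvtheta(\Gamma_+\setminus \Gamma_0)$ with $\tilde{\mu}_n(\Gamma_{\eta_n})\ge \eps$ for every $n$. By Proposition~\ref{prop:mesinvtheta}, $\PPinvtheta(\Gamma_+)$ is weak$^*$ compact, so after passing to a subsequence I may assume $\tilde{\mu}_n\to \tilde{\mu}^\ast$ weakly with $\tilde{\mu}^\ast\in\PPinvtheta(\Gamma_+)$. The sets $\Gamma_\eta=\pi_0^{-1}(\Omega\times S_\eta)$ are closed and decrease to $\Gamma_0$, so for any fixed $\eta>0$ I have $\Gamma_{\eta_n}\subset \Gamma_\eta$ for $n$ large, whence $\tilde{\mu}_n(\Gamma_\eta)\ge \eps$. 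The Portmanteau theorem (upper semicontinuity on closed sets) gives $\tilde{\mu}^\ast(\Gamma_\eta)\ge \eps$, and letting $\eta\downarrow 0$ yields $\tilde{\mu}^\ast(\Gamma_0)\ge\eps>0$.

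Next I would decompose. By Remark~\ref{rem:invariance}, both $\Gamma_0$ and $\Gamma_+\setminus\Gamma_0$ are positively $\Theta$-invariant, so writing $\alpha:=\tilde{\mu}^\ast(\Gamma_0)\in[\eps,1]$ I get
\[
\tilde{\mu}^\ast=\alpha\,\tilde{\nu}_0+(1-\alpha)\,\tilde{\nu}_1,
\]
with $\tilde{\nu}_0,\tilde{\nu}_1$ $\Theta$-invariant probability measures supported on $\Gamma_0$ and $\Gamma_+\setminus\Gamma_0$ respectively (if $\alpha=1$ I simply take $\tilde{\nu}_0=\tilde{\mu}^\ast$). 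The $\theta$-invariant images $p_1^\ast\circ\pi_0^\ast(\tilde{\nu}_0)$ and $p_1^\ast\circ\pi_0^\ast(\tilde{\nu}_1)$ have convex combination equal to $\mathbb{Q}$; since $\mathbb{Q}$ is ergodic under $\theta$ (hypothesis \textbf{H1'}), it is extremal in the set of $\theta$-invariant probability measures, forcing both to coincide with $\mathbb{Q}$. Thus $\tilde{\nu}_0\in \PPinvtheta(\Gamma_0)$ and $\tilde{\nu}_1\in \PPinvtheta(\Gamma_+\setminus\Gamma_0)$ (when $\alpha<1$).

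For the final step I use continuity of the growth rates. By Proposition~\ref{prop_intlambda}, $r_i(\tilde{\mu})=\int_\Gamma \bzeta_i\,d\tilde{\mu}$ with $\bzeta_i\in C(\Gamma)$, so $\tilde{\mu}\mapsto r_i(\tilde{\mu})$ is weak$^*$ continuous on $\mathcal{P}(\Gamma)$. Combined with Proposition~\ref{prop_lambdanul} applied to each $\tilde{\mu}_n$, this gives
\[
r_i(\tilde{\mu}^\ast)=\lim_{n\to\infty} r_i(\tilde{\mu}_n)=0 \qquad\text{for every } i.
\]
A second application of Proposition~\ref{prop_lambdanul} to $\tilde{\nu}_1$ yields $r_i(\tilde{\nu}_1)=0$, so from $r_i(\tilde{\mu}^\ast)=\alpha\,r_i(\tilde{\nu}_0)+(1-\alpha)\,r_i(\tilde{\nu}_1)=\alpha\, r_i(\tilde{\nu}_0)$ and $\alpha>0$ I conclude $r_i(\tilde{\nu}_0)=0$ for every $i$, i.e.\ $r_\ast(\tilde{\nu}_0)=0$. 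Since $\tilde{\nu}_0\in\PPinvtheta(\Gamma_0)$, this contradicts condition (a). The only delicate step is the decomposition paragraph: one must know that restricting an invariant measure to the invariant sets $\Gamma_0$ and $\Gamma_+\setminus\Gamma_0$ gives $\Theta$-invariant pieces whose first-marginal projections are each equal to the ergodic measure $\mathbb{Q}$; everything else is standard compactness, Portmanteau, and linearity of $\tilde{\mu}\mapsto\int \bzeta_i\,d\tilde{\mu}$.
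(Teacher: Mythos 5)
Your proof is correct and follows essentially the same route as the paper's: argue by contradiction, extract a weak$^*$ limit $\tilde{\mu}^\ast$ from the bad sequence, use Proposition~\ref{prop_lambdanul} and the continuity afforded by Proposition~\ref{prop_intlambda} to conclude $\rr_*(\tilde{\mu}^\ast)=0$, apply Portmanteau to show $\tilde{\mu}^\ast(\Gamma_0)\geq\eps$, decompose via Remark~\ref{rem:invariance}, and contradict condition (a). The only difference is that you make explicit (via extremality of the ergodic $\ProbQ$) why the component $\tilde{\nu}_0$ actually lies in $\PPinvtheta(\Gamma_0)$ rather than merely in $\Inv(\Theta)(\Gamma_0)$ — a step the paper uses implicitly — which is a welcome clarification.
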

 
\begin{proof}
If false, there exist $\eps>0$ and a sequence of measures $\{\tilde{\mu}_n\}_{n\in \N} \subset \PPinvtheta(\Gamma_+ \backslash \Gamma_0)$ such that $\tilde{\mu}_n( \Gamma_{1/n}) > \eps$ for all $n \ge 1$. By Proposition \ref{prop:mesinvtheta}, let $\tilde{\mu} \in \PPinvtheta(\Gamma_+)$ be a weak$^*$ limit point of the sequence $\{\tilde{\mu}_n\}_{n\in \N}$.  Proposition \ref{prop_lambdanul} implies that $\rr_*(\tilde{\mu}_n)=0$ for all $n \ge 0$. Proposition \ref{prop_intlambda} and weak$^*$ convergence imply that $0 = \lim_{n\to \infty}\rr_i(\tilde{\mu}_n) = \rr_i(\tilde{\mu})$ for all $i$. Hence,  $\rr_*(\tilde{\mu})=0$. The Portmanteau theorem (see e.g.  Theorem 2.1. in \cite{billingsley-99}) applied to the closed set $\Gamma_{1/n}$ implies that for all $n\ge1$,
\begin{eqnarray*}
\tilde{\mu}(\Gamma_{1/n}) &\ge& \liminf_{m\to \infty}\tilde{\mu}_m(\Gamma_{1/n})\\
&\ge& \liminf_{m\to \infty}\tilde{\mu}_m(\Gamma_{1/m})\\
&\ge& \eps.
\end{eqnarray*}
Therefore $\tilde{\mu}(\Gamma_0) = \tilde{\mu}(\cap_n \Gamma_{1/n})\ge \eps$. Remark \ref{rem:invariance} implies there exist $\alpha > 0$ such that $\tilde{\mu} = \alpha \tilde{\nu}_0 + (1-\alpha)\tilde{\nu}_1$ where $\tilde{\nu}_j$ are $\Theta$-invariant probability measures satisfying $\tilde{\nu}_0(\Gamma_0) =1$ and $\tilde{\nu}_1(\Gamma_+ \backslash  \Gamma_0)=1$. By Proposition \ref{prop_lambdanul}, $\rr_i(\tilde{\nu}_1) = 0$ for all $i \in \{1, \dots,k \}$. Condition (a) implies that $\rr_*(\tilde{\nu}_0)  >0$, in which case $0=\rr_*(\tilde{\mu}) = \alpha \rr_*(\tilde{\nu}_0)>0$ which is a contradiction. \end{proof}
\vspace{0.5cm}

\subsection{Proof of Theorem \ref{thm:main2}\label{partE}}
First, we show that condition (a) of Theorem \ref{thm:main2} implies that for all $\eps >0$, there exists $\eta>0$ such that
\[
\limsup_{t \to \infty} \Pi_t(\omega,x)( \bS_{\eta}) \le \eps \ \ \text{ for }\ProbQ\text{-almost all } \omega,
\]
whenever $x \in \R^n_+ \backslash \bS_0$. Second, we prove the equivalence of conditions (a), (b) and (c).

Let $\tilde{\Omega} \subset \Omega$ be defined as in Proposition \ref{prop_walters}. Choose $(\omega',x') \in \tilde{\Omega} \times \R^n_+ \backslash \bS_0$. By definition of the set $\overline{V}$, there exists a time $T\ge 0$ such that $\Phi^t(\omega',x') \in \Omega \times \overline{V}$, for all $t\ge T$. Choose $\gamma \in \pi_0^{-1}(\Phi^T(\omega',x')) \subset \Gamma_+ \backslash \Gamma_0$. Since $\mu$ is a weak$^*$ limit point of the family $\{\Lambda_t(\Phi^T(\omega',x'))\}_{t\ge 0}$ if and only if it is a weak$^*$ limit point of the family $\{\Lambda_t(\omega',x')\}_{t\ge 0}$, we do not loss generality by considering $\{\Lambda_t(\Phi^T(\omega',x'))\}_{t\ge 0}$. Since $\Omega \times \overline{V}$ is compact, the set of all weak$^*$ limit points of the family of probability measures $\{\Lambda_t(\Phi^T(\omega',x'))\}_{t\in \N}$ is a non-empty subset of $\mathcal{P}(\Omega \times \overline{V})$. Let $\mu = \lim_{k\to \infty} \Lambda_{t_k}(\omega,x)$ be such a weak$^*$ limit point. Since $\Gamma_+$ is positively $\Theta$-invariant and compact, passing to a subsequence if necessary, there exists $\tilde{\mu} = \lim_{k \to \infty}\tilde{\Lambda}_{t_k}(\gamma) \in \mathcal{P}(\Gamma_+)$. By Proposition \ref{prop_walters}, $\tilde{\mu} \in \PPinvtheta(\Gamma_+)$. Furthermore by Lemma \ref{lemma_mes_emp} and continuity of $\pi_0$, $\pi_{0}^*(\tilde{\mu})=\mu$. Hence, Proposition \ref{prop_intlambda}, the continuity of the map $\bzeta$, and property (ii) of Proposition \ref{prop_HS}, imply the following equalities for all $i$: 
\begin{eqnarray*}
\rr_i(\tilde{\mu})&=&\int_{\Gamma} \bzeta(\eta) d\tilde{\mu}(\eta) \\
&=& \lim_{k \to \infty}\frac{1}{t_k}\sum_{s=0}^{t_k-1}\bzeta(\Theta^s(\gamma))\\
&\le&\rr_i(\gamma).
\end{eqnarray*}
Hence, by Corollary \ref{cor_lambdanegative}, 
\[
\rr_i(\tilde{\mu}) \le 0, \ \text{ for all } i.
\]
Remark \ref{rem:invariance} implies there exists $\alpha \ge 0$ such that $\tilde{\mu} = \alpha \tilde{\nu}_0 + (1-\alpha)\tilde{\nu}_1$ where $\tilde{\nu}_j$ are invariant probability measure satisfying $\tilde{\nu}_0(\Gamma_0) =1$ and $\tilde{\nu}_1(\Gamma_+ \backslash \Gamma_0)=1$. By Proposition \ref{prop_lambdanul}, $\rr_i(\tilde{\nu}_1) = 0$ for all $i \in \{1, \dots,k \}$. Condition (a) implies $\rr_*(\tilde{\nu}_0) >0$. Therefore $\alpha$ must be zero, i.e. $\tilde{\mu}(\Gamma_+ \backslash \Gamma_0)=1$. Fix $\eps >0$. By Proposition \ref{lemma1} there exists $\eta^*>0$ such that
 \[
 \tilde{\mu}( \Gamma_{\eta})<\eps, \ \ \forall \eta<\eta^*,
 \]
 which implies
 \[
 \mu(\Omega \times \bS_{\eta})<\eps, \ \ \forall \eta<\eta^*.
 \]
Since $\eta^*$ does not depend on $\mu$, we have
\[
\limsup_{t \to \infty}\Lambda_t(\omega',x')( \Omega \times \bS_{\eta}) < \eps, \ \ \forall \eta<\eta^*,
\]
for all $x' \in \R^n_+\backslash \bS_0$ and $\omega' \in \tilde{\Omega}$, which concludes the first part of the proof.

Next, we show the equivalence of conditions (a) and (b). We need the following version of the minimax theorem (see, e.g., \cite{simmons-98}):

\begin{theoreme}[Minimax theorem] Let $A,B$ be Hausdorff topological vector spaces and let
$\mathcal{L} : A \times B \to \R$ be a continuous bilinear function. Finally, let $E$ and $F$ be nonempty,
convex, compact subsets of $A$ and $B$, respectively. Then
\[
\min_{a\in E}\max_{b\in F}\mathcal{L}(a, b) = \max_{b\in F}\min_{a\in E} \mathcal{L}(a,b).
\]
\end{theoreme}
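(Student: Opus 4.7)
The inequality $\min_{a\in E}\max_{b\in F}\mathcal L(a,b)\ge\max_{b\in F}\min_{a\in E}\mathcal L(a,b)$ is immediate: for any $a_0\in E$ and $b_0\in F$ one has $\min_{a\in E}\mathcal L(a,b_0)\le\mathcal L(a_0,b_0)\le\max_{b\in F}\mathcal L(a_0,b)$, and taking $\max_{b_0}$ on the left, $\min_{a_0}$ on the right, gives the easy direction. Since $\mathcal L$ is continuous and $E,F$ are compact, all inner extrema are attained; set $v^*:=\min_{a\in E}\max_{b\in F}\mathcal L(a,b)$.

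\textbf{From contradiction to a finite cover.} Assume $\max_{b\in F}\min_{a\in E}\mathcal L(a,b)<v^*$. By Berge's maximum theorem $b\mapsto\min_{a\in E}\mathcal L(a,b)$ is continuous, so compactness of $F$ yields some $\eps>0$ with $\min_{a\in E}\mathcal L(a,b)\le v^*-\eps$ for every $b\in F$. For each $a\in E$ the set $U_a:=\{b\in F:\mathcal L(a,b)<v^*-\eps/2\}$ is open by continuity of $\mathcal L$, and the family $\{U_a\}_{a\in E}$ covers $F$. Compactness of $F$ produces a finite subcover $U_{a_1},\dots,U_{a_k}$.

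\textbf{Separation, the main step.} Define the continuous affine map $g:F\to\R^k$ by $g_i(b)=\mathcal L(a_i,b)-(v^*-\eps/2)$. The subcover property says that each $g(b)$ has at least one strictly negative coordinate, so $g(F)\cap\R_+^k=\emptyset$. The key observation---this is where convexity of $F$ and bilinearity of $\mathcal L$ are both needed---is that $\mathrm{co}(g(F))$ also misses $\R_+^k$: if $\sum_j\mu_j g(b_j)\in\R_+^k$ with $b_j\in F$, $\mu_j\ge 0$, $\sum_j\mu_j=1$, then $b_*:=\sum_j\mu_j b_j\in F$ and bilinearity forces $g(b_*)=\sum_j\mu_j g(b_j)\in\R_+^k$, contradicting $b_*\in\bigcup_i U_{a_i}$. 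Since $g(F)$ is compact (continuous image of a compact set), $\mathrm{co}(g(F))$ is compact in $\R^k$ by Carath\'eodory. Hahn--Banach strict separation of the disjoint compact convex set $\mathrm{co}(g(F))$ from the closed convex cone $\R_+^k$ yields $\lambda^*\in\R^k\setminus\{0\}$ and $\delta>0$ with $\lambda^*\cdot y\le-\delta$ for all $y\in\mathrm{co}(g(F))$ and $\lambda^*\cdot z\ge 0$ for all $z\in\R_+^k$. Applying the second inequality along rays in $\R_+^k$ forces $\lambda^*\in\R_+^k$; normalize so that $\lambda^*\in\Delta_k$, the $k$-simplex.

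\textbf{Closing the contradiction.} By convexity of $E$, $a_{\lambda^*}:=\sum_i\lambda_i^* a_i\in E$, and bilinearity gives $\mathcal L(a_{\lambda^*},b)=\lambda^*\cdot g(b)+(v^*-\eps/2)\le v^*-\eps/2-\delta<v^*$ for every $b\in F$, which contradicts the definition of $v^*$. The main obstacle is the separation step: not Hahn--Banach itself, but the fact that passing from $g(F)$ to $\mathrm{co}(g(F))$ preserves disjointness from $\R_+^k$, since this is the one point at which compactness of $F$, convexity of $F$, and bilinearity of $\mathcal L$ must all be used simultaneously; indeed one can check that omitting any one of these three hypotheses destroys the argument.
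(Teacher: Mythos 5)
The paper does not prove this statement; it is quoted as a known result with a citation to Simmons (1998), so there is no in-paper argument to compare against. Your proof stands on its own and is correct: it is the classical compactness-plus-separation argument for a Ky Fan/Sion-type minimax theorem, specialized to the bilinear case. The logic is sound throughout. The easy direction is handled correctly; the open cover $\{U_a\}_{a\in E}$ and its finite subcover are justified by joint continuity of $\mathcal L$ (continuity in $b$ for fixed $a$ suffices there) and compactness of $F$; the reduction to $\R^k$ via $g$ is the key move; the observation that $g$ is affine (linearity of $\mathcal L(\cdot,b)$ in $a$ is not used here, only linearity of $\mathcal L(a_i,\cdot)$ in $b$) together with convexity of $F$ gives $\mathrm{co}(g(F))\cap\R^k_+=\emptyset$; compactness of $\mathrm{co}(g(F))$ via Carath\'eodory is standard in $\R^k$; the strict separation of a compact convex set from the closed cone $\R^k_+$ correctly forces $\lambda^*$ to lie in $\R^k_+$ (take rays to infinity) and gives a uniformly negative bound on $\lambda^*\cdot g(b)$; finally, convexity of $E$ and linearity of $\mathcal L(\cdot,b)$ close the contradiction. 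Two cosmetic remarks: (1) you do not need Berge's maximum theorem to obtain $\eps$---once you assume $\max_{b}\min_{a}\mathcal L(a,b)<v^*$, setting $\eps=v^*-\sup_b\inf_a\mathcal L(a,b)$ works directly, without discussing continuity of the inner $\min$; (2) after renormalizing $\lambda^*$ to the simplex, the separation constant $\delta$ should be rescaled too, but it stays strictly positive, so the conclusion is unaffected. Also note that the whole separation step takes place in $\R^k$, so the ambient Hausdorff TVS structure is only used for the compactness and continuity hypotheses, not for any infinite-dimensional Hahn--Banach.
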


We have that
\[
\min_{\tilde{\mu}} \max_i \rr_i (\tilde{\mu}) = \min_{\tilde{\mu} }\max_{p} \sum_i p_i \rr_i(\tilde{\mu})
\]
where the minimum is taken over $\tilde{\mu} \in  \PPinvtheta(\Gamma_0)$ and the maximum over $p\in \Delta:= \{p \in \R^m_+ \ : \ \sum_i p_1=1\}$.
Define $A$ to be the dual space to the space of bounded continuous functions from $\Gamma_0$ to $\R$ and
define $B=\R^m$. Let $E=\Delta$, and $D = \PPinvtheta(\Gamma_0) \subset A$ which is nonempty, convex and compact by Proposition \ref{prop:mesinvtheta}. Let $\mathcal{L}: A \times B \rightarrow \R$ the bilinear function defined by $\mathcal{L}(\tilde{\mu}, p):=\sum_i p_i \rr_i(\tilde{\mu})$. Proposition \ref{prop_intlambda} implies that $\mathcal{L}$ is continuous. With these choices, the Minimax theorem  implies that
\begin{equation}\label{eq:minmax}
\min_{\tilde{\mu}} \max_i \rr_i (\tilde{\mu}) = \max_{p\in \Delta} \min_{\tilde{\mu}} \sum_i p_i \rr_i(\tilde{\mu})
\end{equation}
where the minimum is taken over $\tilde{\mu} \in  \PPinvtheta(\Gamma_0)$. By the ergodic decomposition theorem for random dynamical systems (see Lemma 6.19 in \cite{crauel-02}), the minimum of the right hand side of \eqref{eq:minmax}
is attained at an ergodic probability measure with support in $\Gamma_0$. Thus, the equivalence of the conditions is established. 

Finally, we show the equivalence of condition (b) and (c). As a direct consequence of assertion (i) of Proposition \ref{prop:equality-lyap}, condition (c) implies (b). To prove the other direction, let $\tilde{\Omega} \subset \Omega$ be defined as in the proof of Proposition \ref{prop_walters}. Choose $(\omega',x') \in \tilde{\Omega} \times \bS_0$. By the same arguments as above, there exist $T>0$, $\gamma \in \pi_0^{-1}(\Phi^T(\omega',x')) \subset \Gamma_0$ and $\tilde{\mu} \in \PPinvtheta(\Gamma_0)$ such that
\begin{eqnarray*}
\rr_i(\tilde{\mu})&=&\int_{\Gamma} \bzeta(\eta) d\tilde{\mu}(\eta) \\
&=& \lim_{k \to \infty}\frac{1}{t_k}\sum_{s=0}^{t_k-1}\bzeta(\Theta^s(\gamma))\\
&\le&\rr_i(\gamma).
\end{eqnarray*}
Assertion (i) of Proposition \ref{prop:equality-lyap} implies that $\rr_i(\gamma)=\rr_i(\Phi^T(\omega',x'))$. Since $\Phi^T(\omega',x')$ is on the same trajectory that $(\omega',x')$, $\rr_i(\tilde{\mu}) \le \rr_i(\Phi^T(\omega',x'))= \rr_i(\omega',x')$. Writing $\tilde{\mu}$ as a convex combination of ergodic probability measures, condition (b) implies $\sum_i p_i \rr_i(\omega',x')>0$. $\qed$
\paragraph{{\bfseries Acknowledgements}}  GR was supported by the Swiss National Science Foundation Grant 137273 and a start-up grant to SJS from the College of Biological Sciences, University of California, Davis. SJS was supported in part by U.S. National Science Foundation Grants EF-0928987 and DMS-1022639.

\bibliographystyle{plainnat}

\bibliography{roth-schreiber-revised}

\end{document}